%% file: main.tex
\documentclass[11pt]{article}

\usepackage{amssymb,amsthm,amsmath}

\usepackage{enumitem}
\setlist[enumerate]{topsep=0pt}

\usepackage{tikz}
\usepackage[font=scriptsize,justification=centering]{caption}
\usepackage{thm-restate}
\usetikzlibrary{decorations.text}
\usepackage[ruled]{algorithm2e}

\definecolor{DarkRed}{rgb}{0.5,0.1,0.1}
\definecolor{DarkBlue}{rgb}{0.1,0.1,0.5}
\definecolor{ForestGreen}{rgb}{0.1333,0.5451,0.1333}
\definecolor{Red}{rgb}{0.9,0,0}
\usepackage[linktocpage=true, pagebackref=true,colorlinks,urlcolor=red,linkcolor=DarkRed,citecolor=ForestGreen,bookmarks,bookmarksopen,bookmarksnumbered]{hyperref}

\usepackage{thmtools,thm-restate}
\usepackage{comment}
\usepackage{cleveref}

\crefname{property}{property}{Property}
\creflabelformat{property}{(#1)#2#3}
\crefname{equation}{eq}{Eq}
\creflabelformat{equation}{(#1)#2#3}
\usepackage{fullpage}
\usepackage{bbm}
\usepackage{soul}
\usepackage{thm-restate}

\newtheorem{theorem}{Theorem}[section]

\newtheorem{lemma}[theorem]{Lemma}

\newtheorem{remark}[theorem]{Remark}
\newtheorem{claim}[theorem]{Claim}
\newtheorem{corollary}[theorem]{Corollary}

\newtheorem{fact}[theorem]{Fact}

\crefname{algorithm}{algorithm}{algorithms}
\Crefname{algorithm}{Algorithm}{Algorithms}
\usepackage[most]{tcolorbox}

\theoremstyle{definition}
\newtheorem{definition}{Definition}

\input{macros}

\title{Efficient Algorithms for Online Subadditive\\ Combinatorial Allocations}
\author{
Calum MacRury \thanks{Georgia Tech, School of Industrial and Systems Engineering (ISyE), 
\texttt{calum.macrury@isye.gatech.edu}} \and Rian Neogi \thanks{University of Waterloo, \texttt{rneogi@uwaterloo.ca}} \and Kanstantsin Pashkovich \thanks{University of Waterloo, \texttt{kpashkovich@uwaterloo.ca}} \and \and
Sahil Singla\thanks{School of Computer Science, Georgia Tech. \texttt{ssingla@gatech.edu}. Supported in part by NSF awards CCF-2327010 and CCF-2440113.}
\and
Siddharth M. Sundaram\thanks{School of Computer Science, Georgia Tech. \texttt{ssundaram38@gatech.edu}. Supported in part by NSF awards CCF-2327010 and CCF-2440113.} \and Chaitanya Swamy \thanks{University of Waterloo, \texttt{cswamy@uwaterloo.ca}}
}
\date{}

\begin{document}

\maketitle


\begin{abstract}
    \smallskip

    For the online combinatorial allocation problem with subadditive valuations, Correa and Cristi (STOC 2023) proved the existence of a $6$-competitive online algorithm, improving on the previous best $O(\log\!\log m)$-competitive online algorithm due to D\"utting, Kesselheim, and Lucier (FOCS 2020), where $m$ is the number of items. However, Correa and Cristi's result is existential, and it was left open whether a constant competitive ratio is attainable via an \textit{efficient} online algorithm that uses a polynomial number of demand oracle queries. In this work, we answer this affirmatively,
    giving an expected-polynomial-time $(6 + \epsilon)$-competitive online algorithm for any constant $\epsilon > 0$. Our techniques also recover, in the offline setting, the $(2 + \epsilon)$-approximation result of Feige (STOC, 2006). Finally, when the buyers' valuations are drawn
    from identical distributions,  we exploit symmetry to obtain an improved $(60/11+\epsilon)$-competitive algorithm.

    \smallskip

    Starting with the natural configuration LP relaxation for the problem, 
    our main technical contribution is a recursive \emph{Bundle Score Generator} (BSG) that resolves item conflicts by assigning correlated scores to the items requested by each buyer.
    Unlike the Random Score Generator whose existence is shown by Correa and Cristi, our BSG is efficiently computable in expected polynomial time. Moreover, it satisfies a stochastic dominance property that is sufficient to recover both Feige's offline result and Correa and Cristi's online result. 



\end{abstract}



\vspace{0.5cm}
\newpage

{\small
\setcounter{tocdepth}{1}
\tableofcontents
}

\thispagestyle{empty}
\setcounter{page}{0}

\newpage


\input{intro}

\input{preliminaries}

\input{recursive_procedure}

\input{offline_rounding}

\input{GW_RSG}

\input{iid}



\subsection*{AI Declaration}

The authors used GPT-5.4 Pro for assistance in obtaining the construction of the $\Height$ procedure of \Cref{sec:bsg}, its value retention guarantee and the subsequent improvement for the i.i.d. setting. 
AI tools were also used for assistance with writing. The authors have checked the arguments and take full responsibility for all content of the paper.


\appendix
\input{fixed_point}

\input{MMS-appendix.tex}
\input{XOS-Appendix}

\input{Feige-appendix.tex}
\input{Odd-n-Appendix}

{\small
\bibliographystyle{alpha} 
\bibliography{refs}
}

\end{document}

%% file: macros.tex
\newcommand{\sid}[1]{\textcolor{red}{[\textbf{Sid}: #1]}}
\newcommand{\sahil}[1]{\textcolor{blue}{[\textbf{Sahil}: #1]}}
\newcommand{\CMc}[1]{\textcolor{orange}{[\textbf{Calum}: #1]}}

\newcommand{\separator}[1]{\textcolor{cyan}{[#1]}}

\usepackage{color-edits}
\addauthor{Cal}{orange}

\newcommand{\ALG}{\mathrm{ALG}}
\newcommand{\OPT}{\mathrm{OPT}}

\newcommand{\poly}{\ensuremath{\mathsf{poly}}}
\newcommand{\E}{\mathbb{E}}

\newcommand{\R}{\mathbb{R}}

\newcommand{\ignore}[1]{{}}

\newcommand{\mb}{\mathbb}

\newtheorem{mdalg}{Algorithm}

\newenvironment{tbox}{\begin{tcolorbox}[
		enlarge top by=5pt,
		enlarge bottom by=5pt,
		breakable,
		boxsep=0pt,
		left=4pt,
		right=4pt,
		top=10pt,
		arc=0pt,
		boxrule=1pt,toprule=1pt,
		colback=white
		]
	}
	{\end{tcolorbox}}

\newcommand{\Z}{\mathbb{Z}}
\newcommand{\Height}{\ensuremath{\mathsf{RecHeight}}}
\newcommand{\bid}{score}

\newcommand{\RSG}{\mathcal{S}}
\newcommand{\VD}{\mathcal{D}}
\newcommand{\supp}{\text{supp}}
\newcommand{\typeprob}{q}
\newcommand{\type}{\tau}
\newcommand{\typeset}{\Gamma}
\newcommand{\shsg}{\text{bundle score generator}}

\newcommand{\SHSG}{\text{BSG}}
\newcommand{\SHSGs}{\text{BSGs}}

\newcommand{\Ber}{\textup{Ber}}
\newcommand{\OCRS}{\mathsf{OCRS}}

%% file: intro.tex
\section{Introduction}

In online combinatorial allocation, a seller has a fixed supply of $m$ non-identical items, and $n$ buyers arrive sequentially. When the $i$'th buyer arrives, their valuation function $v_i$ over bundles of items is revealed, and the online algorithm must irrevocably allocate a subset of the remaining items to them. The goal is to maximize the total value, or welfare, of the final allocation.

For worst-case inputs, no $o(n)$-competitive algorithm is possible, since the single-item problem already captures the difficulty of competing against the maximum of an online sequence of numbers.
 This has motivated a long line of work on the stochastic/prophet inequality version of the problem, where each buyer's valuation is drawn independently from a known prior distribution.  The seminal work of Feldman, Gravin, and Lucier~\cite{feldman2014combinatorial} attained a \(2\)-competitive algorithm for fractionally subadditive valuations, and this factor is best possible even for a single item~\cite{krengel1977semiamarts,krengel1978semiamarts}.
  We note that fractionally subadditive valuations are equivalent to XOS valuations \cite{feige2006maximizing}, and the class strictly contains submodular valuations.


For the broader class of \emph{subadditive valuations}, the picture is much less clear. A valuation \(v:2^{[m]}\to\mathbb R_{\ge0}\) is subadditive if
\[
    v(S\cup T)\le v(S)+v(T)
    \qquad \forall S,T\subseteq[m].
\]
Subadditive valuations are the standard formalization of complement-free preferences in economics, and they strictly contain XOS valuations. 
Since every subadditive valuation can be \(O(\log m)\)-approximated by an XOS valuation~\cite{Dobzinski-Approx07,BhawalkarR-SODA11}, an \(O(\log m)\)-competitive algorithm immediately follows from the \(2\)-competitive algorithm for XOS valuations. Zhang~\cite{zhang2022improved} improved this factor to \(O(\log m/\log\!\log m)\),
and later D\"utting, Kesselheim, and Lucier~\cite{dutting2020log} obtained an \(O(\log\!\log m)\)-competitive guarantee.


A breakthrough result of Correa and Cristi~\cite{correa2023constant} finally
established a constant-factor guarantee for subadditive valuations, specifically a \(6\)-competitive algorithm. 
However, their result is not known to be efficiently implementable: it relies on the Kakutani fixed-point theorem to show the existence of a suitable random score generator, and the authors explicitly note that they do not know how to construct such scores in polynomial time.


This algorithmic gap is especially striking in light of the offline problem: for subadditive welfare maximization with known valuations,
Feige~\cite{feige2006maximizing} gave a polynomial-time \((2+\epsilon)\)-approximation using demand-oracle queries, for arbitrarily small constant \(\epsilon>0\).  Thus, the best known polynomial-time guarantee for the online subadditive problem was the $O(\log\!\log m)$ competitive guarantee of D\"utting, Kesselheim, and Lucier \cite{dutting2020log}, leaving open the following algorithmic question:
\begin{quote}
    \emph{Is there a polynomial-time constant-factor competitive algorithm for the online combinatorial allocation problem with subadditive valuations?}
\end{quote}

\paragraph{Our results.} We answer the above question in the affirmative, and in fact match the competitive guarantee of Correa and Cristi~\cite{correa2023constant}, up to an arbitrarily small additive loss:

\begin{theorem} 
\label{thm: non-identical}
For every $\epsilon>0$, there is an online algorithm for the combinatorial allocation problem with subadditive valuations whose expected welfare is at least a $1/(6+\epsilon)$ fraction of the expected offline optimum. The algorithm has expected running time polynomial in $n$, $m$, $1/\epsilon$, and the total support size of the valuation distributions, given demand-oracle access. 
\end{theorem}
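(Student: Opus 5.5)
The plan follows the Correa--Cristi framework, replacing their fixed-point random score generator with an efficiently computable object.

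\textbf{Step 1: configuration LP.} Write the configuration LP over buyer types $\type\in\typeset_i$, which occur with probability $\typeprob_{i,\type}$. It has variables $x_{i,\type,S}$ with $\sum_S x_{i,\type,S}\le \typeprob_{i,\type}$ and $\sum_{i,\type,S\ni j} x_{i,\type,S}\le 1$ for every item $j$. This LP upper bounds the expected offline optimum, by averaging the ex-post optimal allocation. Its dual separation problem is exactly a demand query for each type. Hence the ellipsoid method solves the LP to within a $(1+\epsilon)$ factor in time polynomial in $n$, $m$, and the total support size, and returns a polynomial-size support.

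\textbf{Step 2: online algorithm.} When buyer $i$ arrives with type $\type$, sample a bundle $S$ with probability $x_{i,\type,S}/\typeprob_{i,\type}$. The \SHSG{} then assigns correlated scores to the items of $S$. Each item $j$ is allocated to the requesting buyer whose score (a "height'') is minimal among requests for $j$, simulated online in the way Correa and Cristi do it. The remaining (future) requests are handled by sampling, so buyer $i$ receives the subset $T\subseteq S$ of items on which it wins.

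\textbf{Step 3: value retention.} By subadditivity, $v(T)$ retains a constant fraction of $v(S)$ in expectation, provided the scores satisfy a stochastic dominance property. The required property is that for every bundle, the law of the score vector dominates the worst-case "i.i.d.-uniform'' benchmark used in Correa--Cristi's analysis. Their $6$ then follows by the same computation, with $\epsilon$ lost to LP approximation and sampling.

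\textbf{Step 4: constructing the \SHSG{}, the main obstacle.} Correa and Cristi obtain the scores only via Kakutani's theorem. Instead, I would build the \SHSG{} recursively (the $\Height$ procedure). Split the bundle, for example by halving or by using demand queries on scaled prices to find a subset carrying at least half the value. Give that subset a fresh uniform score shifted appropriately, and recurse on the rest. The target is a guarantee of the form $\E[v(\{j: h_j\le t\})]\ge \phi(t)\,v(S)$ for every threshold $t$, where $\phi$ is the benchmark function from the Correa--Cristi analysis.

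Proving the recursion is the hard part. Induction on the recursion depth uses subadditivity at each split, and I expect it to reduce to a one-dimensional functional inequality on $\phi$. Each recursion level needs demand queries, or random restarts when a split fails. Expected polynomial time would come from a geometric bound on the number of restarts, with the recursion truncated at depth $O(\log(m/\epsilon))$ so that only an $\epsilon$ loss is incurred.
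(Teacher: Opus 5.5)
\textbf{There is a genuine gap at the only hard step.} Your outer framework matches the paper's: an LP over types, a tentative bundle sampled conditional on the realized type, and scores compared against pre-sampled thresholds. But Step 3 never states the property that has to be proved, and Step 4 does not supply it.

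The threshold buyer $i$ faces is not a scalar $t$. It is a random vector $p'$ whose coordinates vary across items and are correlated. It is also generated by applying the \emph{same} score generator to the other buyers' LP bundles. A bound of the form $\E[v(\{j:h_j\le t\})]\ge\phi(t)\,v(S)$ for a common scalar $t$ says nothing about the set of items won against such a vector. There is also no benchmark function $\phi$ in Correa--Cristi to plug in. Their analysis compares the buyer's score vector against independent copies of this self-generated threshold vector. That self-referential condition is exactly what they could only satisfy via Kakutani.

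Your value-driven splitting (halving, demand queries for a half-valued piece, fresh uniform scores, recursion on the rest) never looks at the distribution of competing scores. So nothing forces the required dominance, and the ``one-dimensional functional inequality'' you hope to reduce to is never exhibited.

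\textbf{The missing idea is mirroring, with a valuation-independent generator.} The paper's $\Height(A)$ works as follows:
\begin{enumerate}
\item It samples fresh interest sets $L_k\sim\mu_k$ for \emph{all} buyers.
\item It recursively computes $y_k=\Height(L_k\cap A)$.
\item It outputs $b_j=1+\max_k (y_k)_j$.
\end{enumerate}
Running coupled recursions shows $\Height(L)|_{L\cap A}\overset{d}{=}\Height(L\cap A)$. This yields $b\overset{d}{=}\mathbf{1}+p'|_A$, i.e.\ $b\succ p'|_A$ for an independent full threshold $p'$, simultaneously on all of $A$. This replaces the fixed point.

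Termination needs the LP solved with item constraints $\le\alpha=6/(6+\epsilon)<1$. Then the recursion is a subcritical branching process with expected total size at most $|A|/(1-\alpha)$. With your constraint $\le 1$ it would be critical, and this scaling is the main place $\epsilon$ is spent.

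The factor $1/6$ then follows from a short self-contained argument rather than an imported Correa--Cristi computation:
\begin{enumerate}
\item Let $p''$ complete the actual past into a full threshold. Then $\ALG_i=\{j\in A: b_{i,j}>p'_j\ge b^{<i}_j\}\supseteq\{j\in A: b_{i,j}>p'_j\ge p''_j\}$.
\item Coupling $b_i$ with an independent copy $\widehat p$ of $p'|_A$ shrinks this further to $\{j\in A:\widehat p_j\ge p'_j\ge p''_j\}$.
\item The six weak orderings of the three i.i.d.\ vectors $\widehat p$, $p'|_A$, $p''|_A$ cover $A$. Symmetry and subadditivity then give $\frac16 v(A)$.
\end{enumerate}
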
 

\emph{Remark.} Here, the valuation distributions are assumed to have finite support; in
\Cref{sec: infinite-type}, we show how to remove this assumption given
sampling-oracle access to the distributions and make the running time guarantee independent of the support sizes of the valuation distributions.
Also, we note that by standard capped termination arguments, the expected running time guarantee of the algorithm can be made worst-case polytime.

Our main technical contribution is a
\emph{bundle score generator (BSG)}, which is a score-based rule for rounding a solution to the problem's standard configuration linear program (LP). As we explain in \Cref{sec:techniques},  the BSG is designed to recreate the key ``symmetric covering'' property behind Feige's offline rounding algorithm in a way that can be adapted to the online setting. 
The high-level idea is that once a tentative bundle for a buyer is sampled from the LP solution, the buyer should retain a subset whose value is a constant fraction of the entire bundle's value. 
We determine these assignments via buyer-dependent random scores on the items: each item is assigned to the buyer who gives it the highest score.
This BSG approach yields an algorithm for the offline problem with the same approximation ratio as in \cite{feige2006maximizing}.
Unlike \cite{feige2006maximizing},
our approach naturally extends to the online setting, where we use simulated scores for the competition from buyers who have yet to arrive. 

Finally, when the buyers' valuation distributions are \emph{identical}, the additional symmetry allows us to sharpen the analysis of the same algorithm and obtain a better competitive ratio:

\begin{theorem}
\label{thm: identical}
    Suppose the buyers' valuation distributions are identical. For every
    $\epsilon>0$, there is an online algorithm for subadditive combinatorial
    allocation whose expected welfare is at least
    $1/(60/11+\epsilon)$ fraction of the expected offline optimum. The algorithm has the same oracle access and running-time guarantees as in \Cref{thm: non-identical}.
\end{theorem}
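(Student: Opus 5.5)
We prove \Cref{thm: identical} by running the same online algorithm as in \Cref{thm: non-identical} and only sharpening its analysis using symmetry. First we solve the configuration LP for the ex-ante relaxation up to a $(1+\epsilon)$ factor using demand oracles and the ellipsoid method on the dual. Its value upper-bounds the expected offline optimum. Because the distributions are identical, we then average the LP solution over buyers. Averaging preserves feasibility and objective value, so we may assume every buyer $i$ has the same fractional assignment $x_{\type,S}$ over types $\type\in\typeset$ and bundles $S$. In particular, for each item $j$ the marginal demand is the same for every buyer, namely at most $1/n$.

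When buyer $i$ arrives with type $\type$, they sample a tentative bundle $S\sim x_{\type,\cdot}/\typeprob_\type$ and draw correlated scores on $S$ from the \SHSG. For each remaining item $j\in S$, the buyer compares their score against simulated scores from the future buyers $i+1,\dots,n$, which are sampled from the same LP. The buyer keeps the items they would win, but only if those items are still available. The task is to bound the expected value retained by each buyer. In the general analysis, the $1/6$ factor comes from combining two bounds: the value-retention guarantee of the \SHSG, which says a buyer keeps a subset worth a constant fraction of $v(S)$ whenever each item's competing maximum score is stochastically dominated appropriately, and an availability loss from earlier buyers taking items.

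Under identical distributions, buyer $i$ faces exactly $n-i$ i.i.d. simulated competitors and $i-1$ real predecessors drawn from the same law. So the probability that an item survives to buyer $i$, and the distribution of the competing maximum score, depend only on $i$ and $n$ and are identical across items with the same marginal. We would then express buyer $i$'s retained value as $f(i/n)\cdot\mathrm{LP}_i$ for an explicit function $f$. This $f$ combines the stochastic-dominance bound of the \SHSG\ (scores behave like a maximum of $n-i$ versus $i-1$ uniform-like draws) with the survival probability of roughly $(1-1/n)^{i-1}$. Summing over $i$ gives a Riemann sum $\frac1n\sum_i f(i/n)\to\int_0^1 f(t)\,dt$. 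Computing this integral, and optimizing any free parameter of the \SHSG\ or the tie-breaking threshold so that the worst case equals $11/60$, yields the ratio $60/11$. Discretization error and finite-$n$ effects are absorbed into $\epsilon$. If odd $n$ or small $n$ needs separate handling, we would check it directly, using the appendix on odd $n$.

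The main obstacle is the per-buyer retention bound. The non-identical analysis uses a worst case over the arrival position, giving a single constant. Here we must instead show a position-dependent bound $f(t)$ that remains valid when the surviving items are correlated with the buyer's own bundle, and whose integral is exactly $11/60$. I would first try to show that the worst case for each buyer is when every item's marginal is exactly $1/n$. Under that assumption the survival and dominance events become products of explicit polynomials in $t$, which can then be integrated in closed form. Running time is unchanged: we sample the \SHSG\ in expected polynomial time and solve the LP as in \Cref{thm: non-identical}.
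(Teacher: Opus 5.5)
Your proposal has a genuine gap. The position-dependent retention bound that is supposed to produce $11/60$ is never supplied, and the route you sketch for it cannot work for subadditive valuations.

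You propose to write buyer $i$'s retained value as $f(i/n)$ times its LP contribution. Your $f$ multiplies a per-item dominance probability by a per-item survival probability such as $(1-1/n)^{i-1}$, giving ``products of explicit polynomials in $t$''. But item-wise retention probabilities give no constant-factor value guarantee for subadditive $v$. This is exactly why item-by-item contention resolution suffices for XOS but not here. Any bound must instead cover the tentative bundle $A$ by identically distributed random sets built from the \emph{joint} threshold vectors, and then apply subadditivity. Beyond that, no $f$ is given and the BSG has no free parameter to tune. The value $11/60$ is asserted rather than derived; in fact it is not the integral of a smooth profile but the two-step average $\frac12(\frac15+\frac16)$.

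You also misdescribe the algorithm. Buyer $i$ does not compete against simulated future buyers $i+1,\dots,n$. It competes against a base threshold $p'$, computed in advance as the coordinatewise maximum of scores from an independent tentative allocation of \emph{all} $n$ buyers. The analysis relies on $p'$ being independent of $b_i$ and of the past, and on it having the law of a full threshold. Likewise, survival of item $j$ to buyer $i$ is the event $p'_j\ge b^{<i}_j$, not something occurring with probability $(1-1/n)^{i-1}$.

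The missing idea is the midpoint comparison plus an ordering cover.
\begin{itemize}
\item For $i\le n/2$, complete the actual past only up to buyer $n/2$, obtaining $p^{\mathrm{mid}}\ge b^{<i}$. Then $\ALG_i\supseteq\{j\in A: b_{i,j}>p'_j\ge p^{\mathrm{mid}}_j\}$.
\item Since buyers are i.i.d.\ after symmetrizing the LP, take $Q_1,\dots,Q_5$ i.i.d.\ half-threshold vectors. The full threshold dominated by $b_i$ is distributed as $\max\{Q_1,Q_2\}$, while $p'|_A\overset{d}{=}\max\{Q_3,Q_4\}$ and $p^{\mathrm{mid}}|_A\overset{d}{=}Q_5$.
\item Strict dominance $b_i\succ\max\{Q_1,Q_2\}$ lower-bounds $\E[v(\ALG_i)]$ by the expected value of $\{j:\max(Q_{1j},Q_{2j})\ge\max(Q_{3j},Q_{4j})\ge Q_{5j}\}$.
\item The five analogous sets from the $5$-cycle $1-5-2-3-4-1$ cover $A$ and are identically distributed, which gives $1/5$.
\item Buyers in the second half keep the general $1/6$ from \Cref{lem:one-sixth-val}. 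Averaging gives $11/60$, and choosing $\alpha=\frac{60/11}{60/11+\epsilon}$ with $\mathrm{LP}(\alpha)^*\ge\alpha\E[\OPT]$ finishes the proof.
\end{itemize}
\Cref{lem:ordering-cover-lower-bound} shows that finer batching cannot beat $1/5$ in the first half or $1/6$ in the second within this framework. So the smooth Riemann-sum profile you envisage would not come out of these tools.
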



\input{our_techniques}

\input{related_work}

%% file: our_techniques.tex
\subsection{Our Techniques} \label{sec:techniques}

In order to explain our techniques, we first consider the offline problem and the approach taken in \cite{feige2006maximizing}. Suppose we are given
an optimal solution to the standard configuration LP, where LP variable $x_{i,A}$ represents the probability that a bundle $A$ is given to buyer $i$. Given such a fractional solution, we want to round it into an integral allocation, while retaining a constant fraction of the solution's value.
A natural approach is to first have each buyer $i$ independently draw a tentative bundle $T_i$ 
via the distribution over subsets given by $(x_{i,A})_{A \subseteq [m]}$. 
We must then resolve conflicts among items that are tentatively requested by multiple buyers. For XOS
valuations, it is enough to do this on an item by item basis: 
for each buyer \(i\), it suffices that each item in \(T_i\) is retained with constant probability. This can be done by having each item execute its own \textit{offline contention resolution scheme} \cite{feige2006maximizing},
and the same approach extends to the online version
of the problem via \textit{online contention resolution} (see \Cref{app:xos-ocrs-reduction}). 

\paragraph{Prior approaches and their limitations.}
On the other hand, for subadditive valuations, marginal item guarantees do \textit{not} suffice to get a constant approximation, even for the offline problem \cite{feige2006maximizing}. 
 Feige's rounding algorithm instead succeeds by achieving a much stronger
``symmetric covering'' property. Suppose buyer $i$ has tentative bundle $T_i = A$,
and $R\subseteq A$ is the final subset they are assigned/retain. Then, a symmetric covering
ensures that there exists identically distributed $R_1, R_2 \subseteq A$,
whose union is $A$, and for which $R_1 \subseteq R$.

A symmetric covering immediately implies the required value retention for their valuation $v_i$. First, $\E\bigl[v_i(R_1)+v_i(R_2)\bigr] = 2\E[v_i(R_1)]$. Moreover, monotonicity 
yields $v_i(R) \ge v_i(R_1)$, and subadditivity
implies that $v_i(R_1)+v_i(R_2) \ge v_i(A)$. By putting
everything together and taking expectations,
\begin{equation} \label{eqn:strong_self_comp}
    \E[v_i(R)] \ge \E[v_i(R_1)]
    ~=~
    \frac12\E\bigl[v_i(R_1)+v_i(R_2)\bigr]
    ~\ge~
    \frac12 v_i(A),
\end{equation}

After averaging over all tentative bundles and buyers, this gives a $2$-approximation to the LP value. 
In the offline setting, Feige gave a clever guiding-graph rounding approach that achieves this symmetric covering property, and hence a $(2+\epsilon)$-approximation for subadditive valuations.
The challenge in extending this approach to the online setting is that Feige's rounding requires simultaneous access to all buyers' tentative bundles. In the online setting, the tentative bundles of the future buyers are unavailable to the online algorithm; e.g., already with Bernoulli distributions, they depend on which subset of the future buyers have non-zero valuations.

Correa and Cristi~\cite{correa2023constant} take a different route.
They introduce a \textit{random score generator (RSG)},
which is a randomized map from a buyer's valuation to a real-valued score vector on the items. Using RSGs, they first draw base threshold scores, indicating the required scores to attain the various items. Specifically, when each buyer arrives, it generates its own random scores on the items via an RSG, and is then assigned all remaining items on which its score exceeds the corresponding threshold. The drawback of this algorithm is that the RSG used in their analysis is obtained existentially via a fixed-point argument, and is not known to be efficiently computable.


\paragraph{Our approach via Bundle Score Generators.}
Our main technical contribution is a new tool, which we call a \emph{bundle score generator} (BSG), that efficiently attains the symmetric covering property. This tool immediately implies Feige's $(2+\epsilon)$-approximation in the offline setting. It also naturally fits into the online analysis framework of \cite{correa2023constant}, allowing us to recover their $(6+\epsilon)$-competitive ratio efficiently. Finally, by ``opening up'' the analysis of \cite{correa2023constant}, we can attain an improved competitive guarantee for identically distributed buyers. Next we describe the BSG.

A BSG is a randomized mapping that takes an arbitrary bundle $A$ as its input, and outputs a non-negative ``score vector'' supported on the items $A$. 
In the offline setting, after having each buyer $i \in [n]$ draw its tentative bundle $T_i$ independently, 
we apply buyer \(i\)'s BSG to \(T_i\) to obtain a score vector $b_i$, where \(b_{i,j}\) denotes buyer \(i\)'s score for item \(j\in T_i\) (here $b_{i,j} = 0$ for all $j \notin T_i$). Then, for each $j \in [m]$, we assign $j$ to the buyer $i$ with $j \in T_i$ and for which $b_{i,j}$ is largest (breaking ties arbitrarily). 
%

More formally, condition on buyer $i$ drawing a tentative bundle $T_i=A$. Then, the buyer receives the subset $R \subseteq A$ of items $j$ for which its score $b_{i,j}$ exceeds the value $p_j :=\max_{k \ne i}b_{k,j}$. Hence,
\[
    R \supseteq \{j \in A: b_{i,j} > p_j\}.
\]
We construct a specific BSG for which the score vector $b_i$ \emph{strictly stochastically dominates $p$ on $A$}.
Strict stochastic dominance on $A$ means that \(b_i\) can be coupled with a
copy $p'$ of $p$ so that
\(b_{i,j}> p'_j\) for all \(j\in A\) simultaneously (see \Cref{def:stoc-dom}). More, since $b_i$ is independent of
$p$ by the algorithm's definition, we can assume that $p'$ is also independent of $p$.
Now, consider $R_1 := \{j \in A: p'_{j} \ge p_j\}$ and $R_2 := \{j \in A: p_j \ge p'_j\}$.
Clearly, $R_1$ and $R_2$ are identically distributed, as $p$ and $p'$ are independent and identically distributed (i.i.d).
More, $A = R_1 \cup R_2$, and $R_1 \subseteq R$, where the latter holds since 
whenever \(j\in R_1\), we have $p'_j\ge p_j$, and hence
\(b_{i,j}> p'_j\ge p_j\), so $j \in R$. Thus, $R_1, R_2$ give the desired symmetric covering property.


\paragraph{Constructing our BSG.}
As mentioned in the previous subsection, our goal is to generate a score vector $b_i$ for every buyer $i \in [n]$, with the property that, after conditioning on a buyer's tentative set $T_i=A$, the vector $b_i$ strictly stochastically dominates the vector $p$ given as $p_j:=\max_{k\neq i}b_{k j}$ on $A$. This property should hold simultaneously for all the buyers $i \in [n]$.

Our idea is to use \emph{mirroring} to achieve this.
A recurring theme that arises when studying auctions over subadditive valuations is the idea of mirroring: a buyer can win a set of items of high value by simply copying the distribution over bids of the other buyers~\cite{Feldman2015SimultaneousAW,Bei2017WorstCaseMD,neogi2024budget,correa2023constant}.

In order to formalize this, lets assume that each buyer $k =1, \ldots, n-1$ has selected a BSG, say $\psi_k$, where
for each $S \subseteq [m]$, $\psi_{k}(S)$ is a random vector in $\mb{Z}_{\ge 0}^m$ supported on $S$. Suppose now that buyer $n$ must also commit 
to a BSG approach, say $\psi_n$.
Given $(\psi_i)_{i=1}^{n-1}$, consider defining $\psi_n(A)$ on $A \subseteq [m]$ via the following mirroring protocol:
\begin{itemize}
	\item Draw a (simulated) independent tentative subset $T'_k$ for each $k \neq n$, and set $b'_k = \psi_{k}(T'_k)$. 
	\item Set $p'_j = \max_{k \neq n}b'_{k, j}$ for each $j \in [m]$, and define $b_n = \psi_n(A)$ according to \[ b_{n,j} = 
	\begin{cases}
		1+p'_j & \text{if } j \in A. \\
		0 & \text{otherwise.}
	\end{cases} \] 
\end{itemize}
To see why this is a good choice of $\psi_n$ for buyer $n$, notice
that in the actual execution of the algorithm, each buyer $i$ draws a tentative subset $T_i \subseteq [m]$, and then applies $\psi_i$ to $T_i$ to get a scoring vector $b_i = \psi_i(T_i)$. Then, condition on buyer $n$ drawing tentative subset $T_n = A$. If $p_j = \max_{k \neq n}b_{k, j}$ for each $j \in [m]$, then $b_n$ strictly stochastically dominates $p$ on $A$
(due to the $+1$ term in definition of $b_n$). By the previous discussion, this implies the desired value retention for buyer $n$.
To see this explicitly, notice that buyer $n$ receives the items $R \subseteq A$ satisfying:
\begin{equation}
    R \supseteq \{ j \in A: b_{n,j} > \max_{k \neq n} b_{k,j}\} \supseteq \{ j \in A: \max_{k \neq n}b'_{k, j} \ge \max_{k \neq n} b_{k,j}\}.
\end{equation}
If $R_1 =  \{ j \in A: \max_{k \neq n}b'_{k, j} \ge \max_{k \neq n} b_{k,j}\}$ and $R_2 = \{j \in A: \max_{k \neq n}b_{k, j} \ge \max_{k \neq n} b'_{k,j}\}$ then clearly $R_1 \subseteq R$ and $R_1 \cup R_2 = A$. More, since $p$ and $p'$ are i.i.d., it follows that $R_1$ and $R_2$ are identically distributed.
Thus, $R_1$ and $R_2$ form a symmetric covering of $A$, and so we get that $\E[v_{n}(R) \mid T_n = A] \ge v_{n}(A)/2$.

While this only ensures that agent $n$ can attainable a favourable guarantee, we can prove the existence of BSGs $\psi_1,\ldots,\psi_n$ such that each $\psi_i$ is simultaneously obtained by applying the mirroring protocol on $(\psi_k)_{k\neq i}$. By the same analysis, this ensures that every agent attains a favourable guarantee simultaneously, and thus the resulting allocation has high social welfare.
In \Cref{sec:fixed-point-intgap}, we formalise this in terms of a fixed point argument, and in \Cref{sec:MMS} we show an application of similar fixed point arguments for the Maximin share problem over subadditive buyers.
While this fixed point is not necessarily efficiently computable, the above construction for buyer $n$ suggests
a recursive construction of BSGs which \textit{are} efficiently computable. In fact, we are able to define a single BSG $\psi$ that is used by all buyers. The rough idea is as follows: to define $\psi$ on an initial set $A$, we sample a \textit{simulated} tentative bundle $T'_i$ and then recursively apply $\psi$ to the intersection $A \cap T'_i$ for each buyer $i$. Each item $j \in A$ is then assigned
score $1$, plus the maximum score of all $\psi(A \cap T'_i)$ for which $j \in A \cap T'_i$. While this definition appears circular, we leverage the item constraints of the LP solution to argue that the procedure terminates with probability $1$, following a Galton-Watson process style analysis. Ultimately, the score of an item in $A$ ends up being the length of the longest chain of intersections for which it appears in.

\paragraph{Online arrivals.}

The online algorithm uses our BSG-based scoring rule together with an
independent threshold sample. 
Before the buyers arrive, we independently sample a tentative allocation, assign ``base'' scores for this tentative allocation using our BSG, and set each item threshold to the maximum base score assigned to that item.
When buyer \(i\) arrives, we sample its tentative bundle from the LP conditional on its realized valuation, generate scores using our BSG, and allocate only the available items whose scores exceed these thresholds.

The analysis can be summarized as a two-step reduction.
Lemma 1 of Correa and Cristi~\cite{correa2023constant} (which they call the Mirror Lemma) reduces the analysis of the actual online allocation to an offline comparison. Specifically, by forfeiting a factor of \(2\), they show that it suffices to analyze the set of items on which the buyer's score beats two independent
threshold samples. We then apply a ``symmetric $3$-covering''.
The buyer's score vector and the two threshold vectors play symmetric roles,
and subadditivity implies that the buyer retains a third of the tentative value
in this offline comparison. Together, these two losses give the overall $1/6$
factor.

In the setting where buyers' valuations are distributed identically,
we improve the constant $6$ to $60/11$ by refining the analysis, without changing the
algorithm. Rather than applying the Mirror Lemma as a black box, we use the
additional symmetry among buyers to prove a tighter value-retention bound. 
We also identify a natural limitation of this particular combinatorial refinement.

%% file: related_work.tex
\subsection{Further Related Work} \label{sec:related}

There has been extensive work on online allocations problem, motivated from algorithmic and mechanism design applications.


\emph{Prophet inequalities.}
The classical single-item prophet inequality shows that, for independent
nonnegative random variables, a simple threshold rule obtains half the expected
maximum~\cite{krengel1977semiamarts,krengel1978semiamarts,samuelcahn1984threshold}.
Since then, prophet inequalities have been extended to a broad range of
combinatorially constrained selection problems, including matroid and
matroid-intersection constraints~\cite{kleinberg2012matroid,Ehsani2017,jiang2023tight, ArnostiM23},
polymatroid constraints~\cite{dutting2013polymatroid}, matching
constraints~\cite{alaei_2012,ezra2020online, Fu2021, PollnerRSW24, MacRuryMa2024, macrury2025random, Chen25, nuti2026stationary}, and knapsack-type
constraints~\cite{jiang2023tight,yoshinaga2025, ma2025knapsack}. Prophet inequalities have also had
significant applications in mechanism design, where the goal is to design
allocation rules for strategic buyers. Threshold-based algorithms correspond
naturally to posted-price mechanisms: buyers are offered prices and choose their
preferred bundles. In particular, Feldman, Gravin, and
Lucier~\cite{feldman2014combinatorial} gave a constant-factor posted-price
mechanism for XOS valuations, and subsequent work ~\cite{dutting2017prophetinequalitieseasystochastic} gave a polynomial-time
implementation of such mechanisms. For subadditive valuations, D\"utting, Kesselheim and Lucier~\cite{dutting2020log} gave an $O(\log\!\log m)$-competitive static
posted-price mechanism, while obtaining an $O(1)$-competitive item posted-price
mechanism remains an interesting open question.


\emph{Online Contention Resolution Schemes (OCRSs).} 
Contention resolution schemes were originally formalized in the offline setting 
as a tool for constrained submodular optimization \cite{chekuri2014submodular}, motivated in part by the randomized rounding procedures studied in \cite{dobzinski2006, feige2006maximizing, FeigeV10}. Since then, they have been extended
to the online/random-order setting \cite{GuptaN13, feldman2015online, adamczyk2018random, lee2018optimal},
and have become a particularly useful tool for designing prophet inequalities. 
For online combinatorial allocations,  if the buyers have XOS valuations, then OCRSs can be used to attain state-of-the-art competitive ratios, as we explain in \Cref{app:xos-ocrs-reduction}. We note that the usage of offline contention resolution for XOS valuations was recognized by \cite{feige2006maximizing}. However, this paper preceded the formalization of OCRSs, making their usage in online combinatorial allocation problems somewhat folklore, despite similar ideas being used in various other papers \cite{Ehsani2017, chawla2025, MaMZ26Network}. 


\emph{Subadditive optimization.}
Dobzinski, Nisan, and Schapira~\cite{dobzinski2010complementfree} initiated the systematic study of approximation algorithms for combinatorial auctions with subadditive bidders. Feige~\cite{feige2006maximizing} later gave the constant-factor approximation that serves as the offline benchmark for our work. Recently, subadditive valuations have also been studied for Nash social welfare objectives, where \cite{dobzinski2024nsw} gave a constant-factor approximation.

Subadditive valuations have also played a central role in algorithmic mechanism design. For welfare maximization, 
Dobzinski~\cite{Dobzinski-Approx07} gave an $O(\log m\log\!\log m)$-approximate truthful mechanism for subadditive combinatorial auctions, and the current best known truthful mechanism achieves $O((\log\!\log m)^3)$-approximation \cite{assadi2021improved}. 
For revenue maximization, we know $O(1)$-approximation truthful mechanisms \cite{CW-FOCS23,CZ-STOC17,RW-EC15}.
Subadditive valuations are also studied in budget-feasible mechanism design, where a buyer with a hard budget procures items from strategic sellers \cite{Singer10,neogi2024budget,neogi2025budget}. In particular, \cite{neogi2025budget} gave a polynomial-time $O(\log\!\log n)$-approximation truthful mechanism with demand-oracle access to subadditive valuation, and improving it to an $O(1)$-approximation in polynomial-time remains an interesting open question.

%% file: preliminaries.tex
\section{Model and Preliminaries}
\label{sec:preliminaries}

We first recall some basic definitions. For a set of items $[m]$, a valuation is a function
$v:2^{[m]}\to \R_{\ge 0}$. Throughout, we assume valuations are normalized, i.e.,
$v(\emptyset)=0$. We say that $v$ is \emph{monotone} if $v(T) \le v(S)$ for all
$T\subseteq S\subseteq[m]$, and \emph{subadditive} if
\[
    v(S\cup T) \le v(S)+v(T) \qquad \text{for all } S,T\subseteq[m].
\]
A \emph{demand oracle} for valuation $v$ takes as input nonnegative item prices
$c_1,\dots,c_m$ and returns a set
    $S \in \arg\!\max_{T\subseteq[m]}
    \big\{v(T)-\sum_{j\in T} c_j\big\}.$
We assume that demand queries also return the value of the demanded set, or equivalently that value queries are available when needed.

\paragraph{Model.}  In the online combinatorial allocation problem, there is a set of
items $[m]$ and a set of buyers $[n]$. Each buyer $i$ has a valuation
$v_i:2^{[m]}\to \R_{\ge 0}$ drawn independently from a publicly known
distribution $\VD_i$. We assume that each $\VD_i$ is supported on normalized,
monotone, subadditive valuations.
We further assume that each $\VD_i$ has finite support. With probability
$\typeprob_{i,\type}$, the realized valuation is $v_{i,\type}$. We refer to $\type$
as the \emph{type} of buyer $i$, and write $\typeset_i$ for the set of
possible types of buyer $i$; thus,
\[
    \Pr_{v_i \sim \VD_i}[v_i = v_{i,\type}] = \typeprob_{i,\type}
    \qquad \text{for each } \type \in \typeset_i.
\]
For notational simplicity, we index buyers by their arrival order, so buyers arrive
as $1,2,\dots,n$. This also captures any fixed arrival order by relabeling buyers.

Upon the arrival of buyer $i$, the algorithm observes the realized type $\type$,
equivalently the valuation $v_{i,\type}$, and must immediately and irrevocably
allocate to buyer $i$ a subset $ALG_i$ of the remaining items, i.e., $\ALG_i \subseteq [m] \setminus \bigcup_{j=1}^{i-1} \ALG_j$. The algorithm and analysis apply to any fixed order; we relabel buyers according
to that order.
The algorithm's goal is to maximize the expected welfare of the resulting allocation, i.e., $\E\big[\sum_{i\in [n]} v_i(\ALG_i)\big]$. 

Let $(\OPT_i)_{i\in[n]}$ denote the random optimal allocation in hindsight. 
We use the shorthand     $\E[\OPT]
    :=
    \E\big[\sum_{i\in[n]} v_i(\OPT_i)\big]$.
An online algorithm has \emph{competitive factor} $\gamma$ if, for every collection
of input distributions $(\VD_i)_{i\in[n]}$, we have
  $  \E\big[\sum_{i\in[n]} v_i(\ALG_i)\big]
    \ge
    \frac{1}{\gamma}\,\E[\OPT]$.
Equivalently, it is $\gamma$-competitive if the left-hand
side is at least $(1/\gamma)\E[\OPT]$. Analogously, an offline algorithm is $\gamma$-approximate if the allocation $(\ALG_i)_{i\in[n]}$ it returns has value at least $1/\gamma$ fraction of the optimal offline value.



\paragraph{Computational access.}
The distributions $\VD_i$ are known explicitly through their finite supports and
probabilities. In preprocessing, the algorithm has demand oracle access to every valuation
$v_{i,\type}$ in the support. When buyer $i$ arrives, its realized type is revealed.
All running-time statements are measured in $n$, $m$, the total number of types,
and the relevant accuracy parameters.

\paragraph{Ex-ante configuration LP.}
For a parameter $\alpha\in[0,1]$, consider the standard ex-ante
configuration LP, denoted $LP(\alpha)$:
\begin{align} \label{lp:subadditive}
\tag{$\mathrm{LP(\alpha)}$}
\max \quad
& \sum_{i\in[n]} \sum_{\tau\in\Gamma_i} \sum_{\emptyset\ne S \subseteq[m]} x_{i,\tau,S} \cdot v_{i,\tau}(S) \notag \\
\text{s.t.} \quad
& \sum_{\emptyset\ne S\subseteq[m]} x_{i,\tau,S} \le \typeprob_{i,\tau}
    && \forall i\in[n],\ \tau\in\Gamma_i, \label{eq:subadd-lp-type}\\
& \sum_{i \in [n]}\sum_{\tau\in\Gamma_i} \sum_{S \ni j} x_{i,\tau,S} \le \alpha
    && \forall j\in[m], \label{eq:subadd-lp-marginal}\\
& x_{i,\tau,S} \ge 0
    && \forall i\in[n],\ \tau\in\Gamma_i,\ S\subseteq[m]. \notag
\end{align}

Let $\mathrm{LP(\alpha)}^*$ denote the optimal value to this linear program.
We use the standard fact that the ex-ante configuration LP, despite having
exponentially many variables, can be optimized using demand-oracle access via
dual separation and the ellipsoid method to return a polynomial-support primal solution
\cite{nisan2007algorithmic,dutting2017prophetinequalitieseasystochastic}.
 As usual, one may compute a
$(1-\eta)$-approximate solution in time polynomial in the input size and
$\log(1/\eta)$; this only changes our final guarantees by a factor of $1-\eta$.
In the sequel, we suppress this approximation issue and write as though exact
optimal LP solutions are available; whenever a final guarantee is stated with
slack $\epsilon$, we choose $\eta$ sufficiently small and absorb the resulting
$(1-\eta)$ loss into $\epsilon$, with only a further $\poly(1/\epsilon)$ loss in
the running time.

\begin{restatable}{fact}{LPProp}
\label{clm: LP-OPT}
For every $\alpha\in[0,1]$, we have 
    $\mathrm{LP(\alpha)}^* \ge \alpha\,\E[\OPT]$.
Moreover, for any $\eta>0$, a polynomial-support $(1-\eta)$-approximate
solution to \ref{lp:subadditive} can be computed in time polynomial in
$n$, $m$, $\sum_i|\typeset_i|$, and $\log(1/\eta)$, given demand-oracle
access to every valuation $v_{i,\type}$.
\end{restatable}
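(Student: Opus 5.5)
Two parts: the inequality $\mathrm{LP}(\alpha)^*\ge\alpha\,\E[\OPT]$, and the algorithmic claim.

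\textbf{The inequality.} Build a feasible solution from the optimal allocation in hindsight and scale it by $\alpha$. For each buyer $i$, type $\type\in\typeset_i$ and nonempty $S\subseteq[m]$, set
\[
x_{i,\type,S} := \alpha \Pr\bigl[v_i=v_{i,\type},\ \OPT_i=S\bigr],
\]
where $(\OPT_i)$ is any fixed measurable selection of a hindsight-optimal allocation.

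Check the constraints:
\begin{itemize}
\item Summing over $S\neq\emptyset$ gives at most $\alpha\,\typeprob_{i,\type}\le \typeprob_{i,\type}$, so \eqref{eq:subadd-lp-type} holds.
\item For an item $j$, the events $\{j\in\OPT_i\}$ are disjoint across $i$, because $\OPT$ is a feasible allocation. So summing over $i,\type$ and $S\ni j$ gives $\alpha\sum_i\Pr[j\in\OPT_i]\le\alpha$, which is \eqref{eq:subadd-lp-marginal}.
\item Nonnegativity is immediate.
\end{itemize}

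The objective equals
\[
\alpha\sum_i\sum_\type\sum_S \Pr[v_i=v_{i,\type},\OPT_i=S]\, v_{i,\type}(S) = \alpha\,\E[\OPT],
\]
using $v(\emptyset)=0$ to drop empty bundles. Since the LP optimum is at least the value of any feasible solution, $\mathrm{LP}(\alpha)^*\ge\alpha\,\E[\OPT]$.

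\textbf{The algorithmic claim.} Pass to the dual. It has variables $u_{i,\type}\ge 0$ for the type constraints and $c_j\ge0$ for the item constraints. It minimizes $\sum_{i,\type}\typeprob_{i,\type}u_{i,\type}+\alpha\sum_j c_j$ subject to
\[
u_{i,\type}\ge v_{i,\type}(S)-\sum_{j\in S}c_j \quad\text{for all } i,\type,S.
\]
For fixed $(u,c)$, separating these constraints is exactly one demand query to $v_{i,\type}$ at prices $c$ for each pair $(i,\type)$. The query returns the most violated $S$ together with its value, so the dual has a polynomial-time separation oracle.

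Then apply the standard machinery (as in \cite{nisan2007algorithmic,dutting2017prophetinequalitieseasystochastic}):
\begin{enumerate}
\item Run the ellipsoid method on the dual to precision $\eta$. It takes polynomially many iterations in $n$, $m$, $\sum_i|\typeset_i|$ and $\log(1/\eta)$, assuming values are polynomially bounded in bit length.
\item Collect the polynomially many dual constraints (bundles $S$) that the separation oracle returned.
\item Solve the primal LP restricted to the variables for those bundles. This restricted LP has polynomial size, and LP duality gives value at least $(1-\eta)\mathrm{LP}(\alpha)^*$.
\end{enumerate}

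The only delicate step is turning the ellipsoid run into a $(1-\eta)$ multiplicative guarantee, which needs the bit-complexity and precision bookkeeping. This is standard, and I would cite it rather than reprove it.
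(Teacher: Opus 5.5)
Your proposal is correct and follows the route the paper relies on. The paper does not write out a proof; it invokes this standard argument (citing \cite{nisan2007algorithmic,dutting2017prophetinequalitieseasystochastic}): scale the ex-ante allocation probabilities of the hindsight optimum by $\alpha$ to get $\mathrm{LP}(\alpha)^*\ge\alpha\,\E[\OPT]$ (the same scaling it uses in the proof of \Cref{thm:main}), then separate the dual with one demand query per pair $(i,\type)$, run ellipsoid, and solve the primal restricted to the polynomially many bundles found. The one detail to make precise is which bundles you collect. They should come from the ellipsoid run that certifies infeasibility of the dual once its objective is capped just below the optimum, so that LP duality on the restricted pair gives the $(1-\eta)$ guarantee.
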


%% file: recursive_procedure.tex
\section{Bundle Score Generators}
\label{sec:bsg}


In this section, we introduce our bundle score generator
(BSG) definition. We then give our main construction, the Recursive-Height BSG, and prove a stochastic dominance property needed to prove our results.

\subsection{Motivation and definition}

To motivate BSGs, consider first the offline allocation problem, where
each buyer $i$ has a fixed valuation $v_i$ and the goal is to allocate
the items $[m]$ among the buyers. This is captured by \ref{lp:subadditive} of \Cref{sec:preliminaries} by setting
$\typeset_i=\{v_i\}$ and $\typeprob_{i,v_i}=1$, so that the LP variables reduce
to $x_{i,A}$ for $i\in[n]$ and $\emptyset\neq A\subseteq[m]$.


A natural rounding strategy used in~\cite{feige2006maximizing}
is to first sample a \emph{tentative bundle} $T_i$ for each buyer $i$, with
$\Pr[T_i=A]=x_{i,A}$ for each $A \subseteq [m]$ (this is well-defined, due to \eqref{eq:subadd-lp-type} of \ref{lp:subadditive}). If the tentative bundles were disjoint, the expected
welfare of this tentative allocation would match the LP objective exactly. Thus, the central issue is
to resolve the conflicts over those items which appear in multiple tentative bundles,
while ensuring the allocation attains a constant fraction of the LP value's solution.


A bundle score generator provides such a rule. Given a tentative bundle, it
assigns possibly correlated random scores to the items in the bundle. We formalize our definition and notation below: 

\begin{definition}[Bundle score generator]
A \emph{bundle score generator} (BSG) is a randomized map $\psi$ that inputs a subset $A\subseteq[m]$, outputs a \textit{random score vector}
$\psi(A)\in \Z_{\ge 0}^{A}$, where $\psi_j(A)$ is the \textit{score} assigned to
item $j\in A$. The coordinates of $\psi(A)$ may be arbitrarily correlated.
\end{definition}
Given a BSG, our ``conflict resolution'' template for the offline allocation problem is simple. After having each buyer $i \in [n]$ draw a tentative bundle $T_i$, apply the BSG to obtain scoring vectors $(\psi(T_i))_{i \in [n]}$. Then, for each $j \in [m]$, we assign $j$ to the agent $i$
with $j \in T_i$ and for which $\psi_{j}(T_i)$ is largest (breaking ties arbitrarily). Our goal is to design a BSG for which this conflict resolution template has expected welfare at least $\mathrm{LP(\alpha)}^*/2$. We achieve this by designing a specific ``Recursive-Height BSG'' which satisfies the ``symmetric covering'' property of \Cref{sec:techniques}.




\subsection{Stochastic Dominance and Subcriticality}
In this section, we design a Recursive-Height BSG which satisfies a certain
stochastic dominance property. As we shall see, this stochastic dominance will allow us to attain the desired self-complementarity property. 
We use the following notation. For subsets $B\subseteq A\subseteq[m]$ and
a vector $y\in\Z_{\ge0}^{A}$, let $y|_B\in\Z_{\ge0}^{B}$ denote the
restriction of $y$ to $B$, i.e., $(y|_B)_j=y_j$ for every $j\in B$.

\begin{definition}[Strict stochastic dominance]
\label{def:stoc-dom}
    For random vectors $X,Y\in\Z_{\ge0}^{A}$, we write $X\succ Y$ if
    there exists a coupling of $X$ and $Y$ such that  $X_j>Y_j$ for all $j \in A$ simultaneously.
\end{definition}

The Recursive-Height BSG is defined from a collection of
distributions over subsets of $[m]$. In our applications, these tentative subsets are sampled from a solution to \ref{lp:subadditive} for $0 < \alpha < 1$.

\begin{definition}[Subcritical  distributions]
    A collection of distributions $\{\nu_i\}_{i\in[n]}$ over subsets of
    $[m]$ is $\alpha$-subcritical, for some $0<\alpha<1$, if for every
    item $j\in[m]$,
    \begin{align}
    \label{eq: subcritical}
        \sum_{i\in[n]}
        \Pr_{L_i\sim\nu_i}[j\in L_i]
        \le \alpha .
    \end{align}
\end{definition}

The distributions arising from the configuration LP rounding are
subcritical in exactly this sense. Given any feasible solution
$\overline{x}$ to \ref{lp:subadditive}, define distributions
$\{\mu_i\}_{i\in[n]}$ by
    \begin{align}
    \label{eq: mu_def}
        \Pr_{L_i\sim\mu_i}[L_i=A]
        =
        \begin{cases}
            \displaystyle
            \sum_{\type\in\typeset_i}\overline{x}_{i,\type,A},
                & \text{if } A\neq\emptyset, \\[1ex]
            \displaystyle
            1-\sum_{\type\in\typeset_i}\sum_{R\neq\emptyset}
                \overline{x}_{i,\type,R},
                & \text{if } A=\emptyset.
        \end{cases}
    \end{align}
    Then, by \eqref{eq:subadd-lp-marginal}, the collection
    $\{\mu_i\}_{i\in[n]}$ is $\alpha$-subcritical.

\subsection{The Recursive-Height BSG} \label{sec:recHeightBSG}

Fix any $\alpha$-subcritical collection $\{\nu_i\}_{i\in[n]}$.
We define the Recursive-Height BSG $\Height(\cdot)$ as follows: 
For the empty set, $\Height(\emptyset)$ returns the empty vector; for a
nonempty set $A\subseteq[m]$, $\Height(A)$ is sampled recursively as in
\Cref{alg:recHeight}.

\begin{algorithm}[tbh]
\caption{$\Height(A)$}\label{alg:recHeight}
\begin{enumerate}[nolistsep]
    \item For each $i\in[n]$, independently sample a random interest set
    $L_i\sim\nu_i$.

    \item Let $J_i:=L_i\cap A$. If $J_i\neq\emptyset$, recursively sample
    an independent vector
    \[
        y_i:=\Height(J_i)\in\Z_{\ge0}^{J_i}.
    \]

    \item Output the vector $b\in\Z_{\ge0}^{A}$ defined by
    \[
        b_j
        =
        1+
        \max\Bigl(
            \{0\}
            \cup
            \{(y_i)_j : i\in[n],\, j\in J_i\}
        \Bigr)
        \qquad \forall j\in A.
    \]
\end{enumerate}
\end{algorithm}

\textit{Remark.} The recursive algorithm above mimics the mirroring protocol described in \Cref{sec:techniques}, where now we use recursive subcalls to compute the simulated scoring vector of each buyer. We note that unlike the mirroring principle -- where we left out an arbitrary buyer $n$ -- here we sample \textit{every} buyer (including $n$). This is to simplify the construction and analysis. We get a single BSG, opposed to different one for each buyer.

\begin{theorem}[Dominance of the \Height{} BSG]
\label{thm:bsg-guarantees}
    Given any $\alpha$-subcritical collection of distributions
$\{\nu_i\}_{i\in[n]}$ for some $0<\alpha<1$, the Recursive-Height BSG
$\Height(\cdot)$ satisfies the following:
    \begin{itemize}
        \item  $\Height(\cdot)$ can be implemented in expected time polynomial in $n$, $m$, and $(1-\alpha)^{-1}$.
    \item 
    For any $A\subseteq[m]$, let $b=\Height(A)$. Independently draw
    samples $L_k'\sim\nu_k$ and $b_k'=\Height(L_k')$ for each $k\in[n]$.
    Define,  for every $j\in[m]$, 
        \[p'_j
        :=
        \max\Bigl(
            \{0\}
            \cup
            \{b'_{k,j}: k\in[n],\, j\in L_k'\}
        \Bigr).\]
    Then,
       $ b\succ p'|_A .$
    \end{itemize}
\end{theorem}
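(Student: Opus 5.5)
Both parts come from viewing the recursion of \Height{} as a branching process and coupling the subcalls.

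\emph{Running time.} Regard the call tree of \Height$(A)$ as a multi-type Galton--Watson process. The type of a node is its input set, and a node with set $B$ spawns one child $J_i=L_i\cap B$ for each $i$ with $J_i\neq\emptyset$. I will bound the total work by a weighted count of nodes. Give each node weight $|B|$, its set size. The expected total size of the children of $B$ is
\[
\sum_i \E|L_i\cap B|=\sum_{j\in B}\sum_i\Pr[j\in L_i]\le\alpha|B|
\]
by \eqref{eq: subcritical}. Hence the expected total weight at depth $d$ is at most $\alpha^d|A|\le\alpha^d m$. Summing over $d$, the expected total weight over all nodes is at most $m/(1-\alpha)$. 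Every node has weight at least $1$, so the expected number of nodes is at most $m/(1-\alpha)$. In particular the recursion terminates almost surely. Each node does $O(nm)$ work plus the cost of sampling $n$ interest sets, so the expected running time is $O(nm^2/(1-\alpha))$ times the sampling cost. This is polynomial.

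\emph{Dominance.} The key observation is that \Height$(A)$ is, by construction, $1+{}$(a fresh copy of $p'$ built from recursive calls on $L_i\cap A$). So I need to couple \Height$(L_k'\cap A)$ with \Height$(L_k')|_{A}$. I will prove the following restriction-monotonicity lemma: for $B\subseteq C$,
\[
\Height(B)\ \succeq\ \Height(C)|_B \quad\text{coordinatewise under some coupling.}
\]
The proof is by induction on the recursion, formally over the depth of the almost-surely finite call tree, or by a limiting argument on depth-truncated versions. First I couple the top-level samples $L_i$ to be identical in both calls, so that $L_i\cap B\subseteq L_i\cap C$. Then I apply the induction hypothesis to each child pair $(L_i\cap B,\,L_i\cap C)$. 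For $j\in B$, the max defining $b_j$ in \Height$(B)$ ranges over the indices $i$ with $j\in L_i\cap B$, and these are exactly the indices with $j\in L_i\cap C$. Each term dominates its counterpart, so $b_j$ dominates. In fact I expect this to hold with equality in distribution, since the scores of $j\in B$ depend only on which sets contain $j$ and on recursion restricted along those sets. I will check whether the stronger identity $\Height(C)|_B\overset{d}{=}\Height(B)$ holds, because it gives a cleaner proof. Either form suffices.

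\emph{Assembling.} Write the first step of \Height$(A)$ with samples $L_k$, coupled to equal $L_k'$. For $j\in A$,
\[
b_j=1+\max\bigl(\{0\}\cup\{(y_k)_j: j\in L_k'\}\bigr), \qquad y_k=\Height(L_k'\cap A).
\]
The lemma, applied with $B=L_k'\cap A$ and $C=L_k'$, couples $y_k\ge b_k'|_{L_k'\cap A}$ independently across $k$. Taking maxima gives $b_j\ge 1+p'_j>p'_j$ for all $j\in A$ simultaneously, which is exactly $b\succ p'|_A$.

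The main obstacle I expect is making the induction rigorous, since the recursion has unbounded depth. I would handle it with an explicit joint construction: run both recursions on a shared tree of samples $\{L^{(\text{node})}_i\}$ indexed by paths, and note that the $B$-tree is a subtree of the $C$-tree. The domination then follows directly, because every chain of intersections containing $j$ in one tree appears in the other. This is the interpretation of the score as the longest chain of intersections containing $j$.
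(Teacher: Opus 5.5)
Your proposal is correct and follows essentially the same route as the paper: the running-time bound is the same Galton--Watson argument on the total size of the interest sets (\Cref{clm:polytime}), and your shared, path-indexed sample-tree coupling is exactly the paper's proof of \Cref{clm:rest-consistency}, which gives the equality $\Height(C)|_B \overset{d}{=} \Height(B)$ you anticipated. After that, coupling $L_k = L_k'$ yields $b \overset{d}{=} \mathbf{1} + p'|_A$ as in the paper; the only cosmetic difference is that you state and use just the one-sided domination, which is all the assembly step needs.
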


\subsection{Proof of \Cref{thm:bsg-guarantees}}

We first show that the $\Height$ procedure terminates almost surely -- from the definition of the $\Height$ procedure, there are sample paths such that the procedure may never terminate. However, the measure of all such sample paths is $0$:







\begin{lemma}
\label{clm:polytime}
    The procedure $\Height(A)$ terminates in a finite number of recursive calls almost surely, for any input tentative bundle $A \subseteq [m]$. Further, the expected total size of interest sets in the recursion tree is at most $|A|/(1-\alpha)$. 
\end{lemma}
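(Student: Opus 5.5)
We view the recursion tree of $\Height(A)$ as a multi-type branching process whose individuals are the (nonempty) sets on which $\Height$ is called. The root is $A$. A node labelled $B$ has children labelled $J_i = L_i \cap B$ for each $i\in[n]$ with $J_i\neq\emptyset$, where the $L_i\sim\nu_i$ are sampled freshly and independently at that node. The natural potential is the size of the label. For the expected size of the interest sets contained in the call on $B$, we count $\sum_i |L_i\cap B|$: only the restricted sets are ever used or passed down, and we take this as the size measure. By linearity of expectation and $\alpha$-subcriticality \eqref{eq: subcritical},
\[
\E\Bigl[\sum_{i\in[n]} |J_i|\Bigr] \;=\; \sum_{j\in B}\sum_{i\in[n]} \Pr_{L_i\sim\nu_i}[j\in L_i] \;\le\; \alpha |B|.
\]
So the total size of the children of a node is, in expectation, at most an $\alpha$ fraction of its own size.

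Let $Z_d$ denote the total size of the labels at depth $d$ of the recursion tree, so $Z_0=|A|$. We condition on the first $d$ levels of the tree. The children of the depth-$d$ nodes use fresh, independent samples, so applying the display to each depth-$d$ node gives $\E[Z_{d+1}\mid \mathcal F_d]\le \alpha Z_d$. Iterating yields $\E[Z_d]\le \alpha^d|A|$.

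The expected total size of all interest sets in the tree is therefore
\[
\sum_{d\ge 0}\E[Z_{d}] \;\le\; \sum_{d\ge0}\alpha^d|A| \;=\; \frac{|A|}{1-\alpha}.
\]
This is exactly the claimed bound. If one prefers to count only the proper children, i.e.\ $\sum_{d\ge1}\E[Z_d]$, the bound is even $\alpha|A|/(1-\alpha)$.

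For almost-sure termination, note that $Z_d$ is a nonnegative integer and every node at depth $d$ contributes at least $1$ to it. Markov's inequality then gives
\[
\Pr[\text{tree has depth} \ge d] \;=\; \Pr[Z_d\ge1] \;\le\; \alpha^d|A| \;\to\; 0.
\]
Hence the depth is finite almost surely. Each node has at most $n$ children, so a tree of finite depth has finitely many nodes, and the recursion makes finitely many calls almost surely. Alternatively, the number of nodes is at most $\sum_d Z_d$, which has finite expectation $|A|/(1-\alpha)$ and so is finite a.s.

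The only delicate step is the conditional expectation bound $\E[Z_{d+1}\mid\mathcal F_d]\le\alpha Z_d$. It requires that the samples $L_i$ drawn at a node are independent of everything above that node, which holds because Algorithm~\ref{alg:recHeight} samples them afresh at each call. It also requires that the recursive calls can be treated as exploring a well-defined (possibly infinite) random tree. We handle this formally by defining all samples up front on a product space indexed by finite paths, so that the tree is a measurable object before we argue that it is finite.
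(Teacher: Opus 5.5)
Your proof is correct and follows essentially the same route as the paper: subcriticality gives $\E[Z_{d+1}\mid\mathcal F_d]\le\alpha Z_d$, hence $\E[Z_d]\le\alpha^d|A|$, and summing the geometric series yields $|A|/(1-\alpha)$, with finiteness of the tree following from this finite expectation. Your additions (conditioning on the full filtration rather than on $Z_{d}$ alone, the Markov bound on the depth, and the explicit product-space construction of the possibly infinite tree) simply make the paper's sketch more rigorous.
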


\begin{proof}
    Let $Z_i$ denote the sum of sizes of the interest sets in the recursion tree at depth $i$ with $Z_0 = |A|$. Then, for a recursive node with interest set $A'$, the expected sum of sizes of the interest sets for its children is:
    \[
        \E\left[\sum_{i=1}^{n}|A' \cap L_i|\right] = \sum_{j \in A'}\sum_{i=1}^{n}\Pr_{L_i \sim \nu_i}[L_i \ni j] \le \alpha \cdot |A'|
    \]
    In the above, we use the subcriticality of the distributions $\nu_i$ (Eq \ref{eq: subcritical}). So, $\E[Z_i|Z_{i-1}] \le \alpha \cdot Z_{i-1}$, and thus $\E[Z_i] \le \alpha^i \cdot |A|$. Summing over all $r$, $\E[\sum_{i}Z_i] \le |A|/(1-\alpha)$. Since the expected total size of interest sets in the recursion tree is finite, the number of nodes in the recursion tree is finite almost surely. 
\end{proof}

\emph{Remark.}
By Markov's inequality and a union bound, the preceding bound allows
any polynomial number of calls to $\Height(\cdot)$ to be capped so
that the probability of any truncation is at most $\delta$, while
keeping the running time polynomial in the input size and $1/\delta$.
We suppress explicit truncation in what follows: a standard coupling
argument shows that, after suitably rescaling the truncation
probability, the value retention guarantees established below incur
at most a factor $(1-\delta)$ loss. Absorbing this loss into the
slack $\epsilon$, every expected-polynomial-time guarantee we state
can therefore be converted into a worst-case polynomial-time
guarantee, with an additional $\poly(1/\epsilon)$ dependence in
the running time.




We will also need the following fact about the distribution of restricted vectors:

\begin{lemma}
\label{clm:rest-consistency}
    Consider any $S \subseteq A \subseteq [m]$, and let $b_A = \Height(A), b_S = \Height(S)$. Then, $b_A|_S \overset{d}{=} b_S$, where $\overset{d}{=}$ denotes equality in distribution. 
\end{lemma}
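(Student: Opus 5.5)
We will prove the restriction property by a coupling: we run $\Height(A)$ and read off a run of $\Height(S)$ from the same randomness, so that the restricted output equals the output of the run on $S$ exactly, almost surely. Equality in distribution then follows immediately.

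\emph{The coupling.} Run $\Height(A)$ with samples $L_i\sim\nu_i$ at the root. Obtain an execution of $\Height(S)$ by reusing the same samples. At the root of $\Height(S)$, use the same $L_i$, so the child sets are $J_i^S=L_i\cap S\subseteq J_i^A=L_i\cap A$. Recursively, at each child, use the randomness of the subtree of the call $\Height(J_i^A)$ to drive the call $\Height(J_i^S)$. Formally, we define by induction a map that sends each node of the $A$-recursion tree with set $B$ to a node with set $B\cap S$ in the $S$-tree, sharing its interest-set samples. A node whose $B\cap S$ is empty is simply pruned, matching the rule that $\Height(\emptyset)$ makes no recursive call.

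\emph{This coupling is a valid run of $\Height(S)$.} At every node of the $S$-tree, the samples $L_i$ are fresh draws from $\nu_i$. They are independent across nodes and independent of everything above, because they are the samples of distinct nodes of the $A$-tree, and those are fresh and mutually independent.

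\emph{The outputs agree.} We show by induction on the tree that for every node with set $B$ and output $b^B$, the coupled node with set $B\cap S$ outputs exactly $b^B|_{B\cap S}$. Fix $j\in B\cap S$. By definition, $b^B_j = 1+\max(\{0\}\cup\{(y_i)_j: j\in L_i\cap B\})$. Since $j\in S$, the condition $j\in L_i\cap B$ is equivalent to $j\in L_i\cap B\cap S$. By the induction hypothesis applied to the child $y_i$, we have $(y_i)_j=(y_i^S)_j$. So the two max expressions coincide coordinatewise.

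\emph{Making the induction rigorous.} The tree is random and possibly deep, so we argue via \Cref{clm:polytime}. The $A$-tree is finite almost surely, so we can run the induction from the leaves upward on each finite realization. The $S$-tree is the image of the $A$-tree under the map above, so it is finite on the same event. Alternatively, we can argue by induction on the height of the finite tree, conditioned on that height.

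\emph{Main obstacle.} The delicate point is the measure-theoretic justification that the pruned, relabeled process has exactly the law of $\Height(S)$. The cleanest approach is to index all randomness by an infinite tree of i.i.d.\ sample tuples, with nodes labeled by sequences of buyer indices. Both $\Height(A)$ and $\Height(S)$ are then deterministic functions of this same family. The inductive identity holds pointwise on the almost-sure event that both recursions terminate.
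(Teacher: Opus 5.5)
Your proposal is correct and follows essentially the same route as the paper. Both arguments couple the two recursions by reusing the same interest-set samples at corresponding nodes, show by structural induction on the almost-surely finite recursion tree (\Cref{clm:polytime}) that the node with set $B$ and its coupled node with set $B\cap S$ have outputs related by restriction, and conclude equality in distribution. Your indexing of all randomness by an infinite tree of i.i.d.\ sample tuples is simply a more careful formalization of the paper's coupling.
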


With these lemmas, we first finish the proof of \Cref{thm:bsg-guarantees}:

\begin{proof}[Proof of \Cref{thm:bsg-guarantees}]
    The running time guarantee of $\Height(\cdot)$ is immediate from \Cref{clm:polytime} and \Cref{clm: LP-OPT} (due to which sampling the interest subsets can be done in polynomial time). Now, consider some fixed subset $A$. We identically couple the random sets $L_k'$ with the random sets $L_k$ sampled in the first level of the recursion in $\Height(A)$ for $b$, i.e., under this coupling, $L_k = L_k'$. Recall that $J_k = L_k \cap A = L_k' \cap A$ and $y_k = \Height(J_k)$. Then, by definition:
    \[
        b_j = 1 + \max\Bigl(\{0\} \cup \{(y_k)_j: k \in [n], j \in J_k\}\Bigr)
    \]
    Since $b_k' = \Height(L_k') = \Height(L_k)$, by \Cref{clm:rest-consistency}, $b_k'|_{J_k} \overset{d}{=} y_k$. By definition, for $j \in A$
    \[
        (p'|_A)_j = \max\Bigl(\{0\} \cup \{(b_k'|_{J_k})_j: k \in [n], j \in J_k\}\Bigr)
    \]
    Therefore:
    \[
        b \overset{d}{=} \mathbf{1} + p'|_A
    \]
    where $\mathbf{1} \in \Z^{A}$ is the all $1$'s vector. Hence, it is clear to see that $b \succ p'|_A$. 
\end{proof}

It remains to prove \Cref{clm:rest-consistency}:

\begin{proof}[Proof of \Cref{clm:rest-consistency}]
    We couple the procedures $\Height(A)$ and $\Height(S)$ by using the same
    sampled sets $L_1,\dots,L_n$ at every corresponding recursive call. Consider the output $b_A$ of a run of $\Height(A)$ which has finite termination depth, i.e., we condition on all the random interest subsets sampled at every node of the recursion. 
    
    Then, $b_A|_S$ is obtained by simply dropping all coordinates of $b_A$ outside $S$. Equivalently, $b_A|_S$ is the output vector obtained by restricting every interest subset of $\Height(A)$ to $S$, since the values for any coordinates outside $S$ are never used to define $(b_A)|_S$. For any child node of the root of the recursion with interest set $L_k$, the restricted interest set now becomes $(L_k \cap A) \cap S = L_k \cap S$, which is identical to the restricted interest set in the coupled run of $\Height(S)$. 
    
    Applying the argument inductively on the depth of the recursion, we get the equality of all restricted interest sets for this run and thus, $b_A|_S = b_S$ under the coupling. By \Cref{clm:polytime}, since the recursion terminates in finite time almost surely for any input subset $S$, we have $b_A|_S \overset{d}{=} b_S$. 
\end{proof}

%% file: offline_rounding.tex
\section{$(2+\epsilon)$-Approximation Offline Algorithm}\label{sec:offline_alg}

We now analyze the offline rounding approach discussed in \Cref{sec:bsg} and show that it achieves an approximation ratio of $1/(2+\epsilon)$ for any $\epsilon > 0$. For clarity, we formally present the algorithm in \Cref{alg:offline-rounding}. It is clear to see that the algorithm produces a feasible allocation. 

\begin{algorithm}[tbh]
\LinesNotNumbered
\caption{Offline rounding algorithm}
\label{alg:offline-rounding}
\begin{minipage}{0.96\textwidth}
Given a feasible solution $\overline{x}$ to \ref{lp:subadditive}, let $\Height(\cdot)$ be
the corresponding \SHSG{} procedure, defined using the induced distributions
$\{\mu_i\}_{i\in[n]}$ from  \eqref{eq: mu_def}.

\begin{enumerate}
    \item Each buyer $i\in[n]$ independently samples a tentative bundle $T_i \sim \mu_i$, and then generates a \bid{} vector $b_i := \Height(T_i)$.


    \item For each item $j\in[m]$ that appears in at least one tentative bundle,
    assign $j$ to a buyer $i \in \arg\max\{b_{i,j} : j\in T_i\}$,
    breaking ties arbitrarily. Items appearing in no tentative bundle may be left
    unallocated.
\end{enumerate}
\end{minipage}
\end{algorithm}

The following theorem captures the buyer-wise value retention guarantee, which immediately implies approximation ratio as in \Cref{thm: offline-factor-2}.
\begin{lemma}
\label{thm: offline-val-pres}
    For buyer $i \in [n]$ with valuation $v_i$, for any subset $\emptyset \ne A \subseteq [m]$ with $x_{i, A} > 0$, if $ALG_i \subseteq A$ is the final subset of items assigned to buyer $i$ by \Cref{alg:offline-rounding}, then
    \[
        \E[v_i(\ALG_i) \mid T_i = A] \ge \frac{1}{2}v_i(A)
    \]
\end{lemma}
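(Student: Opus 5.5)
We condition on $T_i = A$ and follow the symmetric covering argument of \Cref{sec:techniques}, with \Cref{thm:bsg-guarantees} supplying the needed dominance.

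\textbf{Step 1: identify a subset that buyer $i$ surely receives.} Set $b := b_i = \Height(A)$. For every item $j$, define the competing score
\[
p_j := \max\bigl(\{0\}\cup\{b_{k,j} : k\neq i,\ j\in T_k\}\bigr).
\]
By the assignment rule, buyer $i$ receives every $j\in A$ with $b_j > p_j$; ties at equality need not go to $i$. So $\ALG_i \supseteq \{j\in A : b_j > p_j\}$. Conditional on $T_i = A$, the vector $b$ is independent of $p$. This holds because $p$ depends only on $(T_k, b_k)_{k\neq i}$, which are independent of buyer $i$'s draws.

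\textbf{Step 2: dominate $p$ by a full copy.} Let $p'$ be the vector from \Cref{thm:bsg-guarantees} with the collection $\{\mu_k\}$. It uses fresh samples $L'_k\sim\mu_k$ for \emph{all} $k\in[n]$, including $k=i$. Meanwhile $p$ is exactly the analogous quantity built from the $n-1$ buyers $k\neq i$. Coupling $L'_k = T_k$ and $b'_k = b_k$ for $k\neq i$ gives $p \le p'$ coordinatewise, since $p'$ takes a max over a superset. By \Cref{thm:bsg-guarantees}, $b \succ p'|_A$. Combining the two, $b \succ \tilde p|_A$, where $\tilde p$ is an independent copy of $p$. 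To see this, take the coupling $b_j > p''_j$ for all $j$ with $p''\overset{d}{=}p'$. Then realize $\tilde p$ from $p''$ through the monotone coupling that drops buyer $i$'s contribution; this is possible because $p'$ is a coordinatewise maximum of independent pieces. Enlarging the probability space, this joint triple $(b,\tilde p)$ can be taken independent of the true $p$.

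\textbf{Step 3: symmetric covering and subadditivity.} Define
\[
R_1 := \{j\in A : \tilde p_j \ge p_j\}, \qquad R_2 := \{j\in A : p_j \ge \tilde p_j\}.
\]
Then $R_1\cup R_2 = A$. Since $p$ and $\tilde p$ are i.i.d., $R_1 \overset{d}{=} R_2$. Also $R_1 \subseteq \{j : b_j > p_j\} \subseteq \ALG_i$. Monotonicity and subadditivity then give
\[
\E[v_i(\ALG_i)\mid T_i=A] \ge \E[v_i(R_1)] = \tfrac12\E[v_i(R_1)+v_i(R_2)] \ge \tfrac12 v_i(A).
\]

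\textbf{Main obstacle.} The step I expect to need the most care is Step 2. There, one must transfer dominance over the all-buyer vector $p'$ to a copy of $p$ while keeping that copy independent of the actual $p$. I would handle it explicitly with the coupling: draw $b$ together with $(L'_k,b'_k)_{k\in[n]}$ satisfying $b_j > p'_j$ on $A$, all independent of the real $(T_k,b_k)_{k\neq i}$. Then define
\[
\tilde p_j := \max\bigl(\{0\}\cup\{b'_{k,j}: k\neq i,\ j\in L'_k\}\bigr) \le p'_j .
\]
This $\tilde p$ has the law of $p$ and is independent of it.
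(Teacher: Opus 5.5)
Your proof is correct and follows essentially the paper's argument: \Cref{thm:bsg-guarantees} replaces $b$ by an independent copy of the competing threshold, and a symmetric two-set cover plus subadditivity gives the factor $\tfrac12$. The one difference is where buyer $i$'s missing contribution is handled. The paper inflates the real threshold $p$ to a full $n$-buyer threshold $\widetilde p$ by adding an independent hallucinated copy of buyer $i$, so $b\succ\widetilde p|_A$ is used verbatim and the cover is by two full thresholds. You instead deflate the full threshold on the coupled side to an $(n-1)$-buyer copy of $p$. That costs you the gluing step you describe: resampling $(L'_k,b'_k)_{k\in[n]}$ conditionally on $p'|_A$, which is valid since everything is discrete. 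It yields a cover by two $(n-1)$-buyer thresholds, exactly as in the mirroring protocol of \Cref{sec:techniques}.
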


\begin{proof}
    Condition on $T_i=A$. For each item $j\in[m]$, let
    \[
        p_j := \max\Bigl(\{0\}\cup\{b_{k,j}: k\neq i,\ j\in T_k\}\Bigr)
    \]
    be the maximum score assigned to item $j$ by buyers other than $i$. Buyer $i$ receives every item $j\in A$ for which $b_{i,j}>p_j$, and possibly additional items due to tie-breaking. 
    We first complete $p$ into a full threshold score $\tilde p$ by adding an independent hallucinated copy of buyer $i$. Draw
    $\widetilde{T}_i\sim\mu_i$ independently, set
    $\widetilde{b}_i:=\Height(\widetilde{T}_i)$, and define
    \[
        \widetilde{p}_j
        :=
        \max\Bigl(
            \{p_j\}
            \cup
            \{\widetilde{b}_{i,j}: j\in \widetilde{T}_i\}
        \Bigr)
        \qquad \forall j\in[m].
    \]
    Since $\widetilde{p}_j\ge p_j$ for every $j$, monotonicity of $v_i$ gives
    \begin{equation}
        \E[v_i(\ALG_i)\mid T_i=A]
        \ge
        \E\!\left[
            v_i\bigl(\{j\in A: b_{i,j}>p_j\}\bigr)
        \right]
        \ge
        \E\!\left[
            v_i\bigl(\{j\in A: b_{i,j}>\widetilde{p}_j\}\bigr)
        \right]. \label{eqn:additional_copy_of_i}
   \end{equation}
    As noted before, the distributions $\{\mu_i\}_{i\in[n]}$ are
    $\alpha$-sub-critical. Moreover, conditional on $T_i=A$, the vector
    $b_i=\Height(A)$ is independent of the threshold vector
    $\widetilde{p}$. Therefore, by \Cref{thm:bsg-guarantees}, $b_i \succ \widetilde{p}|_A$.  
    By the stochastic dominance guarantee, we can couple $b_i$ with an additional random vector
    $p' \in \R^A$ such that 
    \begin{itemize}
        \item  $b_{i,j}>p_j'$ for all $j\in A$ simultaneously,
        \item $p'$ has the same law as $\widetilde{p}|_A$, and is drawn independently from it.
    \end{itemize}
    Observe then that under this coupling,
    $
    \{j\in A: b_{i,j}>\widetilde{p}_j\}  \supseteq \{j\in A: p_j'\ge \widetilde{p}_j\}
    $
    and so using monotonicity,
    $
       v_i\bigl(\{j\in A: b_{i,j}>\widetilde{p}_j\}\bigr) \ge v_i(\{j\in A: p_j'\ge \widetilde{p}_j\}).
    $
Thus, after taking expectations and applying \eqref{eqn:additional_copy_of_i},
    \begin{align}
        \E[v_i(\ALG_i)\mid T_i=A]
        &\ge
        \E
        \left[
            v_i\bigl(\{j\in A: b_{i,j}>\widetilde{p}_j\}\bigr)
        \right] \notag \\
        &\ge
        \E\left[v_i(\{j\in A: p_j'\ge \widetilde{p}_j\})\right] \label{eqn:coupling_lower_bound}
    \end{align}
To complete the proof, define the random sets
    \[
        A_0 := \{j\in A: p_j'\ge \widetilde{p}_j\},
        \qquad
        A_1 := \{j\in A: \widetilde{p}_j\ge p_j'\}.
    \]
Then, $A_0$ and $A_1$ satisfy the symmetric covering property described in \Cref{sec:techniques}.
Concretely, observe that $A=A_0\cup A_1$, and so by subadditivity,
    \begin{equation} \label{eqn:subadditivity_offline}
        v_i(A_0)+v_i(A_1)\ge v_i(A).
    \end{equation}
On the other hand, $p'$ and $\widetilde{p}|_A$ are i.i.d., and so the random sets $A_0$ and $A_1$ are identically distributed, which implies $\E[v_i(A_0)] =  \E[v_i(A_1)]$. 
Therefore, by combining this with \eqref{eqn:subadditivity_offline} and taking expectations,
\begin{equation} \label{eqn:cover_offline_lower_bound}
    \E[v_i(A_0)] \ge v_i(A)/2.
\end{equation}
Finally, using \eqref{eqn:coupling_lower_bound} with \eqref{eqn:cover_offline_lower_bound} implies
    $
        \E[v_i(\ALG_i)\mid T_i=A]
        \ge
        \E[v_i(A_0)]
        \ge
        \frac{1}{2}v_i(A)
    $
    as required.
\end{proof}

\begin{theorem}
\label{thm: offline-factor-2}
    For any feasible solution $\overline{x}$ to \ref{lp:subadditive}, \Cref{alg:offline-rounding} produces a feasible allocation with expected welfare at least
    \[
        \frac{1}{2}\sum_{i,S}\overline{x}_{i,S}v_i(S).
    \]
    Its expected running time is polynomial in $n$, $m$, and $(1-\alpha)^{-1}$.
\end{theorem}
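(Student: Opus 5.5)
The plan is to combine the per-buyer guarantee of \Cref{thm: offline-val-pres} with linearity of expectation, and to get the running time from \Cref{thm:bsg-guarantees}.

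\emph{Feasibility.} Step 2 of \Cref{alg:offline-rounding} sends each item to at most one buyer, namely a maximizer of its score among those buyers whose tentative bundles contain it. Hence the sets $\ALG_i$ are pairwise disjoint.

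\emph{Welfare.} For a fixed buyer $i$, I would condition on the tentative bundle $T_i$. The event $T_i=A$ has probability $x_{i,A}$ for each nonempty $A$, and the remaining probability mass sits on $T_i=\emptyset$.
\begin{itemize}
\item On $T_i=\emptyset$, buyer $i$ receives nothing. Its value is then $v_i(\emptyset)=0$, so this event contributes nothing on either side of the inequality.
\item On each nonempty $A$ with $x_{i,A}>0$, \Cref{thm: offline-val-pres} gives $\E[v_i(\ALG_i)\mid T_i=A]\ge v_i(A)/2$.
\end{itemize}
By the law of total expectation,
\[
\E[v_i(\ALG_i)]=\sum_{A\neq\emptyset}\overline{x}_{i,A}\,\E[v_i(\ALG_i)\mid T_i=A]\ge \tfrac12\sum_{A}\overline{x}_{i,A}\,v_i(A).
\]
Summing over $i\in[n]$ and using linearity of expectation yields the claimed bound $\frac12\sum_{i,S}\overline{x}_{i,S}v_i(S)$.

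Two routine points need care in this step.
\begin{itemize}
\item In the offline setting each buyer has a single type, so $\mu_i$ from \eqref{eq: mu_def} is exactly $\Pr[T_i=A]=\overline{x}_{i,A}$.
\item The set $\ALG_i$ is always contained in $T_i$. Items are only ever assigned to buyers whose tentative bundles contain them, and this is what makes the conditioning in the lemma match.
\end{itemize}

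\emph{Running time.} The algorithm makes the following calls:
\begin{itemize}
\item $n$ samples from the $\mu_i$; these take polynomial time because $\overline{x}$ has polynomial support.
\item $n$ calls to $\Height$; by \Cref{thm:bsg-guarantees}, using that $\{\mu_i\}$ is $\alpha$-subcritical, each takes expected time polynomial in $n$, $m$, $(1-\alpha)^{-1}$.
\item $O(nm)$ work for the argmax assignment.
\end{itemize}
Linearity of expectation bounds the total expected time.

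There is no real obstacle here; the substantive work is already in \Cref{thm: offline-val-pres}. The only point to check is that the conditioning events $T_i=A$ partition the probability space, up to the empty-bundle event, with exactly the LP weights.
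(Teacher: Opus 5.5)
Your proposal is correct and follows the paper's own proof. You condition on $T_i$, apply \Cref{thm: offline-val-pres}, sum over buyers using $\Pr[T_i=A]=\overline{x}_{i,A}$, and take the running time from \Cref{thm:bsg-guarantees} with one $\Height$ call per buyer; your explicit treatment of feasibility and of the empty-bundle event only spells out details the paper leaves implicit.
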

\begin{proof}
    By \Cref{thm: offline-val-pres}, we have that $\E[\ALG]$ equals
    \begin{align*}
         \sum_i\E[v_i(\ALG_i)] & ~=~ \sum_{i, S} \Pr[T_i = S] \cdot \E[v_i(\ALG_i)|T_i = S]  ~\ge~ \frac{1}{2}\sum_{i, S} \overline{x}_{i, S} \cdot v_i(S), 
    \end{align*}
    where we used $\Pr[T_i = S] = \overline{x}_{i, S}$. The running-time bound follows from \Cref{thm:bsg-guarantees}, since each call to $\Height(\cdot)$ has expected running time polynomial in $m$ and $(1-\alpha)^{-1}$, and the algorithm makes one such call per buyer.
\end{proof}


By \Cref{clm: LP-OPT}, given demand-oracle access, we can compute in polynomial time a feasible solution $\overline{x}$ to \ref{lp:subadditive} whose objective value is at least $\alpha\cdot \mathrm{OPT}$.
For any $\epsilon > 0$, setting $\alpha=\frac{2}{2+\epsilon}$, 
and applying \Cref{thm: offline-factor-2}, 
we obtain a $(2+\epsilon)$-approximate algorithm with expected running time polynomial in $n$, $m$, and $1/\epsilon$.

%% file: GW_RSG.tex
\section{$(6+\epsilon)$-Competitive Online Algorithm}

\label{sec: online}

In this section, we use the $\Height$ \SHSG{} from \Cref{sec:bsg} to obtain an expected-polynomial-time online algorithm with an $O(1)$ competitive ratio for subadditive valuations. We first begin in \Cref{sec: value-retention} with the key underlying value retention lemma that enables us to obtain the constant competitive ratio. Then, we present the algorithm for the case when the valuation distributions have finite supports in \Cref{sec: finite-type}, and finally show how the algorithm can be extended to the sample access setting in \Cref{sec: infinite-type}.

\subsection{Value Retention Guarantee}
\label{sec: value-retention}

We first state the key structural lemma that enables us to argue the desired value retention. The lemma only exploits the stochastic dominance property of the $\Height(\cdot)$ procedure (for any given collection of sub-critical distributions) and monotonicity and subadditivity of the valuation. 

\begin{lemma}
\label{lem:one-sixth-val}
Consider $\Height(\cdot)$ defined with respect to a sub-critical
collection $(\nu_k)_{k\in[n]}$. Fix $A\subseteq[m]$ and a monotone
subadditive valuation $v$, and draw $b_i\sim\Height(A)$.
Independently, for each $k\in[n]$, draw $T_k'\sim\nu_k$ and
$b_k'\sim\Height(T_k')$
and set $p_j'=\max\bigl(\{0\}\cup
    \{b_{k,j}':k\in[n],\ j\in T_k'\}\bigr)$.
Let $p''$ have the same distribution as $p'$, independently of
$(b_i,p')$. Then,
\[
    \E[v(\{j\in A:b_{i,j}>p_j'\ge p_j''\})]
    \ge \frac16 v(A).
\]
\end{lemma}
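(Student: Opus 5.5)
Throughout, condition on nothing beyond $A$; all randomness is in $b_i$, $p'$ and $p''$. The plan is a symmetric $3$-covering argument. By \Cref{thm:bsg-guarantees} applied to $A$, we have $b_i \succ p'|_A$. Since $b_i$ is independent of $(p',p'')$, we can realize the dominance through a coupling that adds a fresh vector $q\in\Z_{\ge0}^A$. This $q$ has the law of $p'|_A$, is independent of $(p',p'')$, and satisfies $b_{i,j}>q_j$ for all $j\in A$ simultaneously. We construct $q$ jointly with $b_i$ and draw it independently of $p',p''$, as in the proof of \Cref{thm: offline-val-pres}. Then $q, p'|_A, p''|_A$ are three i.i.d.\ random vectors on $A$.

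Under this coupling, consider any $j\in A$ with $q_j\ge p'_j\ge p''_j$. For such $j$ we get $b_{i,j}>q_j\ge p'_j\ge p''_j$, so $j$ lies in the target set. Define
\[
    C := \{j\in A: q_j\ge p'_j\ge p''_j\}.
\]
By monotonicity of $v$, the target expectation is at least $\E[v(C)]$, so it suffices to show $\E[v(C)]\ge v(A)/6$.

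For each permutation $\sigma$ of the three vectors $(X^1,X^2,X^3):=(q,p'|_A,p''|_A)$, let
\[
    C_\sigma:=\{j\in A: X^{\sigma(1)}_j\ge X^{\sigma(2)}_j\ge X^{\sigma(3)}_j\}.
\]
For every $j$, some ordering of the three numbers $X^1_j,X^2_j,X^3_j$ is weakly decreasing, with ties allowed. Hence $\bigcup_\sigma C_\sigma=A$ pointwise.

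By subadditivity (iterated over the six sets), $\sum_\sigma v(C_\sigma)\ge v(A)$. Because the three vectors are i.i.d., the joint law of $(X^1,X^2,X^3)$ is invariant under permuting them. It follows that every $C_\sigma$ has the same distribution as $C=C_{\mathrm{id}}$, so $\E[v(C_\sigma)]=\E[v(C)]$ for all six $\sigma$. Taking expectations gives $6\,\E[v(C)]\ge v(A)$, which completes the argument.

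The only delicate step is justifying that $q$ can be taken independent of $(p',p'')$ while keeping $b_i$'s coupling. This follows because $b_i$ is drawn independently of $(p',p'')$. We can therefore first sample the pair $(b_i,q)$ from the dominance coupling given by \Cref{thm:bsg-guarantees}, and then sample $(p',p'')$ independently. This does not change the law of $(b_i,p',p'')$ appearing in the statement. Everything else is a routine application of monotonicity and subadditivity.
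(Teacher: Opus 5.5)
Your proposal is correct and follows the paper's proof essentially step for step. You use \Cref{thm:bsg-guarantees} to couple $b_i$ with a fresh copy $q$ of $p'|_A$ that is independent of $(p',p'')$ and satisfies $b_{i,j}>q_j$ on all of $A$, then finish with the six weak orderings covering $A$, subadditivity, and exchangeability of the three i.i.d.\ vectors. Your added justification for sampling $(b_i,q)$ first and $(p',p'')$ independently afterwards is the same point the paper handles briefly.
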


\begin{proof}[Proof of \Cref{lem:one-sixth-val}]
    Let $A' \subseteq A$ be
    \[
        A' = \{j \in A: b_{i, j} > p_j' \ge p_j''\}
    \]
    Since the distributions $\{\nu_k\}_{k\in[n]}$ are $\alpha$-sub-critical,
    \Cref{thm:bsg-guarantees} gives $b_i\succ p'|_A$. By the stochastic dominance guarantee, we can couple $b_i$ with an additional random vector
    $\widehat p \in \mathbb{Z}^{A}_{\ge 0}$ such that 
    \begin{itemize}
        \item  $b_{i,j}>\widehat p_j$ for all $j\in A$ simultaneously,
        \item $\widehat p$ has the same law as $p'|_A$ and $p''|_A$, and is drawn independently from them.
    \end{itemize}
    Observe then that under this coupling,
    $
    A' = \{j\in A: b_{i,j}>p'_j\ge p_j''\}  \supseteq \{j\in A: \widehat p_j\ge p'_j\ge p_j''\}. 
    $
    So using monotonicity, under the coupling,
    $
       v(A') \ge v(\{j\in A: \widehat p_j\ge p'_j\ge p_j''\}).
    $ Thus, after taking expectations, 
    \begin{align}
    \label{eq: val-lb}
        &\E\!\left[
            v(A')
        \right]\ge
        \E[
        v(\{j\in A: \widehat p_j\ge p'_j\ge p_j''\})
        ]
    \end{align}

    It remains to analyze the final expectation. 
    For every ordered triple $(u,v,w)$
    obtained by permuting the three vectors $\widehat p$, $p'|_A$, and
    $p''|_A$, define
    \[
        A_{u,v,w}
        :=
        \{j\in A: u_j\ge v_j\ge w_j\}.
    \]
    These six sets cover $A$, since for every item $j\in A$, the three numbers
    $\widehat p_j$, $p_j'$, and $p_j''$ satisfy at least one of the six weak
    orderings. Hence, by subadditivity,
    \begin{align}
    \label{eq: subadd-six}
        \sum_{(u,v,w)} v(A_{u,v,w})
        \ge
        v(A),
    \end{align}
    where the sum ranges over the six ordered triples.

    On the other hand, $\widehat p$, $p'|_A$, and $p''|_A$ are i.i.d., the six random
    sets $A_{u,v,w}$ are identically distributed, and thus each have the same expected value with respect to $v$. On the other hand, notice that
    $A_{\widehat p,p'|_A,p''|_A}=\{j\in A: \widehat p_j\ge p'_j\ge p_j''\}$, and therefore using \eqref{eq: subadd-six},
    \[
        \E[v(A_{\widehat p,p'|_A,p''|_A})]
        =
        \frac{1}{6}
        \E\! \left[
            \sum_{(u,v,w)} v(A_{u,v,w})
        \right]
        \ge
        \frac{1}{6}v(A) ,
    \]
    which when combined with \eqref{eq: val-lb} completes the proof. 
\end{proof}

\subsection{Valuation Distributions with Finite Supports}
\label{sec: finite-type}

We now describe our approach for the setting where each valuation distribution is given explicitly to obtain an ex-ante guarantee against the benchmark \ref{lp:subadditive}.

Given our Recursive-Height BSG from \Cref{sec:recHeightBSG}, the online algorithm follows the random-score perspective of Correa and Cristi~\cite{correa2023constant}: buyers are assigned item scores, and items are allocated only when a buyer's score clears an appropriate base threshold. The distinction is in how these scores are produced. Rather than assuming access to buyer-specific random score generators obtained nonconstructively, our algorithm uses the LP solution to sample tentative bundles and then applies the \SHSG{} to generate correlated scores from each tentative bundle. The base thresholds are generated by running this same scoring procedure once in advance on an independent threshold sample: tentative bundles are drawn from the LP-induced distributions, scored using $\Height(\cdot)$, and each item threshold is set to the maximum score assigned to that item. Thus the LP proposes bundles, the \SHSG{} assigns correlated scores within those bundles, and the independent threshold sample provides the benchmark that arriving buyers must beat.

We formally present the algorithm below. 

\begin{algorithm}[tbh]
\LinesNotNumbered
\caption{Online algorithm $\mathcal{A}$}
\label{alg:online-algorithm}
\begin{minipage}{0.96\textwidth}
Given a feasible solution $\overline{x}$ to \ref{lp:subadditive},
let $\Height(\cdot)$ be the corresponding \SHSG{} procedure, defined using the
induced distributions $\{\mu_i\}_{i\in[n]}$ from \eqref{eq: mu_def}.

\begin{enumerate}
    \item \textbf{Compute base thresholds.}
    Before any buyer arrives, independently sample $T_i' \sim \mu_i$ for each
    $i\in[n]$, and set $b_i' := \Height(T_i')$. 
    Define the threshold vector $p'$ coordinatewise:
    \[
        p_j'
        :=
        \max\Bigl(\{0\}\cup\{b_{i,j}' : i\in[n],\ j\in T_i'\}\Bigr)
        \qquad \forall j\in[m].
    \]

    \item \textbf{Online phase.}
    When buyer $i$ arrives and reveals valuation $v_i=v_{i,\type}$:
    \begin{enumerate}
        \item Sample a tentative bundle $T_i$ according to $\Pr[T_i=A] = \overline{x}_{i, \type, A}/\typeprob_{i, \type}$. 
        Set $b_i := \Height(T_i)$.

        \item Let $S_i := \{j\in T_i : b_{i,j} > p_j'\}$.
        If $R_i$ denotes the set of items remaining when buyer $i$ arrives,
        allocate $\ALG_i := S_i \cap R_i$
        to buyer $i$.
    \end{enumerate}
\end{enumerate}
\end{minipage}
\end{algorithm}

The feasibility of \Cref{alg:online-algorithm} follows immediately since it only allocates remaining items.
We now formally state its performance guarantee with respect to the LP objective. 

\begin{theorem}
\label{thm:sampling-main}
    Given a feasible solution $\overline{x}$ to \ref{lp:subadditive} for some $0 < \alpha < 1$, \Cref{alg:online-algorithm} produces a feasible allocation whose expected welfare is at least:
    \[  
        \frac{1}{6}\sum_{i \in [n]}\sum_{\type \in \typeset_i}\sum_{A \ne \emptyset}\overline{x}_{i, \type, A}v_{i, \type}(A)
    \]
    Moreover, the algorithm has expected running time polynomial in
    $n$, $m$, and $(1-\alpha)^{-1}$.
\end{theorem}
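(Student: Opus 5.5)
We follow the Mirror Lemma strategy of Correa and Cristi~\cite{correa2023constant}. The goal is to show that, for each buyer $i$, type $\type$ and bundle $A$ with $\overline{x}_{i,\type,A}>0$,
\[
\E[v_{i,\type}(\ALG_i)\mid v_i=v_{i,\type},\,T_i=A]\ \ge\ \tfrac16 v_{i,\type}(A).
\]
Summing against $\Pr[v_i=v_{i,\type},T_i=A]=\typeprob_{i,\type}\cdot \overline{x}_{i,\type,A}/\typeprob_{i,\type}=\overline{x}_{i,\type,A}$ then gives the welfare bound. The running time follows from \Cref{thm:bsg-guarantees}, since we make $2n$ calls to $\Height(\cdot)$, and from the fact that sampling from $\overline{x}$ is polynomial by \Cref{clm: LP-OPT}. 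Feasibility is immediate.

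The key step is to characterize when item $j\in A$ is still available when buyer $i$ arrives. Item $j$ is taken by an earlier buyer $k<i$ only if $j\in T_k$ and $b_{k,j}>p'_j$. We introduce a fresh independent copy $p''$ of the threshold vector, obtained by drawing $T''_k\sim\mu_k$ and $b''_k=\Height(T''_k)$ for all $k$. We then argue that
\[
\E\bigl[v(\ALG_i)\bigr]\ \ge\ \E\bigl[v(\{j\in A: b_{i,j}>p'_j\}\cap R_i)\bigr]
\]
is at least $\tfrac12\,\E[v(\{j\in A:b_{i,j}>p'_j\ge p''_j\})]$, up to adjustment. The route is as follows. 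First, note that $\Pr[j\notin R_i]$ is dominated by the event that some earlier buyer's score beats $p'_j$. The earlier buyers' pairs $(T_k,b_k)$ have, marginally over types, exactly the law of $(T''_k,b''_k)$, because $T_k$ is distributed as $\mu_k$ when we average over types. Hence "taken before $i$" is contained in the event $\{\max_{k<i} b_{k,j} > p'_j\}$, and the vector of competitors' scores has the same law as a sub-maximum of $p''$.

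Since the remaining set and $\{b_i>p'\}$ are both defined through $p'$, they are positively correlated in a way that is hard to use directly. To handle this, we use the Mirror Lemma trick: swap the roles of $p'$ and $p''$ by exchangeability, and average the two resulting expressions. Under the event $\{p'_j \ge p''_j\}$, item $j$ survives whenever the earlier buyers' scores are dominated, and subadditivity over the two orderings of $(p',p'')$ costs the factor $2$. Concretely, we write
\[
S_i\cap R_i\ \supseteq\ \{j\in A: b_{i,j}>p'_j\}\setminus\{j: \max_{k<i}b_{k,j}>p'_j\},
\]
and bound the value of the intersection from below by the value of $\{j\in A: b_{i,j}>p'_j\ge p''_j\}$ minus a symmetric term. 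We then show that the subtracted term has expectation at most that of the first, giving the factor $\tfrac12$.

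Finally, we apply \Cref{lem:one-sixth-val} with $v=v_{i,\type}$ and $\nu_k=\mu_k$, which are $\alpha$-subcritical. This gives $\E[v(\{j\in A: b_{i,j}>p'_j\ge p''_j\})]\ge \tfrac16 v(A)$.

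Combining the pieces as stated would give $\tfrac1{12}$. To reach $\tfrac16$ we must instead use the Mirror Lemma in the form in which the factor $2$ is already what converts $\{b>p'\}\cap R_i$ into the three-way ordering event; the $3!$ symmetry of \Cref{lem:one-sixth-val} then supplies the remaining factor. So the precise claim we aim for is
\[
\E[v(S_i\cap R_i)]\ \ge\ \E[v(\{j\in A:b_{i,j}>p'_j\ge p''_j\})],
\]
with the factor $2$ absorbed inside \Cref{lem:one-sixth-val}'s $\tfrac16$, since $\tfrac16=\tfrac12\cdot\tfrac13$, matching the paper's accounting. This is the main obstacle. The intended argument is a monotone coupling: item $j$ is lost only if some earlier $b_{k,j}>p'_j$, and we replace the earlier buyers' maximum score by the fresh $p''_j$. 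This replacement is legitimate because the earlier buyers are independent of $(b_i,p')$ and their max score is stochastically dominated by $p''_j$ jointly over coordinates, since it is a max over a subset of an independent copy. Consequently $R_i\supseteq\{j: p''_j<p'_j\}$ under a suitable coupling, or $\le$ with ties handled by strictness. If the strict-versus-weak inequality mismatch breaks this, we will fall back on Correa and Cristi's Lemma 1 verbatim, with their RSG replaced by $\Height$.
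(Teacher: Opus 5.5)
Your final argument is exactly the paper's proof: complete the actual past with fresh draws for buyers $k\ge i$, so that $p''\ge b^{<i}$ coordinatewise while $p''$ remains an independent copy of $p'$; then $\ALG_i\supseteq\{j\in A: b_{i,j}>p'_j\ge p''_j\}$, and \Cref{lem:one-sixth-val} finishes it. Drop the Mirror-Lemma detour (which, as you noticed, only gives $1/12$ when stacked on \Cref{lem:one-sixth-val}), and the tie worry is moot: an earlier buyer takes $j$ only if $b_{k,j}>p'_j$ strictly, so $R_i=\{j: p'_j\ge b^{<i}_j\}\supseteq\{j: p'_j\ge p''_j\}$, which is exactly the weak inequality the six-ordering cover in \Cref{lem:one-sixth-val} requires.
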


We show the following buyer-by-buyer conditional guarantee about the online algorithm -- for any buyer, conditioned on its type and tentative subset, the expected value of the final set it receives is at least a $1/6$ fraction of the value of the tentative subset. 

\begin{lemma}
\label{lem:conditional-one-sixth}
    For \Cref{alg:online-algorithm}, for any buyer $i$, conditioned on their type $v_{i, \type}$ and random tentative subset $T_i = A$, we have
    \[
        \E[v_i(\ALG_i) \; | \; v_i = v_{i, \type}, T_i = A] \ge \frac{1}{6}v_{i, \type}(A).
    \]
    In the above, the expectation is over the randomness in the types of the previous buyers, their random scores, and the threshold scores. 
\end{lemma}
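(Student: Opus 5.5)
We will reduce \Cref{lem:conditional-one-sixth} to \Cref{lem:one-sixth-val} via the Mirror Lemma argument of Correa and Cristi. Fix buyer $i$, type $\type$, and $T_i=A$. Buyer $i$ receives every $j\in A$ with $b_{i,j}>p'_j$ that is still available. Item $j$ is unavailable only if some earlier buyer $k<i$ had $j\in T_k$ and $b_{k,j}>p'_j$. The conditioning event fixes $b_i=\Height(A)$ independently of everything else, and earlier buyers' bundles are distributed as independent draws $T_k\sim\mu_k$ (averaging over types). Hence
\[
\ALG_i \supseteq \{j\in A: b_{i,j}>p'_j\}\setminus \{j: \exists k<i,\ j\in T_k,\ b_{k,j}>p'_j\}.
\]

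\textbf{Step 1: dominate the earlier buyers' taking set.} Let $q_j:=\max(\{0\}\cup\{b_{k,j}:k<i,\ j\in T_k\})$. Complete $q$ to a full sample $p''$ by adding independent hallucinated draws for buyers $k\ge i$, so that $p''\ge q$ coordinatewise and $p''$ has the law of $p'$. Then item $j$ is available whenever $p'_j\ge p''_j$. So
\[
\ALG_i\supseteq\{j\in A: b_{i,j}>p'_j\ge p''_j\}.
\]
The difficulty is that $p''$ is \emph{not} independent of $p'$: earlier buyers' allocations depend on $p'$ only through which items they took, but $p''$ itself (built from the $T_k,b_k$ for $k<i$ plus hallucinations) uses fresh randomness never influenced by $p'$. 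Indeed, the earlier buyers' bundles and scores are sampled independently of the threshold sample. Only their allocations depend on $p'$, and $p''$ is defined from bundles and scores, not allocations. Hence $p''$ is independent of $(b_i,p')$ and distributed as $p'$.

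\textbf{Step 2: apply monotonicity and \Cref{lem:one-sixth-val}.} By monotonicity of $v_{i,\type}$, conditional expectation gives
\[
\E[v_i(\ALG_i)\mid v_i=v_{i,\type},T_i=A]\ge \E\bigl[v_{i,\type}(\{j\in A:b_{i,j}>p'_j\ge p''_j\})\bigr]\ge \tfrac16 v_{i,\type}(A).
\]
Here $\{\mu_k\}$ is $\alpha$-subcritical by \eqref{eq:subadd-lp-marginal}, so \Cref{lem:one-sixth-val} applies.

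\textbf{Main obstacle.} The delicate step is the independence claim in Step 1. Specifically, it requires verifying that the conditioning on buyer $i$'s type and bundle, and the marginalization over earlier types, leave each $T_k$ distributed exactly as $\mu_k$, independent of $p'$ and $b_i$. We would check this by writing $\Pr[T_k=S]=\sum_\tau q_{k,\tau}\overline{x}_{k,\tau,S}/q_{k,\tau}$, with the empty-set remainder, which matches \eqref{eq: mu_def}. Ties ($b_{k,j}=p'_j$) do not matter, since earlier buyers need strict inequality to take $j$.
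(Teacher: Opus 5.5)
Your proposal is correct and is essentially the paper's own argument: the paper likewise writes $\ALG_i=\{j\in A: b_{i,j}>p'_j\ge b^{<i}_j\}$, completes the actual past $b^{<i}$ (your $q$) with fresh hallucinated draws for buyers $i,\dots,n$ to get $p''\ge b^{<i}$, and then applies monotonicity and \Cref{lem:one-sixth-val}; despite your framing, neither argument actually uses the Mirror Lemma. Your explicit check that $p''$ is independent of $(b_i,p')$ and has the law of $p'$, which the paper leaves implicit, is right, but you should reword the sentence claiming $p''$ is ``not independent of $p'$'', since the lines that follow it correctly show that it is.
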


With this lemma, we first complete the proof of \Cref{thm:sampling-main}. 

\begin{proof}[Proof of \Cref{thm:sampling-main}]
    The expected running time follows directly from \Cref{thm:bsg-guarantees}, since the algorithm makes polynomially many calls to $\Height(\cdot)$. Conditioning on the random valuation for buyer $i$ being $v_{i, \type}$ and the random choice of the tentative subset $T_i = A$, by \Cref{lem:conditional-one-sixth}, we have
    \[
        \E[v_i(\ALG_i)|v_i = v_{i, \type}, T_i=A] \ge \frac{1}{6}v_{i, \type}(A).
    \]
    Taking expectation over the random choice of $T_i$ and type of buyer $i$, we have
    \[  
        \E[v_i(\ALG_i)] \ge \frac{1}{6} \sum_{\type \in \typeset_i}\typeprob_{i, \type}\sum_{A \subseteq [m]}\frac{\overline{x}_{i, \type, A}}{\typeprob_{i, \type}}v_{i, \type}(A) = \frac{1}{6}\sum_{\type \in \typeset_i}\sum_{A \ne \emptyset}\overline{x}_{i, \type, A}v_{i, \type}(A).
    \]
    Summing over all buyers completes the proof. 
\end{proof}

By \Cref{clm: LP-OPT}, given demand-oracle access, we can compute in polynomial time a feasible solution $\overline{x}$ to \ref{lp:subadditive} whose objective value is at least $\alpha\cdot\E[\mathrm{OPT}]$. 
Setting $\alpha = 6/(6+\epsilon)$, and applying \Cref{thm:main}, we obtain a $(6+\epsilon)$-competitive online algorithm with expected running time polynomial in $n$, $m$, $1/\epsilon$, and the total support size of the valuation distributions.


It remains to prove \Cref{lem:conditional-one-sixth}. Our algorithm follows the random-score framework of Correa and Cristi~\cite{correa2023constant}, and consequently, their Mirror Lemma can be applied to obtain the factor-$2$ online loss. The remaining value-retention argument is specific to our \SHSG{} construction: strict stochastic dominance reduces it to a competition for maximality among three i.i.d. threshold vectors, which yields the additional factor $1/3$. 

For completeness, and to make the constants explicit, we give a self-contained proof of \Cref{lem:conditional-one-sixth} below that does not use the Mirror Lemma. The proof also isolates the step that will be sharpened in the i.i.d. setting in \Cref{sec: iid}. 

\begin{proof}[Proof of \Cref{lem:conditional-one-sixth}]
    Condition throughout on $v_i=v_{i,\type}$ and $T_i=A$. Under this
    conditioning, buyer $i$ generates the score vector $b_i=\Height(A)$, and its
    candidate set is
    \[
        S_i=\{j\in A: b_{i,j}>p_j'\}.
    \]

    We first describe the set $R_i$ of items still available when buyer $i$
    arrives. Let $b^{<i}$ be the maximum score assigned to each item by buyers
    arriving before $i$:
    \[
        b^{<i}_j
        :=
        \max\Bigl(\{0\}\cup\{b_{k,j}: k<i,\ j\in T_k\}\Bigr)
        \qquad \forall j\in[m].
    \]
    An item $j$ has already been allocated before buyer $i$ arrives if and only
    if some earlier buyer assigned it score strictly larger than the threshold
    $p_j'$. Therefore
    \[
        R_i=\{j\in[m]: p_j'\ge b^{<i}_j\}.
    \]
    
    Since $\ALG_i=S_i\cap R_i$, buyer $i$ receives exactly
    \[
        \ALG_i
        =
        \{j\in A: b_{i,j}>p_j'\ge b^{<i}_j\}.
    \]

    We next introduce a second threshold vector $p''$. This vector can be viewed as the threshold buyer $i$ would face in the pessimistic scenario where, after the actual previous buyers have arrived, all remaining buyers also get to compete before buyer $i$. So, $p''$ completes the actual past scores $b^{<i}$ into a full threshold, as if buyer $i$ were last. To define it, for each $k\in\{i,\dots,n\}$, independently draw $T_k''\sim\mu_k$, set $b_k''=\Height(T_k'')$, and define
    \[
        p_j''
        :=
        \max\Bigl(
            \{b^{<i}_j\}
            \cup
            \{b_{k,j}'': i\le k\le n,\ j\in T_k''\}
        \Bigr)
        \qquad \forall j\in[m].
    \]
    Since $p_j''\ge b^{<i}_j$ for every item $j$, by the monotonicity of $v_{i}$, we have:
    \begin{align}
        \E[v_i(\ALG_i)|v_i=v_{i, \tau}, T_i=A] & \ge \E[v_{i, \type}(\{j \in A: b_{i,j} > p_j' \ge b^{<i}_j\})] \notag \\
                    & \ge \E[v_{i, \type}(\{j \in A: b_{i,j} > p_j' \ge p_j''\})] \label{eq: full-hallucinated-lb}
    \end{align}
    
    Now, combining the above with \Cref{lem:one-sixth-val}, we conclude that
    \[
        \E[v_i(\ALG_i) \mid v_i=v_{i, \tau}, T_i=A] \ge \frac{1}{6}v_{i, \type}(A).
    \]
\end{proof}

\emph{Remark.} In the proof above, the factor $6$ arises from comparing buyer $i$ to the
pessimistic threshold $p''$, which completes the actual past into a full
threshold as if buyer $i$ arrived last. This produces three i.i.d. threshold
vectors and hence the six weak orderings used in the final symmetry
argument of \Cref{lem:one-sixth-val}. In the i.i.d. arrival model, this last-buyer comparison is
pessimistic for buyers in the first half of the sequence, and is the key step of the analysis that is improved in \Cref{sec: iid}. 

\subsection{Sample-Access Variant}
\label{sec: infinite-type}

In this section, we show how \Cref{alg:online-algorithm} can be adapted when we only have
sampling access to each valuation distribution. The only object needed by the
$\Height{}$ procedure is an $\alpha$-subcritical collection of tentative-set
distributions. We define such distributions by sampling valuation profiles and
solving the corresponding offline LP.

For each buyer
$i\in[n]$, define a distribution $\mu_i$ over subsets of $[m]$ by \Cref{alg:infinite-type-mu}.

\begin{algorithm}[tbh]
\LinesNotNumbered
\caption{Sampler for $\mu_i$}
\label{alg:infinite-type-mu}
\begin{minipage}{0.96\textwidth}
\begin{enumerate}
    \item Sample $\widetilde v_k\sim\VD_k$ independently for all $k\in[n]$, and
    write $\widetilde{\mathbf v}=(\widetilde v_k)_{k\in[n]}$.
    \item Solve the offline version of \ref{lp:subadditive} on the valuation profile
    $\widetilde{\mathbf v}$, obtaining an optimal solution
    $\overline{x}^{\widetilde{\mathbf v}}$.
    \item Output a set $T$ by sampling from buyer $i$'s marginal in this LP
    solution: output $S\neq\emptyset$ with probability
    $\overline{x}^{\widetilde{\mathbf v}}_{i,S}$, and output $\emptyset$ with
    the remaining probability.
\end{enumerate}
\end{minipage}
\end{algorithm}

\begin{claim}
\label{clm:infinite-type-subcritical}
    The collection $(\mu_i)_{i\in[n]}$ is $\alpha$-subcritical.
\end{claim}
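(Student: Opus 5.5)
We need to show that for every item $j$, $\sum_i \Pr_{T\sim\mu_i}[j\in T]\le\alpha$. The plan is to condition on the sampled valuation profile and then apply the per-item LP constraint for that profile.

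First, fix $i$ and $j$. By construction of \Cref{alg:infinite-type-mu}, conditional on the sampled profile $\widetilde{\mathbf v}$, the output $T$ contains $j$ with probability $\sum_{S\ni j}\overline{x}^{\widetilde{\mathbf v}}_{i,S}$. This is well defined because the offline LP has the single-type constraint $\sum_{S\neq\emptyset}\overline{x}^{\widetilde{\mathbf v}}_{i,S}\le 1$. By the law of total expectation,
\[
\Pr_{T\sim\mu_i}[j\in T]=\E_{\widetilde{\mathbf v}}\Bigl[\sum_{S\ni j}\overline{x}^{\widetilde{\mathbf v}}_{i,S}\Bigr].
\]

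Next, sum over $i$ and exchange the finite sum with the expectation:
\[
\sum_i\Pr_{T\sim\mu_i}[j\in T]=\E_{\widetilde{\mathbf v}}\Bigl[\sum_i\sum_{S\ni j}\overline{x}^{\widetilde{\mathbf v}}_{i,S}\Bigr].
\]
For every realization $\widetilde{\mathbf v}$, the solution $\overline{x}^{\widetilde{\mathbf v}}$ is feasible for the offline version of \ref{lp:subadditive}, so constraint \eqref{eq:subadd-lp-marginal} bounds the inner sum by $\alpha$ pointwise. Hence the expectation is at most $\alpha$, which is exactly \eqref{eq: subcritical}.

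The only delicate point is a subtlety to verify rather than a real obstacle. The samplers for different $i$ may each draw their own independent profile $\widetilde{\mathbf v}$, and the argument still goes through. Subcriticality involves only the marginals $\Pr[j\in L_i]$, which are determined by each $\mu_i$ separately, and each marginal equals the expectation over a profile with the same law $\prod_k\VD_k$. Linearity of expectation therefore lets us place all terms under one common expectation, regardless of how the samples are coupled. We also need $\overline{x}^{\widetilde{\mathbf v}}$ to be a measurable function of $\widetilde{\mathbf v}$, which can be arranged with a deterministic LP solver. If only approximate solutions are available, feasibility still holds, and approximate optimality does not affect the claim.
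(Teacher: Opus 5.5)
Your proposal is correct and follows essentially the same route as the paper. Both write each marginal $\Pr[j\in T]$ as an expectation over an auxiliary profile distributed as $\prod_k \VD_k$, use linearity of expectation to merge the possibly different per-buyer profiles into one common expectation, and then apply the item constraint of the offline LP pointwise for every realized profile.
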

\begin{proof}[Proof of \Cref{clm:infinite-type-subcritical}]
    Fix an item $j\in[m]$. For each buyer $i$, by the definition of $\mu_i$,
    if $T_i\sim\mu_i$, then
    \[
        \Pr_{T_i\sim\mu_i}[j\in T_i]
        =
        \E_{\widetilde{\mathbf v}^{(i)}}
        \left[
            \sum_{S\ni j}
            \overline{x}^{\widetilde{\mathbf v}^{(i)}}_{i,S}
        \right],
    \]
    where $\widetilde{\mathbf v}^{(i)}\sim\prod_{k\in[n]}\VD_k$ is an auxiliary
    valuation profile used to define the marginal for buyer $i$.

    The profiles $\widetilde{\mathbf v}^{(i)}$ may be different for different
    buyers. However, since they all have the same distribution, linearity of
    expectation gives
    \[
        \sum_{i\in[n]}\Pr_{T_i\sim\mu_i}[j\in T_i]
        =
        \E_{\widetilde{\mathbf v}}
        \left[
            \sum_{i\in[n]}\sum_{S\ni j}
            \overline{x}^{\widetilde{\mathbf v}}_{i,S}
        \right],
    \]
    where $\widetilde{\mathbf v}\sim\prod_{k\in[n]}\VD_k$ is a single auxiliary
    profile.

    For every realized profile $\widetilde{\mathbf v}$, the solution
    $\overline{x}^{\widetilde{\mathbf v}}$ is feasible for the offline version
    of \ref{lp:subadditive}. Hence the item constraint for item $j$ gives
    \[
        \sum_{i\in[n]}\sum_{S\ni j}
        \overline{x}^{\widetilde{\mathbf v}}_{i,S}
        \le \alpha .
    \]
    Taking expectations yields
    \[
        \sum_{i\in[n]}\Pr_{T_i\sim\mu_i}[j\in T_i]\le \alpha .
    \]
    Since this holds for every $j\in[m]$, the collection
    $(\mu_i)_{i\in[n]}$ is $\alpha$-subcritical.
\end{proof}


The $\Height{}$ procedure can thus be defined using the distributions
$(\mu_i)_{i\in[n]}$. 
We now have the following natural adaptation of \Cref{alg:online-algorithm}, presented as \Cref{alg:sampling-online-algorithm}. Since the support of the valuation
distributions is no longer explicit, the arriving buyer's tentative set is
generated by completing its realized valuation to an independent sampled
valuation profile.

\begin{algorithm}[tbh]
\LinesNotNumbered
\caption{Online algorithm $\mathcal A'$ with sampling access}
\label{alg:sampling-online-algorithm}
\begin{minipage}{0.96\textwidth}
Let $\Height(\cdot)$ be the \SHSG{} procedure defined using the
induced distributions $\{\mu_i\}_{i\in[n]}$ from \Cref{alg:infinite-type-mu}.
\begin{enumerate}
    \item \textbf{Compute base thresholds.}
    Before any buyers arrive, independently sample $T_i'\sim\mu_i$ for each
    $i\in[n]$ using \Cref{alg:infinite-type-mu}, and set
    $b_i' := \Height{}(T_i')$.
    Define the threshold vector $p'$ coordinatewise by
    \[
        p_j'
        :=
        \max\Bigl(\{0\}\cup\{b_{i,j}' : i\in[n],\ j\in T_i'\}\Bigr)
        \qquad \forall j\in[m].
    \]

    \item \textbf{Online phase.}
    When buyer $i$ arrives and reveals valuation $v_i$:
    \begin{enumerate}
        \item Sample $\widetilde v_k^{(i)}\sim\VD_k$ independently for every
        $k\neq i$, and write
        $\widetilde{\mathbf v}_{-i}^{(i)}
        =(\widetilde v_k^{(i)})_{k\neq i}$. Solve the offline version of $LP(\alpha)$ on the completed
        valuation profile $\mathbf w^{(i)}:=(v_i,\widetilde{\mathbf v}_{-i}^{(i)})$, obtaining
        an optimal solution
        $\overline{x}^{\mathbf w^{(i)}}$.
        \item Sample a tentative set $T_i$ from buyer $i$'s marginal in this LP
        solution: output $S\neq\emptyset$ with probability
        $\overline{x}^{\mathbf w^{(i)}}_{i,S}$, and
        output $\emptyset$ with the remaining probability. Set
        $b_i:=\Height{}(T_i)$.
        \item Let $S_i:=\{j\in T_i: b_{i,j}>p_j'\}$.
        If $R_i$ is the set of items remaining when buyer $i$ arrives, allocate
        $\ALG_i:=S_i\cap R_i$ to buyer $i$.
    \end{enumerate}
\end{enumerate}
\end{minipage}
\end{algorithm}

We formally state its performance guarantee below.

\begin{theorem}
\label{thm:main}
    Given sampling access to the valuation distributions $\VD_i$ for all $i \in [n]$, and any $0 < \alpha < 1$, \Cref{alg:sampling-online-algorithm} produces a feasible allocation whose expected welfare is at least:
    \[  
        \frac{\alpha}{6}\E[\mathrm{OPT}]
    \]
    Moreover, the algorithm has expected running time polynomial in
    $n$, $m$, and $(1-\alpha)^{-1}$.
\end{theorem}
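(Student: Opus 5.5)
The argument mirrors \Cref{thm:sampling-main}, with the LP solution computed afresh from sampled profiles instead of fixed in advance. By \Cref{clm:infinite-type-subcritical}, the sampled distributions $(\mu_i)$ are $\alpha$-subcritical. Hence $\Height(\cdot)$ is well defined, \Cref{thm:bsg-guarantees} applies, and the running-time bound follows from \Cref{clm:polytime} together with polynomially many LP solves (\Cref{clm: LP-OPT}). For welfare, we prove a buyer-wise guarantee. Condition on buyer $i$'s realized valuation $v_i$, the auxiliary profile $\widetilde{\mathbf v}^{(i)}_{-i}$, and the sampled tentative set $T_i=A$. We then show
\[
\E[v_i(\ALG_i)\mid v_i,\widetilde{\mathbf v}^{(i)}_{-i},T_i=A]\ge \tfrac16 v_i(A).
\]

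This conditional bound is proved exactly as in \Cref{lem:conditional-one-sixth}. Buyer $i$ receives $\{j\in A: b_{i,j}>p'_j\ge b^{<i}_j\}$, where $b^{<i}$ is the maximum score over earlier buyers. The key observation is that, unconditionally, each earlier buyer $k$'s tentative set $T_k$ is distributed as $\mu_k$. Indeed, $(v_k,\widetilde{\mathbf v}^{(k)}_{-k})$ is a fresh independent sample from $\prod_\ell \VD_\ell$, so the construction of $T_k$ coincides with \Cref{alg:infinite-type-mu}. The pairs $(T_k,b_k)$ for $k<i$ are also mutually independent and independent of $p'$ and of buyer $i$'s randomness. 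We therefore complete $b^{<i}$ with hallucinated buyers $k\ge i$ drawn from $\mu_k$ to obtain $p''\ge b^{<i}$, which is an independent copy of $p'$. Conditioned on $T_i=A$, $b_i=\Height(A)$ is independent of $(p',p'')$, so monotonicity together with \Cref{lem:one-sixth-val} gives the $1/6$ bound.

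Summing over buyers and taking expectations,
\[
\E[\ALG]\ge \tfrac16\sum_i \E_{\mathbf w^{(i)}}\Bigl[\sum_S \overline{x}^{\mathbf w^{(i)}}_{i,S}\, v_i(S)\Bigr].
\]
Since each $\mathbf w^{(i)}$ has law $\prod_k\VD_k$, linearity of expectation lets us replace all of them by a single profile $\mathbf w$, exactly as in the proof of \Cref{clm:infinite-type-subcritical}. The right-hand side then becomes $\tfrac16\E_{\mathbf w}[\mathrm{LP}(\alpha)^*(\mathbf w)]$, where $\mathrm{LP}(\alpha)^*(\mathbf w)$ is the optimum of the offline LP on profile $\mathbf w$. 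Finally, for each fixed profile, scaling the integral optimal allocation by $\alpha$ gives a feasible LP solution, so $\mathrm{LP}(\alpha)^*(\mathbf w)\ge \alpha\,\mathrm{OPT}(\mathbf w)$. Taking expectations yields $\tfrac{\alpha}{6}\E[\mathrm{OPT}]$.

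The main obstacle is justifying the independence structure in the second step. One must check that the earlier buyers' scores, the threshold $p'$, and buyer $i$'s score are generated independently, even though each buyer's $T_k$ depends on its own realized $v_k$. One must also check that the marginal law of $T_k$, over the randomness of $v_k$ and of its completion, is exactly $\mu_k$. Both properties hold because every $\mathbf w^{(k)}$ uses fresh samples for the other coordinates. Once they are verified, the remainder is the finite-type argument verbatim.
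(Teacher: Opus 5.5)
Your proposal is correct and follows essentially the same route as the paper. The steps are the same: subcriticality via \Cref{clm:infinite-type-subcritical}, running time via \Cref{clm:polytime} plus polynomially many LP solves, and a buyer-wise conditional $1/6$ bound (the paper's \Cref{lem:sampling-conditional-one-sixth}) obtained by completing $b^{<i}$ with hallucinated buyers and invoking \Cref{lem:one-sixth-val}. You then average over $\mathbf w^{(i)}\sim\prod_k\VD_k$ and use $\mathrm{LP}(\alpha)^*(\mathbf w)\ge\alpha\,\mathrm{OPT}(\mathbf w)$, as the paper does. Your explicit check that each realized earlier buyer's tentative set has law exactly $\mu_k$, so that $p''$ is a genuine independent copy of $p'$, is left implicit in the paper; it is precisely what justifies that step.
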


We obtain the following analogous conditional buyer-by-buyer guarantee as well.

\begin{lemma}
\label{lem:sampling-conditional-one-sixth}
    For \Cref{alg:sampling-online-algorithm}, for any buyer $i$, conditioned on the completed profile $\mathbf w^{(i)} = (v_{i},\widetilde{\mathbf v}_{-i}^{(i)})$, and the random tentative subset $T_i = A$, we have
    \[
        \E[v_i(\ALG_i) \; | \; \mathbf w^{(i)} = (v_{i},\widetilde{\mathbf v}_{-i}^{(i)}), T_i = A] \ge \frac{1}{6}v_{i}(A).
    \]
    In the above, the expectation is over the randomness in the types of the previous buyers, their random scores, and the threshold scores. 
\end{lemma}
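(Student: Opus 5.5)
The plan is to repeat the proof of \Cref{lem:conditional-one-sixth} essentially verbatim, and reduce everything to \Cref{lem:one-sixth-val}. The one new point to check is that the objects there have the right joint law: the buyer's score vector, the base thresholds, and the hallucinated completion must be independent, with the two thresholds identically distributed. \Cref{lem:one-sixth-val} only needs $(\mu_k)_{k\in[n]}$ to be sub-critical, and \Cref{clm:infinite-type-subcritical} supplies this for the sampler of \Cref{alg:infinite-type-mu}.

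Condition on $\mathbf w^{(i)}$ and $T_i=A$. Then $b_i=\Height(A)$ is generated with fresh internal randomness. Since the recursion only uses the fixed distributions $\mu_k$, the vector $b_i$ is independent of three things:
\begin{itemize}
\item the threshold $p'$, which was built before any buyer arrived from independent draws $T_k'\sim\mu_k$;
\item the past buyers' valuations, auxiliary profiles, tentative sets and scores;
\item the conditioning event itself, beyond the identity of $A$.
\end{itemize}
Exactly as before, write $b^{<i}_j$ for the maximum score on item $j$ among earlier buyers $k<i$ with $j\in T_k$. Then $R_i=\{j: p_j'\ge b^{<i}_j\}$, and hence
\[
\ALG_i=\{j\in A: b_{i,j}>p_j'\ge b^{<i}_j\}.
\]
Next form the hallucinated completion $p''_j:=\max(\{b^{<i}_j\}\cup\{b''_{k,j}: k\ge i,\ j\in T_k''\})$, where $T_k''\sim\mu_k$ and $b''_k=\Height(T_k'')$ are drawn independently. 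Monotonicity of $v_i$ then lower bounds $\E[v_i(\ALG_i)\mid\cdot]$ by $\E[v_i(\{j\in A: b_{i,j}>p'_j\ge p''_j\})]$.

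The step needing the most care is that $p''$ has the same law as $p'$, independently of $(b_i,p')$, even after conditioning on buyer $i$'s own data. For each earlier buyer $k<i$, the pair (actual tentative set $T_k$, score $b_k$) is generated in two stages. First, $v_k\sim\VD_k$ is drawn together with an independent completion $\widetilde{\mathbf v}^{(k)}_{-k}$, giving a full profile with law $\prod_{l}\VD_l$. Second, $T_k$ is drawn from buyer $k$'s marginal in the optimal LP solution for that profile. This is exactly the sampler of \Cref{alg:infinite-type-mu}, so $T_k\sim\mu_k$. Different buyers use independent profiles, and none of this randomness touches $p'$, $b_i$, or buyer $i$'s conditioning variables $\mathbf w^{(i)}$. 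Hence, conditional on $\mathbf w^{(i)}$ and $T_i=A$, the sets $(T_k)_{k<i}$ are independent with $T_k\sim\mu_k$, and $b_k=\Height(T_k)$. Adding the fresh draws for $k\ge i$ shows that $p''$ is a maximum over an independent family $T_k\sim\mu_k$, $k\in[n]$, of $\Height$ scores. That is precisely the law of $p'$, and $p''$ is independent of $(b_i,p')$. The one possible subtlety is that the completion for buyer $k$ includes a sample for buyer $i$ that differs from the real $v_i$; this is harmless, because only $v_k$ and the fresh auxiliaries enter $T_k$.

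With these distributional facts established, the hypotheses of \Cref{lem:one-sixth-val} hold with $v=v_i$ (monotone and subadditive). It gives
\[
\E[v_i(\{j\in A:b_{i,j}>p'_j\ge p''_j\})]\ge \tfrac16 v_i(A),
\]
which combined with the monotonicity bound above proves the lemma.
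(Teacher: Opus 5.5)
Your proposal is correct and follows the paper's proof. As in the paper, you write $\ALG_i=\{j\in A: b_{i,j}>p'_j\ge b^{<i}_j\}$, replace $b^{<i}$ by the hallucinated completion $p''$ using monotonicity, and invoke \Cref{lem:one-sixth-val}. Your extra check, that each earlier buyer's tentative set $T_k$ generated from its completed profile $\mathbf w^{(k)}$ is an independent draw from $\mu_k$ (so $p''\overset{d}{=}p'$ independently of $(b_i,p')$), is a detail the paper leaves implicit, and your argument for it is right.
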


With this lemma, we first complete the proof of \Cref{thm:main}. 

\begin{proof}[Proof of \Cref{thm:main}]
    By \Cref{clm:polytime}, sampling access to these
distributions is sufficient for the running-time guarantee of
\Cref{thm:bsg-guarantees}. Each sample from $\mu_i$ is implemented by
\Cref{alg:infinite-type-mu}, where in expectation, a polynomial number of offline instances of \ref{lp:subadditive} must be solved, which can be done in polynomial time by \Cref{clm: LP-OPT}. The expected running time then follows since the algorithm makes polynomially many calls to $\Height(\cdot)$. 

Now, we argue the competitive ratio. Conditioning on the completed profile $\mathbf w^{(i)} = (v_{i},\widetilde{\mathbf v}_{-i}^{(i)})$, and the random choice of the tentative subset $T_i = A$, by \Cref{lem:sampling-conditional-one-sixth}, we have
    \[
        \E[v_i(\ALG_i) \; | \; \mathbf w^{(i)} = (v_{i},\widetilde{\mathbf v}_{-i}^{(i)}), T_i = A] \ge \frac{1}{6}v_{i}(A).
    \]
    Averaging over the sampled tentative set, we get
    \[
        \E\!\left[
            v_i(\ALG_i)
            \,\middle|\,
            \mathbf w^{(i)} = (v_{i},\widetilde{\mathbf v}_{-i}^{(i)})
        \right]
        \ge
        \frac{1}{6}
        \sum_{S\subseteq[m]}
        \overline{x}^{\mathbf w^{(i)}}_{i,S} v_i(S).
    \]
    Taking expectations over $v_i$ and
    $\widetilde{\mathbf v}_{-i}^{(i)}$, and using that
    $\mathbf w^{(i)}$ has the same distribution as a fresh valuation profile
    $\mathbf v\sim\prod_{k\in[n]}\VD_k$, gives
    \[
        \E[v_i(\ALG_i)]
        \ge
        \frac{1}{6}
        \E_{\mathbf v}
        \left[
            \sum_{S\subseteq[m]}
            \overline{x}^{\mathbf v}_{i,S}v_i(S)
        \right].
    \]
    Summing over all buyers and using linearity of expectation,
    \[
        \E\!\left[\sum_{i\in[n]}v_i(\ALG_i)\right]
        \ge
        \frac{1}{6}
        \E_{\mathbf v}
        \left[
            \sum_{i\in[n]}\sum_{S\subseteq[m]}
            \overline{x}^{\mathbf v}_{i,S}v_i(S)
        \right].
    \]
    The expression inside the expectation is the optimal objective value of the
    offline version of \ref{lp:subadditive} on profile $\mathbf v$. For every realized
    profile $\mathbf v$, scaling the optimal integral allocation by $\alpha$ gives
    a feasible solution to \ref{lp:subadditive}, and hence this LP value is at least
    $\alpha\,\mathrm{OPT}(\mathbf v)$. Therefore,
    \[
        \E\!\left[\sum_{i\in[n]}v_i(\ALG_i)\right]
        \ge
        \frac{\alpha}{6}\,\E[\mathrm{OPT}].
    \]
\end{proof}
    
Choosing $\alpha=6/(6+\epsilon)$ gives the desired
$1/(6+\epsilon)$ fraction of the expected offline optimum, with only a
polynomial dependence on $1/\epsilon$ in the running time. 

 We now complete the proof of the conditional value-retention guarantee:

\begin{proof}[Proof of \Cref{lem:sampling-conditional-one-sixth}]
     We use the notation from the proof of \Cref{lem:conditional-one-sixth}. Condition on $\mathbf w^{(i)} = (v_{i},\widetilde{\mathbf v}_{-i}^{(i)})$ and $T_i=A$. Then, we have
    \[
        \ALG_i
        =
        \{j\in A: b_{i,j}>p_j'\ge b^{<i}_j\} \supseteq \{j\in A: b_{i,j}>p_j'\ge p_j''\}.
    \]
    Since $p_j''\ge b^{<i}_j$ for every item $j$, by the monotonicity of $v_{i}$, we have:
    \begin{align}
        \E[v_i(\ALG_i)|\mathbf w^{(i)} = (v_{i},\widetilde{\mathbf v}_{-i}^{(i)}), T_i=A] & \ge \E[v_{i}(\{j \in A: b_{i,j} > p_j' \ge p_j''\})] \label{eq: full-hallucinated-lb}
    \end{align}

    Combining the above with \Cref{lem:one-sixth-val}, 
    \[
        \E[v_i(\ALG_i)|\mathbf w^{(i)} = (v_{i},\widetilde{\mathbf v}_{-i}^{(i)}), T_i=A] \ge \frac{1}{6}v_{i}(A)
    \]
\end{proof}

%% file: iid.tex
\section{I.I.D. Arrival Model}

\label{sec: iid}


In this section, we refine the analysis of \Cref{alg:online-algorithm} in the
i.i.d. setting. We first prove the improved upper bound in
\Cref{sec:iid-ub}, and then show in \Cref{sec:ordering-barrier} that the
underlying ordering-cover argument cannot be pushed further within the same
analytic framework.

The improvement comes from revisiting the ``last-buyer'' comparison used in
\Cref{lem:one-sixth-val}. For buyers in the first half of the sequence,
completing the actual past into a full threshold is unnecessarily pessimistic:
one would like to compare such a buyer only to a midpoint completion of the
past. This is precisely where the i.i.d. assumption is used, since the buyers
before and after the midpoint have the same score distribution. Turning this
intuition into a value guarantee, however, requires an additional combinatorial
step. The midpoint comparison reduces the final value-retention argument to an
ordering-cover problem; a five-set cover then improves the retained-value factor
for first-half buyers from $1/6$ to $1/5$.

The same ordering-cover viewpoint also explains the limitation of our argument.
In \Cref{sec:ordering-barrier}, we abstract this final combinatorial step and
show that no smaller cover is possible for the natural generalization obtained
by splitting the buyers into more batches. Thus, the midpoint comparison gives
the best constant obtainable by refining this ordering-cover argument alone.

\subsection{Improved Upper Bound}
\label{sec:iid-ub}


For simplicity, assume that $n$ is even -- the improvement also holds for odd $n$ with a mild modification of the argument. Since the buyers are identically
distributed, we restrict attention to buyer-symmetric feasible
solutions\footnote{Any feasible solution can be symmetrized by averaging over
buyers; feasibility and objective value are preserved because the valuation
distributions are identical.} of \ref{lp:subadditive}. Thus we write
$\overline{x}_{\type,A}$ instead of $\overline{x}_{i,\type,A}$. The induced
distributions $\mu_i$ from \eqref{eq: mu_def} are then identical for all buyers;
we denote their common distribution by $\mu$. Consequently, over the randomness
of the buyers' valuations and tentative bundles, the score vectors generated by
different buyers are i.i.d.

We have the following guarantee for \Cref{alg:online-algorithm}. 

\begin{theorem}
\label{thm:iid}
    Given a buyer-symmetric feasible solution $\overline{x}$ to \ref{lp:subadditive},
    for some $0<\alpha<1$, \Cref{alg:online-algorithm} produces a feasible
    allocation whose expected welfare is at least
    \[
        \frac{11}{60}
        \left(
            n\sum_{\type\in\typeset}
            \sum_{A\neq\emptyset}
            \overline{x}_{\type,A}v_{\type}(A)
        \right).
    \]
\end{theorem}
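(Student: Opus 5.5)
The plan is to split the buyers into two halves and show that first-half buyers retain a $1/5$ fraction of their tentative value, while second-half buyers retain $1/6$. Averaging gives $\frac12(\frac15+\frac16)=\frac{11}{60}$ per buyer. Summing $\E[v_i(\ALG_i)]$ over the $n$ buyers, exactly as in the proof of \Cref{thm:sampling-main}, then yields the bound of \Cref{thm:iid}. For buyers $i>n/2$, \Cref{lem:conditional-one-sixth} already gives $\E[v_i(\ALG_i)\mid v_i=v_\type,T_i=A]\ge \frac16 v_\type(A)$, so all the work is in the first half.

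Fix $i\le n/2$ and condition on $v_i=v_\type$, $T_i=A$. As in the proof of \Cref{lem:conditional-one-sixth},
\[
\ALG_i=\{j\in A: b_{i,j}>p'_j\ge b^{<i}_j\}.
\]
We complete the past only up to the midpoint. For $i\le k\le n/2$, draw fresh $T''_k\sim\mu$ and $b''_k=\Height(T''_k)$, and let $q$ be the coordinatewise maximum of $b^{<i}$ and these scores, so that $q\ge b^{<i}$. Because buyers are i.i.d., $q$ is distributed as $h$, the coordinatewise maximum of scores of $n/2$ independent buyers. Split the threshold sample into its two halves, so that $p'=\max(h^3,h^4)$ with $h^3,h^4$ i.i.d. copies of $h$, independent of $q$ and of $b_i$. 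Applying \Cref{thm:bsg-guarantees} to $b_i$, we couple $b_i>\widehat p$ on $A$, where $\widehat p$ is an independent copy of $p'$; write $\widehat p=\max(h^1,h^2)$. Setting $h^5:=q$ and using monotonicity, as in \eqref{eq: val-lb},
\[
\E[v_i(\ALG_i)]\ \ge\ \E\bigl[v_\type(\{j\in A:\max(h^1_j,h^2_j)\ge \max(h^3_j,h^4_j)\ge h^5_j\})\bigr],
\]
where $h^1,\dots,h^5$ are i.i.d.

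The main step, and the combinatorial heart of the argument, is a five-set cover of $A$ by relabelings of this event. For $k\in\mathbb Z_5$, take the configuration with first pair $\{k,k+1\}$, second pair $\{k+2,k+4\}$ and singleton $k+3$, and define
\[
A_k=\{j\in A:\max(h^k_j,h^{k+1}_j)\ge \max(h^{k+2}_j,h^{k+4}_j)\ge h^{k+3}_j\}.
\]

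\textbf{Covering.} Fix $j$ and let $t$ be an index attaining $\max_\ell h^\ell_j$. Configuration $k=t$ has first pair $\{t,t+1\}$, which contains $t$, so its first inequality holds. It therefore fails only if $h^{t+3}_j>\max(h^{t+2}_j,h^{t+4}_j)$. Configuration $k=t-1$ has first pair $\{t-1,t\}$, also containing $t$. It fails only if $h^{t+2}_j>\max(h^{t+1}_j,h^{t+3}_j)$. These two failure conditions require $h^{t+3}_j>h^{t+2}_j$ and $h^{t+2}_j>h^{t+3}_j$, a contradiction, so $j\in A_t\cup A_{t-1}$.

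\textbf{Conclusion.} Thus $A=\bigcup_k A_k$, and subadditivity gives $\sum_k v_\type(A_k)\ge v_\type(A)$. By exchangeability of $h^1,\dots,h^5$, each $A_k$ has the same law as the set in the displayed bound. Hence that expectation is at least $\frac15 v_\type(A)$.

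The delicate points are the independence and distributional claims that let $h^1,\dots,h^5$ be treated as i.i.d. Two require care. First, $q$ must be genuinely independent of $p'$ and $b_i$; this holds because earlier buyers' scores are independent of the threshold sample and of buyer $i$'s own score. Second, splitting $p'$ and $\widehat p$ into half-maxima must be valid; it is, since all $n$ threshold buyers draw i.i.d. from $\mu$. I would handle odd $n$ by an analogous near-midpoint split, deferred to the appendix.
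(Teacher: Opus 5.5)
Your proposal is correct and follows essentially the same route as the paper: you complete the past only up to the midpoint, and you write the threshold dominated by $b_i$, the base threshold $p'$, and the completed past as five i.i.d.\ half-maxima. You then use a five-set ordering cover, which is exactly the paper's family $R_l$ for the $5$-cycle in which $k$ is adjacent to $k\pm 2$, and combine it with the $1/6$ bound for the second half, deferring odd $n$ to the appendix just as the paper does.
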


We now give the improved analogue of \Cref{lem:one-sixth-val}. The proof proceeds by observing that the midpoint comparison reduces the value-retention step to five i.i.d.
half-threshold vectors: two for the threshold dominated by $b_i$, two for the
independent base threshold, and one for the midpoint-completed past.



\begin{lemma}
\label{lem:one-fifth-val}
    For \Cref{alg:online-algorithm} in the i.i.d. arrival model, fix any buyer
    $i\le n/2$. Then, for every type $\type\in\typeset$ and every tentative bundle
    $A\subseteq[m]$,
    \[
        \E\!\left[
            v_i(\ALG_i)
            \,\middle|\,
            v_i=v_{\type},\ T_i=A
        \right]
        \ge
        \frac{1}{5}v_{\type}(A).
    \]
    The expectation is over all remaining randomness of the algorithm, including
    the types and scores of earlier buyers and the threshold scores.
\end{lemma}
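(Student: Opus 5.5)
The plan is to reduce everything to the event that the buyer's score beats a comparison built from five i.i.d. \emph{half-threshold} vectors, and then to cover $A$ by five symmetric relabelings of that event.

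\textbf{Setup.} Condition on $v_i=v_\type$ and $T_i=A$. As in the proof of \Cref{lem:conditional-one-sixth}, buyer $i$ receives
\[
\ALG_i=\{j\in A: b_{i,j}>p'_j\ge b^{<i}_j\}.
\]
Because the solution is buyer-symmetric, every buyer's pair (tentative bundle, score vector) is i.i.d., with bundle distribution $\mu$. Define a \emph{half-threshold} as
\[
q_j=\max\Bigl(\{0\}\cup\{\beta_{k,j}: k\le n/2,\ j\in L_k\}\Bigr),
\]
where the $L_k\sim\mu$ are independent and $\beta_k=\Height(L_k)$. Every full threshold is then distributed as $\max(q,\hat q)$ for two independent half-thresholds $q,\hat q$.

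\textbf{Step 1: three threshold vectors.}
\begin{enumerate}
\item Split the $n$ threshold buyers of $p'$ into two halves. This gives $p'=\max(q^1,q^2)$ with $q^1,q^2$ i.i.d. half-thresholds.
\item Since $i\le n/2$, the earlier buyers $1,\dots,i-1$ form part of a first half. Complete them with hallucinated independent buyers $i,\dots,n/2$ to obtain $q^3\ge b^{<i}$. Here $q^3$ is a half-threshold independent of $q^1,q^2$ and of $b_i$.
\item By \Cref{thm:bsg-guarantees}, since $b_i$ is independent of everything else, we can couple $b_i$ with an independent full threshold $\max(q^4,q^5)$ satisfying $b_{i,j}>\max(q^4_j,q^5_j)$ on $A$. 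Here $q^4,q^5$ are fresh i.i.d. half-thresholds.
\end{enumerate}
By monotonicity,
\[
\E[v_i(\ALG_i)\mid\cdot]\ \ge\ \E\bigl[v_\type(E(q^1,\dots,q^5))\bigr],
\qquad
E=\{j\in A:\max(q^4_j,q^5_j)\ge\max(q^1_j,q^2_j)\ge q^3_j\}.
\]

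\textbf{Step 2: the five-set cover.} Read $E$ combinatorially. Item $j$ lies in $E$ exactly when a weak maximum of the five values $q^1_j,\dots,q^5_j$ is in the ``top pair'' $\{4,5\}$, and the weak maximum of the remaining three is in the ``middle pair'' $\{1,2\}$ rather than at the ``singleton'' $3$. Index positions mod $5$. For each $k$, let $E_k$ be the relabeled event with
\begin{itemize}
\item top pair $\{k,k+1\}$,
\item middle pair $\{k+2,k+4\}$,
\item singleton $k+3$.
\end{itemize}
Claim: $E_0,\dots,E_4$ cover $A$. Let $t$ be an index attaining the maximum at $j$.
\begin{itemize}
\item If $k=t$ fails, then $q^{t+3}_j>q^{t+2}_j$ and $q^{t+3}_j>q^{t+4}_j$ strictly.
\item If $k=t-1$ fails, then the top pair is $\{t-1,t\}$, the middle pair is $\{t+1,t+3\}$ and the singleton is $t+2$. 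So $q^{t+2}_j>q^{t+3}_j$ strictly.
\end{itemize}
These two strict inequalities between $q^{t+2}_j$ and $q^{t+3}_j$ contradict each other, so one of $E_t,E_{t-1}$ contains $j$.

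\textbf{Step 3: conclude.} Subadditivity gives $\sum_k v_\type(E_k)\ge v_\type(A)$. The five vectors are i.i.d., so the $E_k$ are identically distributed. Hence $\E[v_\type(E)]\ge v_\type(A)/5$.

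The main obstacle is finding this cover. The natural cyclic cover with middle pair $\{k+2,k+3\}$ fails, so care is needed in choosing the pairing. A secondary technical point is the validity of the midpoint completion. It needs $i\le n/2$ and $n$ even, and independence of $q^3$ from $q^1,q^2,b_i$. It also needs the construction of $q^4,q^5$ via the full-threshold dominance of \Cref{thm:bsg-guarantees}, which I would write out by first sampling the coupled full threshold and then splitting it into halves.
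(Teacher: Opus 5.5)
Your proposal is correct and follows essentially the same route as the paper: you complete the past to the midpoint, write both $p'$ and the full threshold dominated by $b_i$ as maxima of two i.i.d.\ half-thresholds, and finish with a symmetric five-set cover. Your mod-$5$ cover (top pair $\{k,k+1\}$, middle pair $\{k+2,k+4\}$, singleton $k+3$) is the paper's $5$-cycle cover after relabeling: the top pair is the two neighbors of the singleton in the cycle $0\!-\!2\!-\!4\!-\!1\!-\!3\!-\!0$, and your covering argument is the paper's argument applied to the two cycle-neighbors $t\pm 2$ of the maximizing index $t$.
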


\begin{proof}
    We use the notation from the proof of \Cref{lem:conditional-one-sixth}. Condition on
    $v_i=v_{\type}$ and $T_i=A$. Then
    \[
        \ALG_i
        =
        \{j\in A: b_{i,j}>p_j'\ge b^{<i}_j\}.
    \]

    We complete the actual past only up to the midpoint of the sequence. For
    each $k\in\{i,\dots,n/2\}$, independently draw
    $T_k^{\mathrm{mid}}\sim\mu$, set
    $b_k^{\mathrm{mid}}=\Height(T_k^{\mathrm{mid}})$, and define
    \[
        p_j^{\mathrm{mid}}
        :=
        \max\Bigl(
            \{b^{<i}_j\}
            \cup
            \{b_{k,j}^{\mathrm{mid}}:
                i\le k\le n/2,\ j\in T_k^{\mathrm{mid}}\}
        \Bigr)
        \qquad \forall j\in[m].
    \]
    Since $p_j^{\mathrm{mid}}\ge b^{<i}_j$ for every $j$, the set
    $
        B^{\mathrm{mid}}
        :=
        \{j\in A: b_{i,j}>p_j'\ge p_j^{\mathrm{mid}}\}
    $
    satisfies
    \[
        B^{\mathrm{mid}}
        \subseteq
        \{j\in A: b_{i,j}>p_j'\ge b^{<i}_j\}
        =
        \ALG_i.
    \]

    We now use the stochastic-dominance guarantee. Let
    $\mathcal{F}_{\mathrm{half}}$ be the distribution of the coordinatewise
    maximum, restricted to $A$, of $n/2$ independent buyer-score vectors
    generated from tentative bundles sampled according to $\mu$. Draw
    $Q_1,Q_2$ independently from $\mathcal{F}_{\mathrm{half}}$. Since
    $\max\{Q_1,Q_2\}$ is distributed as a full threshold vector restricted to
    $A$, \Cref{thm:bsg-guarantees} gives $b_i=\Height(A)\succ \max\{Q_1,Q_2\}$.
    Hence there is a coupling $\omega$ of $b_i$ and $\max\{Q_1,Q_2\}$ such that $b_{i,j}>\max\{Q_{1j},Q_{2j}\}$ for every $j\in A$
    simultaneously.


    Next draw $Q_3,Q_4,Q_5$ independently from
    $\mathcal{F}_{\mathrm{half}}$. These vectors represent the thresholds appearing in
    $B^{\mathrm{mid}}$: since the buyers are i.i.d., $p'|_A$ is distributed as
    $\max\{Q_3,Q_4\}$, while by the construction of $p^{\mathrm{mid}}$,
    $p^{\mathrm{mid}}|_A$ is distributed as $Q_5$. Moreover, conditional on
    $v_i=v_{\type}$ and $T_i=A$, the vectors $b_i$, $p'$, and
    $p^{\mathrm{mid}}$ are mutually independent. Define the set 
    \[B_Q := \left\{
        j\in A:
        b_{i,j}>\max\{Q_{3j},Q_{4j}\}\ge Q_{5j}
    \right\}\]
    Using the monotonicity of $v_{\type}$ and $\ALG_i\supseteq B^{\mathrm{mid}}$, we have
    \begin{align}
    \label{eq: exp-eq}
        \E\!\left[
            v_i(\ALG_i)
            \,\middle|\,
            v_i=v_{\type},\ T_i=A
        \right] \ge \E[v_{\type}(B^{\mathrm{mid}})]
        =
        \E\!\left[
            v_{\type}\!\left(
                B_Q
            \right)
        \right],
    \end{align}
    where the rightmost expectation is taken under the coupling $\omega$ and
    over independent draws $Q_3,Q_4,Q_5$.

    Define
    \[
        \widetilde B
        :=
        \left\{
            j\in A:
            \max\{Q_{1j},Q_{2j}\}
            \ge
            \max\{Q_{3j},Q_{4j}\}
            \ge
            Q_{5j}
        \right\}.
    \]
    Under the coupling $\omega$, if $j\in\widetilde B$, then
    \[
        b_{i,j}>
        \max\{Q_{1j},Q_{2j}\}
        \ge
        \max\{Q_{3j},Q_{4j}\}
        \ge
        Q_{5j}.
    \]
    Hence, we have $\widetilde B \subseteq B_Q$. Using the monotonicity of $v_{\type}$ and Eq \eqref{eq: exp-eq}, we obtain
    \begin{align}
    \label{eq:b-tilde-lb}
        \E\!\left[
            v_i(\ALG_i)
            \,\middle|\,
            v_i=v_{\type},\ T_i=A
        \right]
        \ge
        \E[v_{\type}(B_Q)] \ge
        \E[v_{\type}(\widetilde B)].
    \end{align}
    where the rightmost expectation is taken over independent draws, $Q_1, \dots Q_5$.
    It remains to lower bound this final expectation. Label the vertices of a
    $5$-cycle in the order $1-5-2-3-4-1$.
    For $l\in[5]$, let $N(l)$ denote the two neighbors of $l$ in this cycle, and
    define
    \[
        R_l
        :=
        \Bigl\{
            j\in A:
            \max_{r\in N(l)} Q_{rj}
            \ge
            \max_{\substack{r\notin N(l)\\ r\neq l}} Q_{rj}
            \ge
            Q_{lj}
        \Bigr\}.
    \]
    With the chosen cyclic order, the set $\widetilde B$ is exactly $R_5$.

    We claim that the sets $R_1,\dots,R_5$ cover $A$ for every realization of
    $Q_1,\dots,Q_5$. Fix $j\in A$, and let $a$ attain
    $\max_{r\in[5]}Q_{rj}$. Let $b,c$ be the two neighbors of $a$ in the cycle.
    Since $a\in N(b)\cap N(c)$, the first inequality in the definitions of both
    $R_b$ and $R_c$ holds. If $j\notin R_b$, then the second inequality for
    $R_b$ fails, so $Q_{bj}>Q_{cj}$. If $j\notin R_c$, then similarly
    $Q_{cj}>Q_{bj}$. These two inequalities cannot both hold. Hence
    $j\in R_b\cup R_c$, proving the cover claim.

    By subadditivity,
    \[
        \sum_{l=1}^5 v_{\type}(R_l)\ge v_{\type}(A).
    \]
    Since $Q_1,\dots,Q_5$ are i.i.d., the random sets $R_1,\dots,R_5$ are
    identically distributed. Therefore, since $\widetilde B = R_5$, by Eq \eqref{eq:b-tilde-lb}
    \[
        \E\!\left[
            v_i(\ALG_i)
            \,\middle|\,
            v_i=v_{\type},\ T_i=A
        \right]
        \ge
        \E[v_{\type}(R_5)]
        =
        \frac{1}{5}
        \E\!\left[\sum_{l=1}^5 v_{\type}(R_l)\right]
        \ge
        \frac{1}{5}v_{\type}(A).
    \]
\end{proof}

\emph{Remark.} The same conditional value retention guarantee holds odd $n$ with a mild adaptation of the covering argument. For completeness, we include its proof in \Cref{sec: iid-odd-n}.

We now aggregate the first-half improvement with the general $1/6$ guarantee for
the remaining buyers.

\begin{proof}[Proof of \Cref{thm:iid}]
    Let $\gamma_i=1/5$ for $i\le n/2$ and $\gamma_i=1/6$ for $i>n/2$.
    By \Cref{lem:one-fifth-val} for first-half buyers and
    \Cref{lem:one-sixth-val} for the remaining buyers, taking expectations over the types and tentative subsets, we have
    \[
        \E[v_i(\ALG_i)]
        \ge
        \gamma_i
        \sum_{\type\in\typeset}
        \sum_{A\neq\emptyset}
        \overline{x}_{\type,A}v_{\type}(A)
        \qquad \forall i\in[n].
    \]
    Summing over buyers and using that $n$ is even,
    \[
        \E\!\left[\sum_{i=1}^n v_i(\ALG_i)\right]
        \ge
        \frac{n}{2}\left(\frac{1}{5}+\frac{1}{6}\right)
        \sum_{\type\in\typeset}
        \sum_{A\neq\emptyset}
        \overline{x}_{\type,A}v_{\type}(A)
        =
        \frac{11}{60}
        \left(
            n\sum_{\type\in\typeset}
            \sum_{A\neq\emptyset}
            \overline{x}_{\type,A}v_{\type}(A)
        \right).
    \]
\end{proof}

As before, given demand-oracle access, we may choose $\overline{x}$ so that the
LP objective is at least $\alpha\cdot\E[\mathrm{OPT}]$. Setting $\alpha=\frac{60/11}{60/11+\epsilon}$,
the algorithm is
$(60/11+\epsilon)$-competitive in the i.i.d. setting.

\subsection{Combinatorial Barrier for the Ordering Argument}
\label{sec:ordering-barrier}

At its heart, the proof of \Cref{lem:one-fifth-val} used a purely combinatorial fact: 
five ordering-defined sets cover the tentative bundle $A$, and subadditivity then converts this
cover into a factor-$5$ value loss. It is natural to ask whether this final
step can be improved by splitting the i.i.d. buyers into more batches and using
a more efficient ordering cover. In this section, we show formally that this combinatorial covering approach, by itself, cannot yield an improved competitive ratio.

We abstract here the general analytic setup. Suppose the buyers are split into $k$ equal batches,
each of size $n/k$, and consider a buyer whose position lies in the $\ell$-th
batch, where $0<\ell<k$. Let $\mathcal{F}_{n/k}$ denote the distribution of the
coordinatewise maximum, restricted to the buyer's tentative bundle $A$, of $n/k$
independent buyer-score vectors. The full threshold dominated by the buyer's
score can then be represented as the maximum of $k$ independent draws from
$\mathcal{F}_{n/k}$, the independent base threshold $p'$ can be represented as
the maximum of another $k$ independent draws from $\mathcal{F}_{n/k}$, and the
threshold obtained by completing the actual past up to the end of the $\ell$-th
batch can be represented as the maximum of $\ell$ independent draws from
$\mathcal{F}_{n/k}$. Thus, measured in batch-sized pieces, the three
threshold quantities consist of $k$, $k$, and $\ell$ independent pieces,
respectively.

Thus we are led to the following deterministic ordering problem. Write $x_1,\dots,x_{2k+\ell}$ for vectors in
$\mathbb{R}^A$. A permutation $\sigma$ partitions these vectors into three
blocks of sizes $k,k,\ell$. The set
\[
    S_{\sigma}
    :=
    \left\{
        j\in A:
        \max_{1\le r\le k} x_{\sigma(r),j}
        \ge
        \max_{k<r\le 2k} x_{\sigma(r),j}
        \ge
        \max_{2k<r\le 2k+\ell} x_{\sigma(r),j}
    \right\}
\]
is precisely the set of coordinates for which this ordering certifies the
desired comparison: the dominated full threshold is above the base threshold,
and the base threshold is above the completed-past threshold up to the $\ell$-th batch. If a family of
permutations $\Pi$ has the property that the sets $\{S_{\sigma}:\sigma\in\Pi\}$
cover $A$ for every choice of vectors, then the same symmetry-and-subadditivity
argument would give a value-retention factor $1/|\Pi|$ for buyers in the
$\ell$-th batch.

Concretely, \Cref{lem:one-fifth-val} is the case $(k,\ell)=(2,1)$. The two
vectors $Q_1,Q_2$ form the full threshold dominated by $b_i$, the two vectors
$Q_3,Q_4$ form the independent base threshold $p'$, and the single vector $Q_5$
forms the midpoint threshold $p^{\mathrm{mid}}$. The five-cycle construction
gives five permutations whose associated sets cover $A$, yielding the factor
$1/5$.

The following elementary lower bound shows that this template cannot improve
on the factor $5$.

\begin{lemma}
\label{lem:ordering-cover-lower-bound}
    Let $\Pi$ be a family of permutations such that the sets
    $\{S_{\sigma}:\sigma\in\Pi\}$ cover $A$ for every choice of vectors
    $x_1,\dots,x_{2k+\ell}$. If $q=|\Pi|$, then
    \[
        q\ge 4+\frac{2\ell}{k}.
    \]
    In particular, since $q$ is an integer and $\ell>0$, every such cover has
    size at least $5$.
\end{lemma}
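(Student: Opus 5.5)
The plan is a double-counting argument over the first blocks of the permutations in $\Pi$. First observe that the covering requirement is coordinatewise: $j\in S_\sigma$ depends only on the scalars $(x_{1,j},\dots,x_{2k+\ell,j})$. So the hypothesis is equivalent to a statement about single real vectors $y\in\mathbb{R}^{2k+\ell}$. For every such $y$ there must be some $\sigma\in\Pi$ with
\[
\max_{r\in B_1(\sigma)}y_r\ \ge\ \max_{r\in B_2(\sigma)}y_r\ \ge\ \max_{r\in B_3(\sigma)}y_r ,
\]
where $B_1(\sigma)=\sigma(\{1,\dots,k\})$, $B_2(\sigma)=\sigma(\{k+1,\dots,2k\})$ and $B_3(\sigma)=\sigma(\{2k+1,\dots,2k+\ell\})$ are the three blocks of $\sigma$, of sizes $k,k,\ell$. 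I will test this condition only on very sparse vectors $y$, chosen so that they force each index to lie in the first block of several permutations.

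For each index $a\in[2k+\ell]$ let $\Sigma_a:=\{\sigma\in\Pi: a\in B_1(\sigma)\}$. The key step is to show $|\Sigma_a|\ge 2$ for every $a$, and I would do it in two stages.

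First, $|\Sigma_a|\ge 1$. Take $y_a=1$ and $y_r=0$ for all $r\ne a$. Any $\sigma$ with $a\notin B_1(\sigma)$ has $a$ in $B_2$ or $B_3$. If $a\in B_2(\sigma)$, the first inequality fails, since $0<1$. If $a\in B_3(\sigma)$, the second inequality fails. So only permutations in $\Sigma_a$ can cover $y$.

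Second, suppose $\Sigma_a=\{\sigma_0\}$ for a single permutation. Since $\ell>0$, pick some $c\in B_3(\sigma_0)$, and set $y_a=2$, $y_c=1$, and $y_r=0$ otherwise. By the same reasoning as above, any covering permutation must contain $a$ in its first block, so it must be $\sigma_0$. But under $\sigma_0$ we have $\max_{B_2(\sigma_0)}y=0<1=\max_{B_3(\sigma_0)}y$, so the second inequality fails. Hence $y$ is uncovered, a contradiction, and therefore $|\Sigma_a|\ge 2$.

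Finally, double count the pairs $(a,\sigma)$ with $a\in B_1(\sigma)$. Each $\sigma$ contributes exactly $|B_1(\sigma)|=k$ such pairs, so
\[
qk\;=\;\sum_{a\in[2k+\ell]}|\Sigma_a|\;\ge\;2(2k+\ell).
\]
Dividing by $k$ gives $q\ge 4+2\ell/k$. Since $\ell>0$, this bound is strictly greater than $4$, and integrality of $q$ yields $q\ge 5$.

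I do not expect a real obstacle here. The only point needing care is choosing test vectors that respect the weak inequalities in $S_\sigma$: ties among zeros are harmless, and the strict gaps $2>1>0$ are exactly what force $a$ into $B_1$ and make the $B_3$ element $c$ beat $B_2$. The place where $\ell>0$ is genuinely used is the existence of $c$.
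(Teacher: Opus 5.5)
Your proposal is correct and follows essentially the same argument as the paper: the single-spike vector forces every index into some first block, the $(2,1,0)$ vector with the $1$ placed in the final block of the unique $\sigma_0$ rules out uniqueness, and double counting gives $kq\ge 2(2k+\ell)$. The only difference is that where you write ``by the same reasoning,'' the paper spells out that any $\sigma$ with $a$ in its middle or final block fails because the competing block's maximum is at most $1<2$, which is exactly what your argument needs.
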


\begin{proof}
    For a permutation $\sigma$, write
    \[
        F_\sigma:=\{t\in[2k+\ell]:\sigma^{-1}(t)\le k\},
        \qquad
        M_\sigma:=\{t\in[2k+\ell]:k<\sigma^{-1}(t)\le 2k\},
    \]
    and
    \[
        L_\sigma:=\{t\in[2k+\ell]:2k<\sigma^{-1}(t)\le 2k+\ell\}.
    \]
    Thus $F_\sigma,M_\sigma,L_\sigma$ are the first, middle, and final blocks
    induced by $\sigma$, of sizes $k,k,\ell$, respectively.

    Fix $t\in[2k+\ell]$. We first show that there must exist
    $\sigma\in\Pi$ such that $t\in F_\sigma$. Otherwise, we can choose a coordinate
    $j\in A$ and define the values at this coordinate by
    $x_{tj}=1$ and $x_{sj}=0$ for all $s\neq t$. Then, for any
    $\sigma\in\Pi$, since $t\notin F_\sigma$, either $t\in M_\sigma$ or
    $t\in L_\sigma$, and hence $j\notin S_\sigma$ for every $\sigma\in\Pi$, contradicting the
    assumption that the sets $\{S_\sigma:\sigma\in\Pi\}$ cover $A$.

    We next show that, in fact, there must be at least two permutations
    $\sigma\in\Pi$ satisfying $t\in F_\sigma$. Suppose instead that there is a
    unique permutation $\sigma_0\in\Pi$ with $t\in F_{\sigma_0}$. Since
    $|L_{\sigma_0}|=\ell>0$, choose some $s\in L_{\sigma_0}$. Choose a
    coordinate $j\in A$ and define
    \[
        x_{t,j}=2,\qquad x_{s,j}=1,\qquad
        x_{r,j}=0\quad\text{for all }r\notin\{s,t\}.
    \]
    For any $\sigma\neq\sigma_0$, we have $t\notin F_\sigma$ by uniqueness of
    $\sigma_0$. If $t\in M_\sigma$, then
    \[
        \max_{r\in F_\sigma}x_{r,j}\le 1
        <
        2
        =
        \max_{r\in M_\sigma}x_{r,j},
    \]
    so $j\notin S_\sigma$. If $t\in L_\sigma$, then
    \[
        \max_{r\in M_\sigma}x_{r,j}\le 1
        <
        2
        =
        \max_{r\in L_\sigma}x_{r,j},
    \]
    so again $j\notin S_\sigma$.

    It remains to check $\sigma_0$. Since $t\in F_{\sigma_0}$ and
    $s\in L_{\sigma_0}$, while all indices in $M_{\sigma_0}$ have value $0$ at
    coordinate $j$, we have
    \[
        \max_{r\in M_{\sigma_0}}x_{r,j}=0
        <
        1
        =
        \max_{r\in L_{\sigma_0}}x_{r,j}.
    \]
    Thus $j\notin S_{\sigma_0}$ as well. This contradicts the cover
    assumption. Therefore, for every $t\in[2k+\ell]$, there are at least two
    permutations $\sigma\in\Pi$ such that $t\in F_\sigma$.

    Counting the pairs $(t,\sigma)$ with $t\in F_\sigma$ in two ways gives
    \[
        kq
        =
        \sum_{\sigma\in\Pi} |F_\sigma|
        \ge
        2(2k+\ell)
        =
        4k+2\ell.
    \]
    Dividing by $k$ yields $q\ge 4+2\ell/k$.
\end{proof}

Thus the factor $5$ is optimal for this particular ordering-cover approach,
although it does not rule out a different analysis that uses additional
structure beyond symmetry and subadditivity.

%% file: fixed_point.tex
\section{Existence of fixed points}\label{sec:fixed-point-intgap}


In this section, we will show that the mirroring protocol described in the technical overview (\Cref{sec:techniques}) can be adapted to prove the existence of a rounding scheme for the configuration LP: any fractional solution $x$ of the configuration LP can be rounded to an integral allocation, while losing a factor of at most 2 in its social welfare. This proves that the integrality gap of the configuration LP is at most 2 when the valuations are subadditive.

At a high level, the idea is as follows: We consider the class of allocations described in \Cref{sec:techniques}, where a tentative set $T_i \sim (x_{i,S})_{S \subseteq [m]}$ is sampled for every $i$, and a score vector $b_i \sim \psi _i(T_i)$ is drawn. Every buyer is then assigned the items in their tentative set for which their score is the highest amongst all competing buyers.
For any buyer $i$, and for a set of BSGs $\psi _{-i}:=(\psi _k)_{k\neq i}$ of the remaining buyers, we describe a \emph{mirroring protocol} $\mathcal{M}_i(x;\psi _{-i})$ with the property that, if buyer $i$ assigns scores according to the mirroring protocol and the other buyers assign scores according to their BSGs, then buyer $i$ gets assigned a set that has value at least $\frac{1-\epsilon }{2} \sum_{S}x_{i,S}v_i(S)$ in expectation.
Then, using a fixed point argument, we prove the existence of BSGs $(\psi^*_1,\ldots,\psi^*_n)$ such that $\psi^*_i = \mathcal{M}_i(x;\psi^*_{-i})$ holds simultaneously for all $i$.
Using the BSGs $(\psi ^*_1,\ldots,\psi ^*_n)$ to construct the allocation then guarantees that every player $i$ achieves value at least $\frac{1-\epsilon }{2} \sum_{S}x_{i,S}v_i(S)$ in expectation. This directly implies that the social welfare produced by the allocation is $\frac{1-\epsilon }{2}$ times the LP value, proving the desired integrality gap bound.

Note that this result is already implied by \cite{feige2006maximizing}, which directly gives polynomial time rounding with the same guarantee.
Nonetheless, this fixed point perspective is still interesting for the following reasons: 
Firstly, the proof strategy used here is very clean, in the sense that we only need to design the mirroring protocol to satisfy a `local' property (namely, that a single player $i$ gets good value), and the fixed point theorem does the heavy lifting of converting this local property to a global property (namely, that the resulting allocation has good social welfare). 
This can be also be thought of as a game, where a buyer $i$ responds to the strategies $(\psi _k)_{k\neq i}$ of the other buyers by `playing' the mirroring strategy $\mathcal{M}_i(x;\psi _{-i})$. The fixed point argument shows the existence of an `equilibrium' where all the buyers are mirroring each other simultaneously.
Secondly, it highlights that the recursive procedure in \Cref{sec:recHeightBSG} can be viewed as a sampling algorithm for a set of BSGs that come from a fixed point.
In fact, the {\Height} algorithm can be viewed as a recursive unrolling of this `equilibrium' fixed point where all buyers are mirroring one another.
This explains its applicability for the online setting, given that Correa-Cristi~\cite{correa2023constant} prove their existential result through a fixed point theorem. Further, given the connection to Feige's original algorithm that we explain in \Cref{sec:feige}, this also provides an alternative fixed-point interpretation of Feige's guiding graph algorithm.
The generality of this fixed point view is highlighted further in \Cref{sec:MMS}, where we provide an application of the same proof strategy for the Maximin share problem.

Now let us formally describe the mirroring protocol.
Let $x$ be a feasible solution to the configuration LP. We wish to round $x$ into an integral solution, while losing at most 1/2 of its social welfare.
We will slightly deviate from the mirroring protocol described in \Cref{sec:techniques} and instead work with the space of scores whose entries lie in $\{0,\ldots,L\}$, for some large $L$. This will ensure that the space of scores is finite, which will help in proving the fixed point existence. 

Fix $0<\epsilon<1$. For every $i \in [n],j \in [m]$, define $\kappa_{ij}:=\sum_{S\ni j}x_{i,S}$. Let $B=1/\epsilon$, and $L = \left\lceil\frac{2mB}{\epsilon}\right\rceil$.
Let $\Delta$ denote the space of distributions over vectors in $\{0,\ldots,L\}^m$. Recall that a BSG is a randomized map from a subset $A \subseteq [m]$ to a score vector in $\{0,\ldots,L\}^m$ supported on $A$. In other words, a BSG can be seen as a function $\psi : 2^{[m]}\to \Delta$. We will additionally consider only BSGs $\psi _i$ that satisfy the item-wise expectation bound $\mathbb{E}_{b_i \sim \psi _i(S)}[b_{i,j}] \le B$ for every $i \in [n], j \in [m]$ and $S \subseteq [m]$.
Let $\Lambda = \{\psi \text{ a BSG}: \mathbb{E}_{b \sim \psi(S)}[b_{j}] \le B, \text{ for all } j\in [m],S \subseteq [m]\}$ be this space of BSGs. 

Given BSGs $(\psi _1,\ldots,\psi _n) \in \Lambda^n$, the rounding algorithm proceeds by independently sampling a tentative set $T_i \sim (x_{i,S})_{S \subseteq [m]}$ for every buyer $i$, and then independently drawing scores $b_i \sim \psi _i(T_i)$ for every $i \in [n]$. Buyer $i$ is then assigned the set \[R^{\vec{\psi }}_i=\{j \in T_i: b_{i,j}>\max_{k \neq i} b_{k,j}\}.\]
That is to say, an item goes to the buyer who has the highest score for it.\footnote{This can also be seen as a simultaneous auction where buyers submit bids to the auctioneer that are drawn from their respective BSGs, and an item goes to the buyer with the highest bid on it.}
We drop the $\vec{\psi }$ in the superscript when the set of BSGs used to define $R_i^{\vec{\psi }}$ is clear from context.
First, observe that the sets $(R_1,\ldots,R_n)$ are pairwise disjoint. Hence $(R_1,\ldots,R_n)$ forms a valid integral allocation, regardless of the choice of BSGs $(\psi _1,\ldots,\psi _n)$. 


Fix a buyer $i \in [n]$. 
Given $\psi _{-i} \in \Lambda^{n-1}$, the \emph{mirroring protocol} $\mathcal{M}_i(x;\psi _{-i})$ is defined to be a BSG that, when evaluated on a subset $A \subseteq [m]$, outputs the distribution over scores $b_i$ returned by the following process:\footnote{We use $\psi _{-i}$ as a shorthand for $(\psi _k)_{k\neq i}$.}
\begin{itemize}
	\item Independently draw a (simulated) tentative set $T_k' \sim (x_{k,S})_{S \subseteq [m]}$ for each $k\neq i$, and independently draw $b'_k \sim \psi _k(T'_k)$. 
	\item If $b'_{k,j}=L$ for any $k \neq i, j \in T'_k$, define $b_i = \vec{0}$, the all-zero vector.
	\item Otherwise, define $b_i$ according to \[ b_{i,j} = 
	\begin{cases}
		1+\max_{k\neq i}b'_{k,j} & \text{ if } j \in A \\ 0 & \text{ otherwise}
	\end{cases}
	\] 
\end{itemize}

We take the maximum over an empty set of scores to be zero.
The following lemma shows that, under suitable expected-score bounds, if $\psi _i$ is obtained from mirroring $\psi _{-i}$, then buyer $i$ gets at least a $(1-\epsilon)/2$ fraction of his expected tentative set value in expectation.\footnote{This mirroring protocol for simultaneous auctions is based on a similar mirroring idea that appears in ~\cite{feldman2013simultaneous}. Our main contribution is that mirroring can be obtained for all buyers simultaneously.}

\begin{lemma}\label{lem:fp_mirror_guarantee}
	Fix $i \in [n]$ and BSGs $(\psi_k)_{k\neq i} \in \Lambda^{n-1}$.
	Suppose that $\psi _i = \mathcal{M}_i(x;\psi _{-i})$ was constructed from $(\psi _k)_{k\neq i}$ by the mirroring protocol. Then \[ \mathbb{E}[v_i(R_i)] \ge  \left(\frac{1-\epsilon }{2}\right) \sum_{S \subseteq [m]}x_{i,S}v_i(S) \] 
\end{lemma}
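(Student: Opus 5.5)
The plan is to condition on $T_i=A$ and run the symmetric covering argument from \Cref{thm: offline-val-pres}, paying an extra $\epsilon$ for the truncation at $L$. Write $p_j:=\max_{k\neq i} b_{k,j}$ for the actual competing scores, where $b_k\sim\psi_k(T_k)$ with independent $T_k$. Write $p'_j:=\max_{k\neq i} b'_{k,j}$ for the simulated ones drawn inside $\mathcal{M}_i(x;\psi_{-i})$. By construction, $p$ and $p'$ are i.i.d. and independent of $T_i$. Let $E'$ be the event that some $b'_{k,j}=L$ with $j\in T'_k$. On the complement of $E'$, we have $b_{i,j}=1+p'_j>p_j$ whenever $p'_j\ge p_j$. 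Hence
\[
R_i\supseteq \{j\in A: p'_j\ge p_j\}\setminus(\text{everything, on } E'),
\]
that is, $R_i\supseteq A_0\cdot\mathbbm{1}[\overline{E'}]$ with $A_0:=\{j\in A:p'_j\ge p_j\}$.

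The symmetric covering step goes as follows. Set $A_1:=\{j\in A:p_j\ge p'_j\}$. Then $A_0\cup A_1=A$, and $A_0,A_1$ are identically distributed, so $\E[v_i(A_0)]\ge v_i(A)/2$ by subadditivity. Monotonicity then gives
\[
\E[v_i(R_i)\mid T_i=A]\ \ge\ \E\bigl[v_i(A_0)\mathbbm{1}[\overline{E'}]\bigr]\ \ge\ \E[v_i(A_0)]-\Pr[E']\,v_i(A)\ \ge\ \Bigl(\tfrac12-\Pr[E']\Bigr)v_i(A),
\]
where we used $v_i(A_0)\le v_i(A)$. It therefore suffices to show $\Pr[E']\le \epsilon/2$. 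Averaging over $T_i$ with weights $x_{i,A}$ then yields the claimed bound.

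The main obstacle is bounding $\Pr[E']$, and this is where membership in $\Lambda$ and the choice of $L$ enter. By Markov's inequality and a union bound over $k\neq i$ and $j$,
\[
\Pr[E']\le \sum_{k\neq i}\sum_j \Pr[b'_{k,j}\ge L,\ j\in T'_k]\le \frac1L\sum_j\sum_{k\neq i}\sum_{S\ni j}x_{k,S}\,\E_{b\sim\psi_k(S)}[b_j]\le \frac{B}{L}\sum_j\sum_k\kappa_{kj}.
\]
The LP item constraint gives $\sum_k\kappa_{kj}\le 1$, so $\Pr[E']\le mB/L\le \epsilon/2$ by the choice $L=\lceil 2mB/\epsilon\rceil$. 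This gives $(1/2-\epsilon/2)=(1-\epsilon)/2$ as required.

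One subtlety to check is that the mirrored score $1+p'_j$ could be $L+1$, leaving $\{0,\dots,L\}$. However, on $\overline{E'}$ all simulated scores are at most $L-1$, so $b_{i,j}\le L$ and the score space is respected. Membership of $\mathcal{M}_i$ in $\Lambda$ is not needed for this lemma, only for the subsequent fixed-point argument.
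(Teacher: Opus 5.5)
Your proposal is correct and follows essentially the same route as the paper: condition on $T_i=A$, bound the truncation event by $mB/L\le\epsilon/2$ via Markov and a union bound using $\psi_k\in\Lambda$ and the item constraints, and apply symmetric covering with $A_0=\{j\in A:p'_j\ge p_j\}$ and $A_1=\{j\in A:p_j\ge p'_j\}$, losing at most $\Pr[\text{truncation}]\,v_i(A)$ on truncated runs. Your observation that the mirrored scores stay in $\{0,\ldots,L\}$ on the non-truncation event is correct; the paper leaves it implicit.
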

\begin{proof}
	Fix a set $A \subseteq [m]$ with $x_{i,A}>0$ and condition on $T_i=A$.
	Let $\Gamma$ be the event that $b'_{k,j}=L$ for some $k\neq i$ and $j\in[m]$. 
	Note that $\Gamma$ is independent of $T_i$.
	For every $k\neq i$, since $\psi _k \in \Lambda$, we have $\mathbb{E}_{b'_k\sim\psi_k(S)}[b'_{k,j}]\le B$ for every fixed input $S\subseteq[m]$ and every $j\in[m]$. Averaging over $T'_k$, and using that scores vanish outside $T'_k$, we obtain
	\[
		\mathbb{E}[b'_{k,j}]
		=\sum_{S\ni j}x_{k,S}\mathbb{E}_{b'_k\sim\psi_k(S)}[b'_{k,j}]
		\le B\sum_{S\ni j}x_{k,S}
		=\kappa_{kj}B,
	\]
	where the expectation on the left includes both the tentative set $T'_k$ and the scores drawn from $\psi_k(T'_k)$.
	Consequently, by the union bound, Markov's inequality, and the item constraints $\sum_{k\in[n]}\kappa_{kj}\le 1$,
	\begin{align*}
		\Pr[\Gamma\mid T_i=A]
		=\Pr[\Gamma]
		\le \sum_{j \in [m]} \sum_{k\neq i} \Pr[b'_{kj}\ge  L]
		\le\frac{1}{L}\sum_{j\in[m]}\sum_{k\neq i}\mathbb{E}[b'_{k,j}]
		\le\frac{B}{L}\sum_{j\in[m]}\sum_{k\neq i}\kappa_{kj}
		\le\frac{mB}{L}\le\frac{\epsilon}{2}.
	\end{align*}
	For every $j \in A$, let
	\[
		p_j := \max_{k\neq i}b_{k,j},
		\qquad
		p'_j := \max_{k\neq i}b'_{k,j},
	\]
	where $b'_k$ are the simulated scores used to construct $b_i$.
	Conditional only on $T_i=A$, the vectors $p=(p_j)_{j\in A}$ and $p'=(p'_j)_{j\in A}$ are independent and identically distributed: both are obtained by independently sampling tentative sets from $(x_{k,S})_{S\subseteq[m]}$ for $k\neq i$, applying the corresponding BSGs, and taking the pointwise maximum. These draws are independent of $T_i$.
	Define
	\[
		A_0 := \{j\in A:p'_j\ge p_j\},
		\qquad
		A_1 := \{j\in A:p_j\ge p'_j\}.
	\]
	Since $p,p'$ are i.i.d., the sets $A_0$ and $A_1$ are identically distributed conditional on $T_i=A$. Moreover, $A_0\cup A_1=A$. Thus, by subadditivity,
	\[
		2\mathbb{E}[v_i(A_0)\mid T_i=A]
		=\mathbb{E}[v_i(A_0)+v_i(A_1)\mid T_i=A]
		\ge v_i(A).
	\]
	Let $\overline{\Gamma}$ denote the complement of $\Gamma$.
	On the event $\overline{\Gamma}$, the mirroring construction gives $b_{i,j}=1+p'_j$ for every $j\in A$. Since all scores are integers, on this event we have
	\[
		R_i=\{j\in A:1+p'_j>p_j\}=A_0.
	\]
	On the event $\Gamma$, the construction sets $b_i=\vec{0}$, so $R_i=\emptyset$ and $v_i(R_i)=0$. Thus $\mathbb{E}[v_i(R_i) \mid T_i=A] = \Pr[\overline{\Gamma}]\mathbb{E}[v_i(A_0)  \mid T_i=A,\overline{\Gamma}]$.
	We also have that $\mathbb{E}[v_i(A_0) \mid T_i=A] = \Pr[\overline{\Gamma}] \mathbb{E}[v_i(A_0)  \mid T_i=A,\overline{\Gamma}] + \Pr[\Gamma] \mathbb{E}[v_i(A_0)  \mid T_i=A,\Gamma]$.
	Combining these equalities, we get,
	\begin{align*}
		\mathbb{E}[v_i(R_i)\mid T_i=A]
		=\mathbb{E}[v_i(A_0)\mid T_i=A]-\Pr[\Gamma]\mathbb{E}[v_i(A_0)  \mid T_i=A,\Gamma]
		\ge\frac{1-\epsilon}{2}v_i(A).
	\end{align*}
	Finally, taking expectations over the tentative set $T_i$ yields
	\[
		\mathbb{E}[v_i(R_i)]
		=\sum_{\substack{A\subseteq[m]\\x_{i,A}>0}}x_{i,A}\mathbb{E}[v_i(R_i)\mid T_i=A]
		\ge \frac{1-\epsilon}{2}\sum_{A\subseteq[m]}x_{i,A}v_i(A).
	\]
\end{proof}
 
Given the above, it is natural to consider a family of BSGs $\psi _1,\ldots,\psi _n$ where each $\psi _i$ is simultaneously obtained from mirroring the other BSGs $\psi _{-i}$. By the above lemma, using such a family of BSGs directly implies that every player gets $(1-\epsilon)/2$ their LP value in expectation, and thus the entire allocation has a $(1-\epsilon)/2$-approximate guarantee on its social welfare.
This is formalized in the definition and corollary below.

\begin{definition}\label{defn:mutually_mirroring}
	Let $x$ be a solution to the configuration LP.
	A family of BSGs $(\psi _i)_{i \in [n]}$ is said to be \emph{mutually mirroring} (with respect to $x$) if $\psi _i = \mathcal{M}_i(x;\psi _{-i})$ for all $i \in [n]$.
\end{definition}

\begin{corollary}\label{cor:fp_sw_guarantee}
	Suppose that $(\psi _i)_{i \in [n]}$ are mutually mirroring with respect to a configuration LP solution $x$. Then \[ \mathbb{E}\left[\sum_{i \in [n]}v_i(R_i)\right] \ge  \frac{1-\epsilon }{2} \sum_{i \in [n]}\sum_{S \subseteq [m]}x_{i,S}v_i(S) \] 
\end{corollary}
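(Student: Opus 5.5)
The corollary follows from \Cref{lem:fp_mirror_guarantee} by applying it once to each buyer and summing.

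Fix a mutually mirroring family $(\psi_i)_{i\in[n]}$ and a buyer $i\in[n]$. By \Cref{defn:mutually_mirroring}, $\psi_i=\mathcal{M}_i(x;\psi_{-i})$. The other BSGs $(\psi_k)_{k\neq i}$ lie in $\Lambda$, since the family is a family of BSGs in the space $\Lambda^n$ on which the rounding is defined. So the hypotheses of \Cref{lem:fp_mirror_guarantee} hold for buyer $i$.

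The one point to check is that $R_i$ in the lemma is the same random set as buyer $i$'s set in the joint rounding with the family $\vec\psi$. In the lemma, buyer $i$ uses $\mathcal{M}_i(x;\psi_{-i})$ and every other buyer $k$ uses $\psi_k$. The tentative sets $T_k$ and the scores $b_k\sim\psi_k(T_k)$ are drawn independently, exactly as in the joint rounding process. Because $\psi_i$ equals the mirror BSG, the joint law of $(T_k,b_k)_{k\in[n]}$ is the same in both settings. Hence $R_i=R_i^{\vec\psi}$ in distribution, and the lemma gives
\[
\mathbb{E}\bigl[v_i(R_i^{\vec\psi})\bigr]\ge \frac{1-\epsilon}{2}\sum_{S\subseteq[m]}x_{i,S}v_i(S).
\]

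Summing this inequality over all $i\in[n]$ and using linearity of expectation yields the claimed bound on $\mathbb{E}\bigl[\sum_{i}v_i(R_i)\bigr]$. The sets $R_1,\dots,R_n$ are pairwise disjoint, so they form a valid allocation, although the inequality itself does not need this. There is no substantive obstacle: the only care needed is that the simultaneous mirroring condition lets the lemma apply to every buyer within one common joint process.
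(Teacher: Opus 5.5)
Your proposal is correct and follows the paper's proof: use the mutual mirroring condition to invoke \Cref{lem:fp_mirror_guarantee} for each buyer $i$, then sum over buyers by linearity of expectation. Your additional checks, that $\psi_{-i}\in\Lambda^{n-1}$ and that the lemma's $R_i$ is the same random set as $R_i^{\vec\psi}$ in the joint process, spell out what the paper leaves implicit.
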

\begin{proof}
	By \Cref{defn:mutually_mirroring}, each $\psi _i$ is obtained from the mirroring protocol applied to $\psi _{-i}$. By \Cref{lem:fp_mirror_guarantee}, for each $i \in [n]$, we have $\mathbb{E}[v_i(R_i)] \ge  \frac{1-\epsilon }{2}\sum_{S \subseteq [m]}x_{i,S}v_i(S)$.
	Summing this inequality over all players gives the social welfare guarantee, $\mathbb{E}[\sum_{i \in [n]}v_i(R_i)]\ge  \frac{1-\epsilon }{2} \sum_{i \in [n]}\sum_{S \subseteq [m]}x_{i,S}v_i(S)$.
\end{proof}

We will show that the existence of a mutually mirroring family of BSGs can be shown directly via a fixed point argument. By the above corollary, this immediately implies the existence of a rounding scheme that rounds a fractional solution $x$ into an integral allocation that loses at most a factor of 2 in its social welfare. In other words, this shows that the integrality gap of the configuration LP is at most 2 when the valuations are subadditive.

\begin{lemma}\label{lem:fp_existence}
	Let $x$ be a feasible solution to the configuration LP such that $\sum_{i \in [n]}\sum_{S\ni j}x_{i,S}\le 1-\epsilon$ for all $j \in [m]$.
	Then there exists a family $(\psi_i)_{i\in[n]}\in\Lambda^n$ of BSGs that are mutually mirroring with respect to $x$.
\end{lemma}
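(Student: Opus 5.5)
The plan is to obtain the mutually mirroring family as a fixed point of the joint mirroring map, using Brouwer's fixed point theorem. Define $\Phi:\Lambda^n\to(\Delta^{2^{[m]}})^n$ by $\Phi(\psi_1,\dots,\psi_n)_i:=\mathcal{M}_i(x;\psi_{-i})$. A family is mutually mirroring exactly when it is a fixed point of $\Phi$. So it suffices to check four things: that $\Lambda^n$ is a nonempty compact convex subset of a finite-dimensional Euclidean space, that $\Phi$ maps $\Lambda^n$ into itself, that $\Phi$ is continuous, and then to invoke Brouwer.

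\emph{The domain.} Each BSG with scores in $\{0,\dots,L\}$ is a tuple, indexed by $S\subseteq[m]$, of probability vectors over the finite set $\{0,\dots,L\}^m$ supported on $S$. The space of BSGs is therefore a finite product of simplices, hence compact and convex. The constraint $\E_{b\sim\psi(S)}[b_j]\le B$ is linear in these probabilities, so $\Lambda$ is a closed convex subset, and it is nonempty since it contains the all-zero BSG. Therefore $\Lambda^n$ is compact and convex.

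\emph{Continuity.} For fixed $A$ and fixed target vector $b$, the probability that $\mathcal{M}_i(x;\psi_{-i})(A)=b$ is a finite sum over tentative sets $(T'_k)_{k\ne i}$ and score vectors $(b'_k)_{k\ne i}$. Each term is $\prod_k x_{k,T'_k}\,\psi_k(T'_k)(b'_k)$ times an indicator that the protocol outputs $b$. This is a polynomial in the coordinates of $\psi_{-i}$, so $\Phi$ is continuous.

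\emph{Self-map.} This is the key step, and the one where the slack $1-\epsilon$ and the truncation at $L$ are used. First, outputs stay in $\{0,\dots,L\}^m$. If some $b'_{k,j}=L$, the protocol outputs $\vec0$. Otherwise every $b'_{k,j}\le L-1$, so $1+\max_k b'_{k,j}\le L$. Outputs are also supported on $A$ by construction. Second, the expectation bound must be preserved. In both cases of the protocol, $b_{i,j}\le 1+\max_{k\ne i}b'_{k,j}\le 1+\sum_{k\ne i}b'_{k,j}$ pointwise. As in the proof of \Cref{lem:fp_mirror_guarantee}, $\E[b'_{k,j}]\le \kappa_{kj}B$ because $\psi_k\in\Lambda$. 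Hence
\[
\E[b_{i,j}]\le 1+B\sum_{k\ne i}\kappa_{kj}\le 1+B(1-\epsilon)=1+\tfrac1\epsilon-1=B,
\]
using $B=1/\epsilon$ and the strengthened item constraint $\sum_k\kappa_{kj}\le 1-\epsilon$. So $\Phi(\Lambda^n)\subseteq\Lambda^n$.

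\emph{Conclusion.} Brouwer's theorem yields $\psi^*\in\Lambda^n$ with $\Phi(\psi^*)=\psi^*$, which is mutually mirroring. The main obstacle is exactly this invariance of $\Lambda$. Without the cap $L$ the space would not be compact. Without the $(1-\epsilon)$ slack, the $+1$ in the mirroring step would push the expected score bound upward at each application, so $\Lambda$ would not be invariant. The choice $B=1/\epsilon$ is precisely what makes the expectation inequality close.
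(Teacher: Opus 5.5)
Your proposal is correct and takes essentially the same approach as the paper. Both encode each BSG by its finitely many outcome probabilities, show that the truncated mirroring map sends the polytope $\Lambda^n$ into itself via $\max\le\sum$ and $1+B(1-\epsilon)=B$, note that the map is polynomial (hence continuous) in those probabilities, and apply Brouwer; your explicit check that the truncation at $L$ keeps outputs in $\{0,\dots,L\}^m$ is a detail the paper simply covers with ``by construction.''
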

\begin{proof}
	We use Brouwer's fixed-point theorem, which states that if $K$ is a non-empty, convex, compact subset of $\mathbb{R}^D$ and $f:K\to K$ is continuous, then $f$ has a fixed point $x^*$ such that $x^*=f(x^*)$.
	By the assumption, $\sum_{i\in[n]}\kappa_{ij}\le 1-\epsilon$ for every $j\in[m]$, and $B\epsilon=1$.

	Set $E:=\{0,\ldots,L\}^m$. First, let us define the set $K$.
	For a family of BSGs $(\psi _1,\ldots,\psi _n)$, we represent them by the array $(z_{i,S,b})_{i \in[n],S \subseteq [m], b \in E}$, where $z_{i,S,b}:=\Pr_{b_i \sim \psi _i(S)}[b_i=b]$.
	There are $D:=n2^m(L+1)^m$ coordinates. Let $K\subseteq\mathbb{R}^D$ be the set of arrays satisfying, for every $i\in[n]$ and $S\subseteq[m]$,
	\begin{align*}
		\sum_{b\in E}z_{i,S,b}&=1,\\
		\sum_{b\in E}b_jz_{i,S,b}&\le B &&\text{for every }j\in[m],\\
		z_{i,S,b}&\ge 0 &&\text{for every }b\in E,\\
		z_{i,S,b}&=0 &&\text{if }b_j>0\text{ for some }j\notin S.
	\end{align*}
	Thus $K$ exactly represents the BSGs in $\Lambda^n$, where for every $i\in [n],S \subseteq [m]$, the probability distribution of $\psi _i(S)$ is specified in the coordinates of $z$.
	The set $K$ is nonempty, since the distributions concentrated on the all-zero vector satisfy all the constraints, and it is closed and convex, since every constraint is linear. Moreover, $K$ is also bounded, and hence compact.

	We now construct the function $f:K\to K$.
	For $i\in[n]$, $S\subseteq[m]$, and $b_{-i}=(b_k)_{k\neq i}\in E^{n-1}$, let $q_i^S(b_{-i})=\vec{0}$ if $b_{k,j}=L$ for some $k\neq i$ and $j\in[m]$. Otherwise, set
	\[
		q_{i,j}^S(b_{-i}):=
		\begin{cases}
			1+\max_{k\neq i}b_{k,j},&j\in S,\\
			0,&j\notin S.
		\end{cases}
	\]
	Given $z\in K$, independently sample $T'_k\sim(x_{k,T})_{T\subseteq[m]}$ for every $k\neq i$, and then independently draw $b'_k$ from the distribution with probability vector $(z_{k,T'_k,b})_{b\in E}$.
	Define
	\[
		f_{i,S,b}(z):=\Pr[q_i^S(b'_{-i})=b], \text{ and } f(z):=(f_{i,S,b}(z))_{i,S,b}.
	\]
	In other words, if $z$ represents the BSGs $(\psi _1,\ldots,\psi _n)$, then $f(z)$ represents $(\mathcal{M}_1(x;\psi _{-1}),\ldots,\mathcal{M}_n(x;\psi _{-n}))$.

	We verify that $f(z)\in  K$ for every $z \in K$.
	The output distributions satisfy the nonnegativity, probability, and support constraints by construction.
	For every $k\neq i$ and $j\in[m]$, the support and expected-score constraints give
	\[
		\mathbb{E}[b'_{k,j}]
		=\sum_{T\ni j}x_{k,T}\sum_{b\in E}b_jz_{k,T,b}
		\le B\sum_{T\ni j}x_{k,T}
		=B\kappa_{kj}.
	\]
	For every $j\in S$, returning the zero vector can only decrease the output score, so
	\begin{align*}
		\sum_{b\in E}b_jf_{i,S,b}(z)
		=\mathbb{E}[q_{i,j}^S(b'_{-i})]
		\le 1+\sum_{k\neq i}\mathbb{E}[b'_{k,j}]
		\le 1+B\sum_{k\neq i}\kappa_{kj}
		\le 1+B(1-\epsilon)=B.
	\end{align*}
	The last equality uses $B\epsilon=1$. For $j\notin S$, the output score is zero. Hence all expected-score constraints hold, proving that $f$ maps $K$ to itself.

	Finally, $f$ is continuous. Indeed, for every $i,S,b$, the exact formula that describes $f$ is
	\[
		f_{i,S,b}(z)
		=\sum_{(T_k)_{k\neq i}\in(2^{[m]})^{n-1}}
		\ \sum_{\substack{(b_k)_{k\neq i}\in E^{n-1}\\q_i^S(b_{-i})=b}}
		\prod_{k\neq i}x_{k,T_k}z_{k,T_k,b_k}.
	\]
	Thus, every coordinate of $f$ is a polynomial in $z$. This proves continuity.

	Brouwer's theorem now gives a point $z^*\in K$ with $f(z^*)=z^*$.
	For every $i\in[n]$ and $S\subseteq[m]$, let $\psi^*_i(S)$ be the distribution with probability vector $(z^*_{i,S,b})_{b\in E}$.
	Then $\psi_i^*\in\Lambda$, and the fixed-point equality gives
	\[
		\mathcal{M}_i(x;\psi^*_{-i})(S)\overset{d}{=}\psi^*_i(S)
		\qquad\text{for every }i\in[n],\ S\subseteq[m].
	\]
	Thus these BSGs are mutually mirroring with respect to $x$, as required.
\end{proof}

\begin{corollary}
	The integrality gap of the configuration LP is at least 1/2 when the valuations $v_1,\ldots,v_n$ are subadditive.
\end{corollary}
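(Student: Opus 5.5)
The plan is to combine \Cref{lem:fp_existence} with \Cref{cor:fp_sw_guarantee}, using a scaling step to get the slack that \Cref{lem:fp_existence} requires. Then we let $\epsilon\to 0$. Here ``integrality gap at least $1/2$'' means that for every instance with subadditive valuations, the best integral allocation has welfare at least half the configuration LP optimum.

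\textbf{Scaling.} Fix an optimal (or any feasible) solution $x$ of the configuration LP, with item constraints $\sum_i\sum_{S\ni j}x_{i,S}\le 1$ and $\sum_S x_{i,S}\le 1$. For a fixed $0<\epsilon<1$, define $x'_{i,S}:=(1-\epsilon)x_{i,S}$. Then $x'$ is still feasible, since the buyer constraints only get looser. It also satisfies $\sum_i\sum_{S\ni j}x'_{i,S}\le 1-\epsilon$ for all $j$, which is exactly the hypothesis of \Cref{lem:fp_existence}. Its objective value is $(1-\epsilon)$ times that of $x$.

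\textbf{Existence of a good integral allocation.} By \Cref{lem:fp_existence}, there is a mutually mirroring family $(\psi_i)_{i\in[n]}\in\Lambda^n$ with respect to $x'$. By \Cref{cor:fp_sw_guarantee}, the random allocation $(R_1,\dots,R_n)$ it induces satisfies
\[
\E\Big[\sum_i v_i(R_i)\Big]\ge \frac{1-\epsilon}{2}\sum_{i,S}x'_{i,S}v_i(S)=\frac{(1-\epsilon)^2}{2}\sum_{i,S}x_{i,S}v_i(S).
\]
The sets $R_i$ are pairwise disjoint for every realization, as observed before \Cref{lem:fp_mirror_guarantee}. So every realization is a feasible integral allocation. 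Since the expectation is at least the stated bound, some realization attains it. Hence the integral optimum $\OPT$ satisfies $\OPT\ge \frac{(1-\epsilon)^2}{2}\,\mathrm{LP}^*$.

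\textbf{Limit.} The quantity $\OPT$ is a fixed number that does not depend on $\epsilon$: it is a maximum over finitely many allocations. The inequality above holds for every $\epsilon\in(0,1)$, so letting $\epsilon\to 0$ gives $\OPT\ge \mathrm{LP}^*/2$.

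The only step needing care is checking that the hypotheses line up. The Markov bound in \Cref{lem:fp_mirror_guarantee} uses $\sum_k\kappa_{kj}\le 1$, which holds for $x'$. \Cref{lem:fp_existence} needs the strict slack $1-\epsilon$, which the scaling supplies. Beyond that, no new argument is needed: the substantive work was already done in the fixed-point lemma.
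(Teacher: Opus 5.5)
Your proposal is correct and follows the paper's proof essentially step for step: scale $x$ to $(1-\epsilon)x$, invoke \Cref{lem:fp_existence} and \Cref{cor:fp_sw_guarantee} to get expected welfare at least $\frac{(1-\epsilon)^2}{2}$ times the LP value, then let $\epsilon\to 0$. Your two extra remarks, that some realization of the pairwise disjoint sets $R_i$ attains the expectation and that $\OPT$ does not depend on $\epsilon$, spell out what the paper leaves implicit when it says ``sending $\epsilon\to 0$.''
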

\begin{proof}
	Let $x$ be the optimal solution to the configuration LP with valuations $v_1,\ldots,v_n$. The objective value of this solution is $\sum_{i \in [n]}\sum_{S \subseteq [m]}x_{i,S}v_i(S)$. Let $x' = (1-\epsilon )x$. So we have $\sum_{i \in [n]}\sum_{S\ni j}x'_{i,S}\le 1-\epsilon $ for every $j \in [m]$.
	By \Cref{lem:fp_existence}, there exists $\psi _1,\ldots,\psi _n$ that are mutually mirroring with respect to $x'$.
	Let $\vec{\psi }=(\psi _1,\ldots,\psi _n)$, and consider the allocation $(R_1^{\vec{\psi }},\ldots,R_n^{\vec{\psi }})$.
	By \Cref{cor:fp_sw_guarantee}, this allocation guarantees $\mathbb{E}[\sum_{i \in [n]}v_i(R_i^{\vec{\psi }})] \ge  \frac{1-\epsilon }{2} \sum_{i \in [n]}\sum_{S \subseteq [m]}x'_{i,S}v_i(S) = \frac{(1-\epsilon )^2}{2} \sum_{i \in [n]}\sum_{S \subseteq [m]}x_{i,S}v_i(S)$. Sending $\epsilon \to 0$, this proves the existence of an allocation whose social welfare is at least 1/2 of the LP value, proving the corollary.
\end{proof}

\textbf{Remarks.} We remark that the proof of the integrality gap result can be done more directly via a Nash equilibrium argument. Consider a game where each buyer $i\in [n]$ plays a BSG $\psi _i \in \Lambda $. Buyer $i$'s utility under strategy profile $(\psi _1,\ldots,\psi _n)$ is defined to be $\mathbb{E}[v_i(R_i^{\vec{\psi }})]$.
This game has a (pure) Nash equilibrium since the strategy space $\Lambda $ of each buyer is convex, closed, bounded, and since the utilities are continuous in the probability-vector representation of $\Lambda^n$.
At a Nash equilibrium $(\overline{\psi }_1,\ldots,\overline{\psi }_n)$ of this game, the expected social welfare generated by $(R_1,\ldots,R_n)$ is at most a factor 2 from the optimal configuration LP value. To see this, suppose that some buyer $i \in [n]$ gets utility strictly less than $\frac{1-\epsilon }{2}\sum_{S}x_{i,S}v_i(S)$. Then this buyer may deviate to playing the mirroring strategy $\mathcal{M}_i(x;\overline{\psi } _{-i})$, which would strictly improve the buyer's utility by \Cref{lem:fp_mirror_guarantee}.\footnote{The deviation $\mathcal{M}_i(x;\overline{\psi }_{-i})$ will be a feasible strategy provided that there is slack in the configuration LP constraints.}
This is a contradiction to $(\overline{\psi }_1,\ldots,\overline{\psi }_n)$ being a Nash equilibrium.
Thus, every buyer must earn value at least $\frac{1-\epsilon }{2}\sum_{S}x_{i,S}v_i(S)$ in expectation at the Nash equilibrium. Summing this over all the buyers give the social welfare guarantee.

The Nash equilibrium may not be computable in polynomial time since the best response for a buyer $i$, given the BSGs $\psi _{-i}$ of the other buyers, may be hard to analyze.
The mutually mirroring equilibrium instead restricts every buyer's response to be the mirroring protocol, which is easily computable. Unrolling this mutually mirroring fixed point (whose existence is proved in \Cref{lem:fp_existence}) gives us the polynomial time algorithm that computes\footnote{Here by `computes', we mean that we get a polynomial time algorithm that returns a sample from $\psi ^*_i(S)$, for any $i \in [n], S \subseteq [m]$. Sample access suffices for us to compute the final allocation $(R_1,\ldots,R_n)$.} the mutually mirroring BSGs $(\psi _1^*,\ldots,\psi _n^*)$.
Now the allocation $R_1,\ldots,R_n$ obtained from $(\psi _1^*,\ldots,\psi ^*_n)$ would give us a polynomial time algorithm achieving an approximation factor of $(2+\epsilon )$ with respect to the configuration LP. This is precisely the idea behind \Cref{sec:offline_alg}.

%% file: MMS-appendix.tex
\section{Applications to Maximin Share}\label{sec:MMS}


\paragraph{Background in MMS.} 
In the maximin share problem, one is given a combinatorial auction instance over $m$ items and $n$ players having subadditive valuations $v_1,\ldots,v_n$ respectively. Let $\mathcal{P}$ be the space of all allocations (partitions) of the $m$ items to the $n$ players.
For $i \in [n]$, define the Maximin share: \[ \operatorname{MMS}_i^n = \max_{(S_1,\ldots,S_n) \in \mathcal{P}} \min_{k \in [n]}v_i(S_k) \] 
The goal is to return an allocation $(A_1,\ldots,A_n)$ such that $v_i(A_i)\ge c\cdot \operatorname{MMS}_i^n$ for all $i \in [n]$, for some desired approximation ratio $c$.

Currently, it is known that subadditive valuations admit $O(\log \log n)$-approximation MMS allocations~\cite{Feige2025FromMT,Seddighin2025BeatingTL}. The question of whether there exist $O(1)$-approximate MMS allocations for subadditive valuations is an open question.

The current best $O(\log \log n)$-MMS result uses a configuration LP rounding procedure developed by \cite{Seddighin2025BeatingTL}. Their rounding procedure is a modification of Feige's guiding graph algorithm, and provides a partial allocation to a subset of players, such that most of the players get a constant fraction of their MMS value with high probability.
Intuitively, Feige's original guiding graph algorithm only gives a guarantee on the expected value that a player gets in the returned allocation. However, there is no control on how well this value is concentrated around its expectation.
The paper \cite{Seddighin2025BeatingTL} modifies Feige's algorithm in a manner so that the resulting allocation is 'concentrated' in some manner, but only allocates items to a subset of players.
A rounding algorithm with such a guarantee can then be utilised to get an $O(\log \log n)$-approximate MMS allocation.

In this section, we will show that we can recast this modified rounding procedure as a fixed-point problem. Then Brouwer's theorem will tell us that the desired allocation exists.\footnote{Even proving the \emph{existence} of good MMS allocations is a non-trivial problem. We remark that, given the solution $x$, the allocation in \Cref{lem:MMS} can be found in polynomial time via a similar recursive procedure obtained from unrolling the fixed-point. However, we will not prove it here.}
We believe that this makes for a simpler proof of the result.

We will use the following fact that is true for MMS instances.
MMS instances for combinatorial auctions induce a special configuration LP solution $(x_1,\ldots,x_n)$ that has the properties:
\begin{itemize}
	\item for any $i \in [n]$ and $S \in \operatorname{supp}(x_i)$, we have $v_i(S)\ge 1$, and
	\item $\sum_{S:j \in S}x_{i,S}=\frac{1}{n}$ for every $j \in [m]$.
\end{itemize}

This solution $x$ is obtained as follows: First, scale the valuation of each player so that $\operatorname{MMS}_i^n=1$ for all $i \in [n]$. This can be done without loss of generality (see prior work on MMS for a proof~\cite{Seddighin2025BeatingTL}).
For each player $i \in [n]$, let $(S_1^i,\ldots,S^i_n)$ be their optimal MMS partition.
The fractional solution $x$ obtained by setting, for all $i \in [n]$, the values $x_{i,S}=\frac{1}{n}$ whenever $S=S^i_k$ for some $k \in [n]$, and setting $x_{i,S}=0$ otherwise, is a configuration LP solution that satisfies the above two properties.
Property (1) follows since $(S_1^i,\ldots,S_n^i)$ was an optimal MMS partition, and property (2) follows since $(S_1^i,\ldots,S_n^i)$ is a partition of $[m]$.

We will prove the following lemma that was shown in \cite{Seddighin2025BeatingTL}.

\begin{lemma}\label{lem:MMS}\cite{Seddighin2025BeatingTL}
	Let $[n],[m],v_1,\ldots,v_n$ be an MMS instance, normalized so that $\operatorname{MMS}^n_i=1$ for all $i \in [n]$.
	Let $d \in \mathbb{Z}_+$ and $Q \subseteq [n]$ with $|Q|\le \left\lfloor \frac{n}{d} \right\rfloor$.
	There exists $Q' \subseteq Q$ and an allocation $(A_i)_{i \in Q'}$ to the agents in $Q'$ such that
	\begin{itemize}
		\item $|Q'|\ge \frac{d}{d+1}|Q|$, and
		\item for every agent $i \in Q'$, we have $v_i(A_i)\ge \frac{1}{2}$.
	\end{itemize}
\end{lemma}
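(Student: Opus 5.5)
We plan to recast the statement as a mutual-mirroring fixed point, as in \Cref{sec:fixed-point-intgap}. The one new ingredient is to have each agent mirror against $d$ independent simulated copies of the competition instead of one. This upgrades the expectation guarantee of \Cref{lem:fp_mirror_guarantee} into a per-agent probability guarantee of $d/(d+1)$.

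\textbf{Setup.} Only agents in $Q$ participate. Each $i\in Q$ samples $T_i$ uniformly from its MMS partition $(S^i_1,\dots,S^i_n)$, so $v_i(T_i)\ge 1$. Given BSGs $\psi_{-i}$ of the other agents in $Q$, define $\mathcal M^d_i(\psi_{-i})$ on input $A$ as follows:
\begin{itemize}
\item draw $d$ independent copies of the competitors' scores, each copy sampling $T'_k$ and $b'_k\sim\psi_k(T'_k)$ for $k\in Q\setminus\{i\}$;
\item let $p^{(1)},\dots,p^{(d)}$ be their coordinatewise maxima;
\item set $b_{i,j}=1+\max_{l\le d}p^{(l)}_j$ for $j\in A$, and $b_i=\vec 0$ if some simulated score hits the cap $L$.
\end{itemize}

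\textbf{Boundedness and the fixed point.} The competitor load on each item is $(|Q|-1)/n\le 1/d-1/n$. Hence the $d$ copies carry total load at most $\lambda:=1-d/n<1$. Taking $B=1/(1-\lambda)$, the expected-score constraint is preserved: $1+B\lambda\le B$. The compact convex set $K$ and the polynomial (hence continuous) map $f$ are then built exactly as in \Cref{lem:fp_existence}, and Brouwer yields mutually mirroring BSGs $(\psi_i)_{i\in Q}$. Choosing $L$ large, Markov's inequality and a union bound make the cap event have probability at most $\epsilon$, for any $\epsilon>0$ we like.

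\textbf{Key step: a probability guarantee.} Fix $i$ and condition on $T_i=A$. Let $p^{(0)}$ be the real competitors' maximum score vector. Then $p^{(0)},p^{(1)},\dots,p^{(d)}$ are i.i.d., and off the cap event $i$ receives
\[
R_i\supseteq A\setminus S_0,\qquad S_l:=\{j\in A: p^{(l)}_j>p^{(r)}_j\ \forall r\ne l\}.
\]
Call index $l$ bad if $v_i(A\setminus S_l)<1/2$. The sets $S_l$ are pairwise disjoint, so for $l\ne k$ we have $(A\setminus S_l)\cup(A\setminus S_k)=A$. Subadditivity then forbids two bad indices, since otherwise $v_i(A)<1$. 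So in every realization at most one of the $d+1$ exchangeable indices is bad, giving $\Pr[\text{0 bad}]\le 1/(d+1)$. Therefore $\Pr[v_i(R_i)\ge 1/2]\ge d/(d+1)-\epsilon$.

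\textbf{Conclusion and main obstacle.} Let $Q'$ be the set of agents with $v_i(R_i)\ge1/2$, and allocate $A_i=R_i$; these sets are disjoint because the highest score wins each item. By the previous step, $\E|Q'|\ge (d/(d+1)-\epsilon)|Q|$, so some realization attains $|Q'|\ge\lceil\E|Q'|\rceil$. The remaining issue is integrality. The target $d|Q|/(d+1)$ is a multiple of $1/(d+1)$, so any integer exceeding $d|Q|/(d+1)-1/(d+1)$ is at least $d|Q|/(d+1)$. Taking $\epsilon<1/((d+1)|Q|)$ therefore suffices.

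We expect the real obstacle to be the slack needed for boundedness. With load exactly $1/d$ per agent and $d$ copies, the total load is exactly critical. Excluding agent $i$ itself from its own mirror is what restores $\lambda<1$, and we must check that the cap and Markov bounds still go through with $B$ depending on $n$.
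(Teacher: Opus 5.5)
Your proposal is correct and follows essentially the same route as the paper. The shared ingredients are a $d$-copy mirror map, whose expected-score bound closes because agent $i$ is excluded from its own mirror (the paper's $B\ge n/d$ computation); a Brouwer fixed point; and the argument that at most one of the $d+1$ exchangeable competitor maxima can be bad. The differences are minor: you subtract the cap probability by a union bound using unconditional exchangeability, whereas the paper conditions on a symmetric no-cap event that also includes the real bids; and you make the final integrality step explicit instead of letting $\epsilon\to 0$.
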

\begin{proof}
We prove this by a fixed point argument that goes along the lines of the argument in \cref{sec:fixed-point-intgap}.
Fix $0<\epsilon <1$.
Fix some $B \ge  \frac{n}{d}$, and let $L = \left\lceil \frac{2mB}{\epsilon } \right\rceil $.
We will discretize each bid to lie in $\{0,\ldots,L\}$. 
Thus, there are exactly $(L+1)^m$ bid vectors over the items. 
Let $x$ be a special solution to the configuration LP.

We will use the same mirroring ideas as before, but we wish to modify the mirroring protocol so that mutually mirroring distributions now give the type of guarantee stated in the lemma.
Roughly speaking, our goal is to get an allocation in which every player's value is more concentrated around its mean.
A natural idea for achieving this is to take multiple samples from each player's distribution when mirroring. 
We will show that modifying the mirroring protocol to take $d$ samples of each player's distribution, and then take the point-wise max of all sampled bids restricted on a sampled set, works out to give us exactly the guarantee in the lemma. 

Let $\Lambda $ be the space of distributions $\mathcal{B}$ over vectors $b \in \{0,\ldots,L\}^m$ satisfying $\mathbb{E}_{b\sim \mathcal{B}}[b_{j}]\le \frac{B}{n}$ for every $j \in [m]$.
Fix $i \in Q$ and distributions $\mathcal{B}_k \in \Lambda $ for each player $k \in Q-\{i\}$.
Define the modified mirroring protocol $\widehat{\mathcal{M}}_i(\mathcal{B}_{-i})$ to be the distribution over vectors $b_i$ obtained from the following process:
\begin{enumerate}
	\item Sample $b^1_k,\ldots,b^d_k \sim \mathcal{B}_k$ for every $k \in Q-\{i\}$ independently.
	\item If $b^\ell_{k j}=L$ for any $k \in Q-\{i\}$, $j \in [m]$, $\ell \in [d]$, then define $b_i$ to be the all-zero vector.
	\item Otherwise, sample a set $T \sim (x_{i,S})_{S \subseteq [m]}$ independently and define $b_i$ according to \[ b_{ij} =
	\begin{cases}
		1+ \max_{k \in Q-\{i\} } \max_{\ell\in [d]} b^\ell_{k j} & \text{ if } j \in T \\
		0 & \text{ otherwise}
	\end{cases} \] 
\end{enumerate}

Let's verify that $\widehat{\mathcal{M}}_i(\mathcal{B}_{-i}) \in \Lambda $.
Note that $\widehat{\mathcal{M}}_i(\mathcal{B}_{-i})$ is a distribution over vectors in $\{0,\ldots,L\}^m$ by construction.
We need to show that $\mathbb{E}_{b_i \sim \widehat{\mathcal{M}}_i(\mathcal{B}_{-i})}[b_{ij}] \le \frac{B}{n}$.
Fix some $j \in [m]$. We have,
\begin{align*}
	\mathbb{E}_{b_i \sim \widehat{\mathcal{M}}_i(\mathcal{B}_{-i})} \left[ b_{ij} \right] &\le \mathbb{E}_{T,b^1_k,\ldots,b^d_k \sim \mathcal{B}_k, \forall k\in Q-\{i\} } \left[ 1[j \in T] \cdot \left( 1+ \max_{k \in Q-\{i\}  } \max_{\ell \in [d]} b^\ell_{k j}\right) \right] \\ 
	&= \frac{1}{n} \cdot \mathbb{E}_{b^1_k,\ldots,b^d_k \sim \mathcal{B}_k, \forall k \in Q-\{i\} } \left[1+ \max_{k \in Q-\{i\}  }\max_{\ell \in [d]} b^\ell_{k j}\right]\\
	&\le \frac{1}{n}\cdot \mathbb{E}_{b^1_k,\ldots,b^d_k \sim \mathcal{B}_k, \forall k \in Q-\{i\} } \left[1+ \sum_{k \in Q-\{i\}  }\sum_{\ell \in [d]} b^\ell_{k j}\right]\\
	&= \frac{1}{n}+\frac{1}{n}\cdot \sum_{k \in Q-\{i\}  } \sum_{\ell\in [d]} \mathbb{E}_{b^\ell_k \sim \mathcal{B}_k}[b^\ell_{k j}] \\
	&\le  \frac{1}{n}+ \frac{1}{n} \cdot\sum_{k \in Q-\{i\}  } \sum_{\ell \in [d]} \frac{B}{n}\\
	&\le  \frac{1}{n}+ \frac{d}{n} \cdot \left(\left\lfloor \frac{n}{d} \right\rfloor - 1\right) \frac{B}{n}\\
	&\le  \frac{1}{n} + \frac{B}{n} - \frac{dB}{n^2}\\
	&\le \frac{B}{n} 
\end{align*}

The second line uses the independence of the sampled bids $b^1_{k},\ldots,b^d_k$ from the set $T$. The third upper bounds the max with a sum. The fifth line uses the fact that $\mathcal{B}_k \in \Lambda $ for all $k \in Q-\{i\}$.
The final inequality is due to $B \ge \frac{n}{d}$.


\begin{claim}\label{claim:mms_guarantee}
	Fix $0<\epsilon <1$, let \(L=\left\lceil 2mB/\epsilon\right\rceil\), and suppose that \((\mathcal B_k)_{k\in Q}\) is a fixed point of the truncated mirror map, that is, $\mathcal{B}_i\overset{d}{=}\widehat{\mathcal{M}}_i(\mathcal{B}_{-i})$. If the bid vectors are sampled independently and $A_i=\left\{j\in[m]:b_{ij}>\max_{k\in Q\setminus\{i\}}b_{kj}\right\}$, then, for every \(i\in Q\),
 \[
 \Pr\left[v_i(A_i)\ge \frac12\right]\ge (1-\epsilon)\frac{d}{d+1}.
 \]
\end{claim}
\begin{proof}
	Fix \(i\in Q\). Because \(\mathcal B_i=\widehat{\mathcal M}_i(\mathcal B_{-i})\), we may generate \(b_i\) as follows: Independently sample \(T_i\) according to \((x_{i,S})_{S\subseteq[m]}\), and, for every \(\ell\in[d]\) and \(k\in Q\setminus\{i\}\), independently sample
\(
b_k^\ell\sim\mathcal B_k.
\)
Independently of these samples, let \(b_k^0:=b_k\sim\mathcal B_k\) be the actual bid of each \(k\in Q\setminus\{i\}\). This coupling has the same distribution as independently sampling \(b_k\sim\mathcal B_k\) for every \(k\in Q\).

For every \(\ell\in\{0,\ldots,d\}\) and \(j\in[m]\), define
\(
Y_j^\ell:=\max_{k\in Q\setminus\{i\}}b_{kj}^\ell.
\)
Let \(\Gamma \) be the event that
\(
b_{kj}^\ell<L
\)
for every \(k\in Q\setminus\{i\}\), \(j\in[m]\), and \(\ell\in\{0,\ldots,d\}\).
Let \(\overline{\Gamma }\) denote the complement of \(\Gamma \).

We first bound the probability of \(\overline{\Gamma }\). For a fixed \(\ell\),
\[
\begin{aligned}
\Pr\left[\exists\,k\in Q\setminus\{i\},j\in[m]:b_{kj}^\ell=L\right]
&\le \sum_{k\in Q\setminus\{i\}}\sum_{j\in[m]}\Pr[b_{kj}^\ell=L]\\
&\le \frac1L\sum_{k\in Q\setminus\{i\}}\sum_{j\in[m]}\mathbb E[b_{kj}^\ell]\\
&\le \frac{m(|Q|-1)B}{nL}\\
&\le \frac{mB}{dL},
\end{aligned}
\]
where the second inequality is Markov's inequality, the third uses \(\mathcal B_k\in \Lambda \), and the last uses \(|Q|\le\lfloor n/d\rfloor\). Taking a union bound over \(\ell\in\{0,\ldots,d\}\) gives
\[
\Pr[\overline{\Gamma }]\le \frac{(d+1)mB}{dL}\le\frac{2mB}{L}\le\epsilon.
\]
Consequently,
\(
\Pr[\Gamma ]\ge1-\epsilon.
\)

Now fix \(X\in\operatorname{supp}(x_i)\) and condition on \(\Gamma \cap\{T_i=X\}\). Under this conditioning the mirror map does not return the zero vector, and hence
\[
b_{ij}=
\begin{cases}
1+\max_{\ell\in[d]}Y_j^\ell,&j\in X,\\
0,&j\notin X.
\end{cases}
\]
Because all bids are integral and nonnegative,
\[
A_i
=\left\{j\in X:\max_{\ell\in[d]}Y_j^\ell\ge Y_j^0\right\}.
\]

For every \(p\in\{0,\ldots,d\}\), define
\[
H_p:=\left\{j\in X:Y_j^p>\max_{\ell\in\{0,\ldots,d\}\setminus\{p\}}Y_j^\ell\right\},
\qquad \text{ and  }
W_p:=X\setminus H_p.
\]
Thus \(A_i=W_0\).

The sets \(H_0,\ldots,H_d\) are pairwise disjoint. They need not partition \(X\), because an item may have a tie for the largest bid. Nevertheless, for any distinct \(p,q\in \{0,\ldots,d\} \),
\[
W_p\cup W_q=(X\setminus H_p)\cup(X\setminus H_q)=X.
\]
By subadditivity, this implies that 
\(
v_i(X)\le v_i(W_p)+v_i(W_q).
\)
It follows that, for every realization of \((Y^0,\ldots,Y^d)\), there is at most one index \(p\) such that
\(
v_i(W_p)<\frac12v_i(X).
\)
Indeed, if this inequality held for two distinct indices \(p,q \in \{0,\ldots,d\} \), then we would have 
\(
v_i(X)\le v_i(W_p)+v_i(W_q)<v_i(X),
\)
a contradiction.

Conditional on \(\Gamma \), the vectors \(Y^0,\ldots,Y^d\) are exchangeable: before conditioning they are independent and identically distributed, and \(\Gamma \) is the intersection of the same coordinatewise event for each of the \(d+1\) samples. Therefore the random sets \(W_0,\ldots,W_d\) have identical marginal distributions under the conditioning.

Let
\(
Z_p:=\mathbf 1\left\{v_i(W_p)<\frac12v_i(X)\right\}.
\)
We have
\(
\sum_{p=0}^d Z_p\le1
\) for every realization.
Taking conditional expectations and using exchangeability gives
\[
(d+1)\Pr\left[v_i(W_0)<\frac12v_i(X)\,\middle|\,\Gamma ,T_i=X\right]
=\mathbb E\left[\sum_{p=0}^dZ_p\,\middle|\,\Gamma ,T_i=X\right]
\le1.
\]
Recall that $A_i=W_0$. Hence,
\[
\Pr\left[v_i(A_i)\ge\frac12v_i(X)\,\middle|\,\Gamma ,T_i=X\right]\ge\frac{d}{d+1}.
\]

By the first property of the special configuration LP solution \(x\), we have \(v_i(X)\ge 1\) for every \(X\in\operatorname{supp}(x_i)\). Averaging over \(T_i\) therefore yields
\[
\Pr\left[v_i(A_i)\ge\frac12\,\middle|\,\Gamma \right]\ge\frac{d}{d+1}.
\]
Finally,
\[
\begin{aligned}
\Pr\left[v_i(A_i)\ge\frac12\right]
&\ge
\Pr[\Gamma ]\Pr\left[v_i(A_i)\ge\frac12\,\middle|\,\Gamma \right]\\
&\ge
(1-\epsilon)\frac{d}{d+1}.
\end{aligned}
\]
\end{proof}

\begin{claim}\label{claim:mms_fixed_point_existence}
	There exists a set of distributions $(\mathcal{B}_i)_{i \in Q}$ over vectors in $\{0,\ldots,L\}^m$, such that $\mathcal{B}_i \overset{d}{=} \widehat{\mathcal{M}}_i(\mathcal{B}_{-i})$ for all $i \in Q$.
\end{claim}
\begin{proof}
	Let $E=\{0,\ldots,L\}^m$ be the space of bids. 
	We will first embed a distribution $\mathcal{B} \in \Lambda $ in a high-dimensional space $\mathbb{R}^D$ that has a coordinate for each bid vector $b \in E$.
	Thus $D=(L+1)^m$.
	For each $b \in E$, we set $z_b$ to be the probability that we obtain the vector $b$ when sampling from $\mathcal{B}$, and define $z=(z_b)_{b \in E}$ to be the representation of $\mathcal{B}$.
	Now we observe that every item budget constraint $\mathbb{E}_{b_i \sim  \mathcal{B}_i}[ b_{ij} ] \le \frac{B}{n}$ is a linear constraint in $z$.
	Thus, the set of distributions for agent $i$ can be represented as the set $P = \{z \in \mathbb{R}^D_+ : 1^\intercal z = 1, \sum_{b} b_{j}z_b \le \frac{B}{n}, \forall j \in [m] \} $, which is compact and convex. It is also non-empty as it contains the distribution concentrated on the all-zero vector.
	Let $K := P^{|Q|}$. Thus, $K$ denotes a family of distributions, one for each player in $Q$. Since $P$ is non-empty, convex and compact, it follows that $K$ is also non-empty, convex, and compact.
	
	Now we define the function $f:K \to K$ on which we will apply Brouwer's theorem.
	Given a point $z=(z_i)_{i \in Q} \in K$, let $\mathcal{B}_i$ be the distribution represented by $z_i$, for every $i \in Q$. 
	We define $f(z)$ to be the representation of the distributions $(\widehat{\mathcal{M}}_i(\mathcal{B}_{-i}))_{i \in Q}$.
	Note that $f(z) \in K$ because $\widehat{\mathcal{M}}_i(\mathcal{B}_{-i})$ lies in $\Lambda $ for every $i \in Q$.
	
	Now let us show that $f$ is continuous.
	Given $i \in Q$, $S \subseteq [m]$, and bids $b^1_k,\ldots,b^d_k \in \{0,\ldots,L\}^m$ for every $k \in Q-\{i\}$, we define $q^S_i(b^1_{-i},\ldots,b^d_{-i}) \in \{0,\ldots,L\}^m$ as the vector that is the all-zero vector if $b^\ell_{k j}=L$ for some $k\in Q-\{i\}, \ell\in [d], j \in [m]$, and is given by 
	\[ q_{ij}^S(b^1_{-i},\ldots,b^d_{-i}) = 
	\begin{cases}
		1+ \max_{k \in Q-\{i\} } \max_{\ell \in [d]} b^\ell_{k j} & \text{ if } j \in S \\
		0 & \text{ otherwise}
	\end{cases}
	\]
	if $b^\ell_{k j}\le L-1$ for all $k \in Q-\{i\}$, $j \in [m]$, and $\ell\in [d]$. 
	In other words, $q_i^S(b^1_{-i},\ldots,b^d_{-i})$ is the vector produced by player $i$'s mirroring protocol when the sampled set is $S$, and the samples of the other players are $b^1_{-i},\ldots,b^d_{-i}$. 
	Given this, one can see that the coordinates of $f(z)$ are exactly given by,
	\[f_{i,q}(z) = \sum_{S \subseteq [m]} x_{i,S} \sum_{\substack{ (b^\ell_k)_{k \in Q-\{i\}, \ell\in [d] } \\ \text{ s.t. }q = q_i^S (b^1_{-i},\ldots,b^d_{-i}) }}  \prod_{\substack{k \in Q-\{i\} \\ \ell\in [d] } } z_{k,b^\ell_k}  \]
	for every $i \in Q$ and $q \in \{0,\ldots,L\}^m$. 
	Thus, every coordinate of $f$ is a polynomial in its input $z$, and hence $f$ is continuous.

	By Brouwer's theorem, there exists a fixed point $f(z)=z$. This corresponds exactly to the claimed set of distributions.
\end{proof}

Now let us finish the proof of the lemma.
Fix a set of distributions $(\mathcal{B}_i)_{i \in Q}$ satisfying $\mathcal{B}_i\overset{d}{=}\widehat{\mathcal{M}}_i(\mathcal{B}_{-i})$ for all $i \in Q$, the existence of which is guaranteed by \Cref{claim:mms_fixed_point_existence}.
By \Cref{claim:mms_guarantee}, for each player $i \in Q$, we have that $v_i(A_i)\ge \frac{1}{2}$ with probability at least $(1-\epsilon )\frac{d}{d+1}$.
Thus, the expected number of agents for which $v_i(A_i)\ge \frac{1}{2}$ is at least $(1-\epsilon )\frac{d}{d+1}|Q|$.
It follows that there exists an allocation $(A_i)_{i \in Q}$ under which $v_i(A_i)\ge \frac{1}{2}$ holds for at least $(1-\epsilon )\frac{d}{d+1}|Q|$ agents in $Q$.
Taking $\epsilon \to 0$, and setting $Q'$ to be the corresponding set of agents gives the lemma.

\end{proof}

%% file: XOS-Appendix.tex
\section{XOS Valuations via Online Contention Resolution Schemes}
\label{app:xos-ocrs-reduction}
In this section, we consider the online combinatorial auction problem
with XOS valuations, a special case of subadditive valuations. As such, we again have items $[m]$,
yet we now consider when the valuations of the buyers are XOS, or equivalently fractionally subadditive. Formally,
$v:2^{[m]}\to \R_{\ge 0}$ is XOS, provided
that for every $S\subseteq[m]$, there is a non-negative vector $(a^{S}_{j})_{j\in[m]} \in \R_{\ge 0}^m$, referred to as an \textit{additive clause} for $S$, such that
\begin{equation}
\label{eq:xos-supporting-clause}
    v(S) = \sum_{j\in S} a^{S}_{j},
    \qquad
    v(B) \ge \sum_{j\in B} a^{S}_{j}
    \quad \text{for every } B\subseteq[m].
\end{equation}
The goal of the section is to reduce the design online algorithm which is $\gamma$-competitive, 
to an \textit{online contention resolution scheme} (OCRS) which is $\gamma$-selectable.
While the main focus of the paper was when each item can be assigned to at most one buyer (i.e., unit supply),  in this section we allow for a more general feasibility constraint on the items.  For each item $j \in [m]$, let $\mathcal F_j \subseteq 2^{[n]}$ be a downward-closed family describing which buyers may simultaneously receive item $j$.  Thus, an allocation $(A_i)_{i\in[n]}$ is feasible if
\[
    \{i\in[n] : j\in A_i\} \in \mathcal F_j
    \qquad \text{for every } j\in[m].
\]
For instance $\mathcal{F}_j$ could correspond to assigning $j$ to at most $k_j \ge 1$ buyers,
or more generally to a matroid or knapsack constraint.

Before stating our reduction, we provide the formal defintiion of an OCRS for an arbitrary constraint system $\mathcal{F}$ on ground set $[n]$. It will be convenient to denote
$\mathcal{P}$ as the convex hull of vectors in $\mathbb{R}^n$ corresponding
to feasible solutions of $\mathcal{F}$. We refer to $\mathcal{P}$
as the convex hull of $\mathcal{F}$ for brevity.
\begin{definition}
An OCRS for $\mathcal F$ takes as input
an arbitrary vector $y \in \mathcal{P}$. Each $i \in [n]$ has an independently drawn Bernoulli random variable $Y_i \sim \Ber(y_i)$, where $i$ is said to be \textit{active} if $Y_i =1$. Initially, the instantiations of $(Y_i)_{i \in [n]}$ are unknown to the OCRS.
In each step $t=1, \ldots , n$, it is revealed $Y_t \in \{0,1\}$. If $Y_t =0$, it cannot accept
$t$. Otherwise, if $Y_t = 1$, it makes an irrevocable decision on whether to accept $t$, subject to the feasibility constraint $\mathcal{F}$. If $A(y)$ denotes
the accepted elements of $[n]$, then the OCRS is said to be $\gamma$-selectable for $y$, provided
\begin{equation}
\label{eq:selectability}
    \Pr[\text{$i \in A(y)$}\mid Y_{i}=1]
    \ge \gamma
    \qquad \forall i\in[n].
\end{equation}
If \eqref{eq:selectability} holds for all $y \in \mathcal{P}$, then the OCRS is said to be
$\gamma$-selectable for $\mathcal{F}$.

\end{definition}
We are now ready to formalize the aforementioned reduction:
\begin{theorem}
\label{thm:xos-ocrs-reduction}
Suppose that for every item $j\in[m]$ there is a $\gamma_j$-selectable OCRS for $\mathcal F_j$.  If $\gamma = \min_{j\in[m]}\gamma_j$, then there is an $\gamma$-competitive algorithm for the online combinatorial allocation problem with XOS valuations.

\begin{remark}
Suppose that for each item $j$, the given OCRS is polynomial time, and $\mathcal{P}_j$, the convex hull of $\mathcal{F}_j$, can be described by polynomially many linear inequalities. Then, the resulting online algorithm in \Cref{thm:xos-ocrs-reduction} is polynomial time (assuming access to demand oracles). 
\end{remark}

\begin{remark}
If the buyers $[n]$ are assumed to arrive in a uniformly at random order, then the use of an OCRS for each item $j \in [m]$ can be replaced
with a \textit{random-order contention resolution scheme (RCRS)} for $\mathcal{F}_j$. In general, RCRS's yield better selection guarantees than OCRS's, and so the competitive guarantee of \Cref{thm:xos-ocrs-reduction} will improve.
\end{remark}

\end{theorem}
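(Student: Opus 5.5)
Throughout, I use the standard OCRS-based reduction for XOS valuations, with the additive clauses supplying the linearity that subadditivity lacks.

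\emph{LP and online rule.} First I will solve the ex-ante configuration LP generalized to the constraints $\mathcal F_j$: replace the item constraint by the requirement that, for each $j$, the vector $y^{(j)}\in\mathbb R^n$ with $y^{(j)}_i:=\sum_{\tau}\sum_{S\ni j}x_{i,\tau,S}$ lies in $\mathcal P_j$. By the polyhedral description of $\mathcal P_j$ and demand oracles, this LP is solvable, giving a solution $\overline x$ whose value is at least $\E[\OPT]$. The argument is the same as \Cref{clm: LP-OPT}: the ex-ante allocation probabilities of the optimal allocation in hindsight form a feasible point.

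When buyer $i$ arrives with type $\tau$, I sample a tentative bundle $T_i$ with probability $\overline{x}_{i,\tau,A}/\typeprob_{i,\tau}$. Marginally over the type, $\Pr[j\in T_i]=y^{(j)}_i$, and the events $\{j\in T_i\}$ are independent across buyers. For each item $j$ I run its $\gamma_j$-selectable OCRS on input $y^{(j)}$, declaring buyer $i$ active for $j$ iff $j\in T_i$, processed in arrival order. The allocation is $\ALG_i=\{j\in T_i: \text{OCRS}_j\text{ accepts } i\}$. Feasibility holds because each OCRS only accepts feasible sets in $\mathcal F_j$.

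\emph{Analysis.} Condition on $v_i=v_{i,\tau}$ and $T_i=A$, and let $a^A$ be an additive clause for $A$ under $v_{i,\tau}$. By \eqref{eq:xos-supporting-clause},
\[
v_{i,\tau}(\ALG_i)\ge \sum_{j\in A}a^A_j\,\mathbf 1[j\in\ALG_i].
\]
Taking expectations, it then suffices to show $\Pr[j\in\ALG_i\mid v_i=v_{i,\tau},T_i=A]\ge\gamma_j$ for each $j\in A$. This is the main subtle step: the selectability guarantee conditions only on $Y_i=1$, not on the type of $i$ or on the rest of $T_i$. I will argue as follows.
\begin{itemize}
\item The OCRS for item $j$ sees only the indicators $Y_k=\mathbf 1[j\in T_k]$, which are independent $\Ber(y^{(j)}_k)$.
\item The other buyers' indicators are independent of $(v_i,T_i)$.
\item The OCRS's decision on $i$ depends only on $Y_1,\dots,Y_i$ and its internal randomness.
\end{itemize}
Hence, given $Y_i=1$, the decision is conditionally independent of $(v_i,T_i)$, so the conditional acceptance probability equals $\Pr[i\in A(y^{(j)})\mid Y_i=1]\ge\gamma_j\ge\gamma$. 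One care point: the OCRS must be run separately per item with independent randomness, and the active indicators must be defined exactly as above.

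\emph{Conclusion.} Combining, $\E[v_i(\ALG_i)\mid v_{i,\tau},A]\ge\gamma\sum_j a^A_j=\gamma\, v_{i,\tau}(A)$. Averaging over $\tau$ and $A$ and summing over $i$ gives welfare at least $\gamma\cdot\mathrm{LP}^*\ge\gamma\,\E[\OPT]$, as claimed.
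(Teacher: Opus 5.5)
Your proposal is correct and follows essentially the same route as the paper. Both arguments solve the configuration LP with the item-polytope constraints $y_j\in\mathcal P_j$, sample $T_i$ from the type-conditional LP marginals, and run one OCRS per item on the indicators $\mathbf{1}[j\in T_i]$; both then combine the XOS supporting clause at $A$ with the observation that, given $Y_{i,j}=1$, the OCRS's decision on $i$ is independent of $(v_i,T_i)$, which yields $\Pr[j\in\ALG_i\mid v_i=v_{i,\tau},T_i=A]\ge\gamma_j$.
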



\subsection{Proof of \Cref{thm:xos-ocrs-reduction}}
We use the same notation as in \Cref{sec:preliminaries}. That is, $\Gamma_i$ denotes the type set of buyer $i \in [m]$, $\typeprob_{i,\tau}$ denotes the probability that buyer $i$ has type $\tau \in \Gamma_i$, and $v_{i,\tau}$ denotes the valuation of buyer $i$, assuming it has type $\tau$.
To prove \Cref{thm:xos-ocrs-reduction}, we provide a guarantee against the standard configuration LP on the input with XOS valuations. The variables of the below LP have the same interpretation as \ref{lp:subadditive}.

\begin{align} \label{lp:xos}
\tag{LP-XOS}
\max \quad
& \sum_{i\in[n]} \sum_{\tau\in\Gamma_i} \sum_{\emptyset\ne S \subseteq[m]} x_{i,\tau,S} \cdot v_{i,\tau}(S) \notag \\
\text{s.t.} \quad
& \sum_{\emptyset\ne S\subseteq[m]} x_{i,\tau,S} \le \typeprob_{i,\tau}
    && \forall i\in[n],\ \tau\in\Gamma_i, \label{eq:xos-lp-type}\\
& y_{i,j} = \sum_{\tau\in\Gamma_i} \sum_{S \ni j} x_{i,\tau,S}
    && \forall i\in[n],\ j\in[m], \label{eq:xos-lp-marginal}\\
& y_j=(y_{i,j})_{i\in[n]} \in \mathcal P_j
    && \forall j\in[m], \label{eq:xos-lp-item-polytope}\\
& x_{i,\tau,S} \ge 0
    && \forall i\in[n],\ \tau\in\Gamma_i,\ S\subseteq[m]. \notag
\end{align}
We denote the optimal value of \ref{lp:xos} by $\mathrm{LP\text{-}XOS}^*$.
Even with the more complicated feasibility constraints,
the analogous version of \Cref{clm: LP-OPT} is easily seen to hold:
\begin{fact}
\label{lem:xos-lp-upper-bound}
$\mathrm{LP\text{-}XOS}^*$ upper bounds the expected offline optimum.
\end{fact}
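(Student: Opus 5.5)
The claim is that $\mathrm{LP\text{-}XOS}^*$ upper bounds the expected offline optimum. The plan is to exhibit a feasible solution of \ref{lp:xos} whose objective equals $\E[\OPT]$, built from the random optimal allocation $(\OPT_i)_{i\in[n]}$.

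\textbf{Step 1: constructing $x$.} Define
\[
x_{i,\tau,S} := \Pr\bigl[v_i = v_{i,\tau},\ \OPT_i = S\bigr] \qquad \text{for } S \neq \emptyset,
\]
where ties in the choice of optimal allocation are broken by any fixed deterministic rule, so that $\OPT$ is a measurable function of the valuation profile. Nonnegativity is immediate. The type constraint \eqref{eq:xos-lp-type} holds because
\[
\sum_{S\neq\emptyset} x_{i,\tau,S} = \Pr\bigl[v_i=v_{i,\tau},\ \OPT_i\neq\emptyset\bigr] \le \typeprob_{i,\tau}.
\]

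\textbf{Step 2: the item-polytope constraint.} Fix an item $j$. The induced marginal is
\[
y_{i,j} = \sum_{\tau}\sum_{S\ni j} x_{i,\tau,S} = \Pr[j\in \OPT_i].
\]
For every realization of the valuations, the allocation $\OPT$ is feasible. Hence the indicator vector $\bigl(\mathbf 1[j\in\OPT_i]\bigr)_{i\in[n]}$ is the characteristic vector of a set in $\mathcal F_j$, and so lies in $\mathcal P_j$. The vector $y_j$ is the expectation of this random indicator vector, i.e., a convex combination of points of $\mathcal P_j$. Since $\mathcal P_j$ is convex, $y_j\in\mathcal P_j$, which gives \eqref{eq:xos-lp-item-polytope}.

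\textbf{Step 3: the objective.} By construction,
\[
\sum_{i,\tau,S} x_{i,\tau,S}\,v_{i,\tau}(S) = \sum_i \E\bigl[v_i(\OPT_i)\bigr] = \E[\OPT].
\]
The only contribution we dropped is from $S=\emptyset$, which is harmless because $v(\emptyset)=0$. Therefore $\mathrm{LP\text{-}XOS}^* \ge \E[\OPT]$.

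The argument is routine: it never uses the XOS property. The only point that needs care is Step 2. There we must ensure that the convex hull $\mathcal P_j$ contains every feasible indicator vector, and that taking expectations preserves membership in it. Both follow directly from the definition of $\mathcal P_j$ as a convex hull, and the finite-support assumption makes the expectation a finite convex combination.
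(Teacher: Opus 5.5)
Your proof is correct and is the standard ex-ante construction the paper has in mind. The paper gives no separate proof and only remarks that the analogue of \Cref{clm: LP-OPT} is easily seen to hold; your argument supplies exactly that: set $x_{i,\tau,S}$ to the probability that buyer $i$ has type $\tau$ and $\OPT_i=S$, and use convexity of $\mathcal P_j$ for the item constraint.
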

We are now ready to prove the reduction:
\begin{proof}[Proof of \Cref{thm:xos-ocrs-reduction}]

Let $x$ be an optimal solution to \ref{lp:xos},
and denote $\OCRS_j$ as the $\gamma_j$-selectable OCRS for $\mathcal{P}_j$.
When buyer $i$ arrives and reveals type $\tau$, sample a tentative bundle $T_i$ by
\[
    \Pr[T_i=A \mid v_i=v_{i,\tau}] = \frac{x_{i,\tau,A}}{\typeprob_{i,\tau}}
    \qquad \text{for every nonempty } A\subseteq[m],
\]
We declare buyer $i$ active for $\OCRS_j$, provided $j\in T_i$. Denoting this by the indicator random variable $Y_{i,j}$, observe that
by the sampling rule,
\begin{equation} \label{eqn:active_probability}
    \Pr[Y_{i,j}=1]
    = \sum_{\tau\in\Gamma_i} \typeprob_{i,\tau}
        \sum_{A\ni j} \frac{x_{i,\tau,A}}{\typeprob_{i,\tau}}
    = \sum_{\tau\in\Gamma_i}\sum_{A\ni j} x_{i,\tau,A}
    = y_{i,j}.
\end{equation}
Notice that \eqref{eqn:active_probability} together with \eqref{eq:xos-lp-item-polytope} implies that  $y_j\in\mathcal P_j$.
Moreover, the random variables $(Y_{i,j})_{i\in[n]}$ are independent across buyers, because the buyer types and tentative-bundle samples are independent across buyers.  

Our online algorithm is now simple to describe: When buyer $i$ arrives,
it receives exactly those items $j\in T_i$ for which  $\OCRS_j$ accepts buyer $i$.  Given this rule, the subset of items
assigned to agent $i$, denoted $\mathrm{ALG}_i$, is as follows:
\[
    \mathrm{ALG}_i := \{j\in T_i :  \OCRS_j \text{ accepts buyer }i\}.
\]
Since each $\OCRS_j$ selects a set of buyers in $\mathcal F_j$, the resulting allocation, $\cup_{i \in [n]}\mathrm{ALG}_i$ is feasible. 

We now analyze the competitive ratio of the online algorithm.
Fix $i\in[n]$. For any item $j\in A$, the event that buyer $i$ is assigned $j$ depends only on $(Y_{i,j})_{i \in [n]}$, and the internal randomness of $\OCRS_j$.  It does not depend on the exact type of buyer $i$, nor the exact instantiation of $T_i$.  Therefore, 
for any $\tau\in\Gamma_i$, and nonempty bundle $A\subseteq[m]$ with $x_{i,\tau,A}>0$, we may conclude that
\begin{equation}
\label{eq:xos-item-retention}
    \Pr[j\in \mathrm{ALG}_i \mid v_i=v_{i,\tau},\ T_i=A]
    = \Pr[\OCRS_j \text{ accepts buyer }i \mid Y_{i,j}=1]
    \ge \gamma_j,
\end{equation}
where the final line follows by \eqref{eq:selectability}, as $\OCRS_j$ was assumed
to be $\gamma_j$-selectable.

Since $\mathrm{ALG}_i\subseteq T_i=A$, the supporting clause \eqref{eq:xos-supporting-clause} gives
\[
    v_{i,\tau}(\mathrm{ALG}_i)
    \ge \sum_{j\in \mathrm{ALG}_i} a^{A}_{i,\tau,j}.
\]
Conditioning on $v_i=v_{i,\tau}$ and $T_i=A$, then using \eqref{eq:xos-item-retention}, we obtain
\[
    \mathbb E\bigl[v_{i,\tau}(\mathrm{ALG}_i) \mid v_i=v_{i,\tau},\ T_i=A\bigr]
    \ge \sum_{j\in A} a^{A}_{i,\tau,j}
        \Pr[j\in\mathrm{ALG}_i \mid v_i=v_{i,\tau},\ T_i=A]
    \ge \sum_{j\in A} \gamma_j a^{A}_{i,\tau,j}.
\]
Taking expectation over the type and tentative bundle of buyer $i$ gives
\[
\begin{aligned}
    \mathbb E[v_i(\mathrm{ALG}_i)]
    &\ge \sum_{\tau\in\Gamma_i} \typeprob_{i,\tau}
        \sum_{\emptyset\ne A\subseteq[m]}
        \frac{x_{i,\tau,A}}{\typeprob_{i,\tau}}
        \sum_{j\in A} \gamma_j a^{A}_{i,\tau,j} \\
    &= \sum_{\tau\in\Gamma_i}
        \sum_{\emptyset\ne A\subseteq[m]}
        x_{i,\tau,A}
        \sum_{j\in A} \gamma_j a^{A}_{i,\tau,j} .
\end{aligned}
\]
Summing over buyers and rearranging by items yields
\[
    \mathbb E\left[\sum_{i\in[n]} v_i(\mathrm{ALG}_i)\right]
    \ge
    \sum_{j\in[m]} \gamma_j
    \sum_{i\in[n]}\sum_{\tau\in\Gamma_i}\sum_{A\ni j}
        x_{i,\tau,A} a^{A}_{i,\tau,j}
    = \sum_{j\in[m]} \gamma_j W_j,
\]
where $W_j := \sum_{\tau\in\Gamma_i}\sum_{A\ni j}
        x_{i,\tau,A} a^{A}_{i,\tau,j}$.
Finally, by the supporting-clause identity at the tentative bundle,
\[
    \sum_{j\in[m]} W_j
    = \sum_{i\in[n]}\sum_{\tau\in\Gamma_i}\sum_{\emptyset\ne A\subseteq[m]}
        x_{i,\tau,A}\sum_{j\in A} a^{A}_{i,\tau,j}
    = \mathrm{LP\text{-}XOS}^*,
\]
where the last line follows since $x$ was assumed to be an optimal solution
to \ref{lp:xos}.
For $\gamma=\min_j\gamma_j$, the claimed bound on the online algorithm's performance follows, due to \Cref{lem:xos-lp-upper-bound}.
\end{proof}


%% file: Feige-appendix.tex
\section{Connection to Feige's algorithm}\label{sec:feige}

This section explains the relationship between the recursive BSG algorithm in \Cref{sec:recHeightBSG} and Feige's guiding graph algorithm~\cite{feige2006maximizing}. 
Our algorithm can be seen as a substantially simplified presentation of Feige's algorithm.
In this section, we informally describe how one can derive the recursive process by re-interpreting Feige's algorithm.
First, we will sketch Feige's algorithm and its analysis. After the sketch, we explain the conversion to our algorithm.

\textbf{Feige's Algorithm.}
Feige's guiding graph algorithm can be described briefly via the following steps. For a complete description, we refer the reader to the original text~\cite{feige2006maximizing}.
\begin{enumerate}
	\item \emph{Construct the guiding graph.} Fix any $n$-regular bipartite graph $G$ with large girth $g$. Edge-colour its edges into perfect matchings $M_1,\dots,M_n$ so that agent $i$ `controls' the edge set $M_i$.
	\item \emph{Assign edge labels.} For each $i$ and each edge $(u,v) \in M_i$, draw a label $S \subseteq [m]$ independently with $\Pr[label=S] = x_{i,S}$. Each edge $(u,v)$ can be thought of as a tentative set for agent $i$, distributed as per $(x_{i,S})_{S \subseteq [m]}$.
	\item \emph{Construct the item subgraphs.} For each item $j$, form the subgraph $G_j$ consisting only of the edges whose label $S$ contains $j$. Since the girth of $G$ is very large, one can show that with high probability, each subgraph $G_j$ is a forest.
	\item \emph{Orientation per item.} In every $G_j$, orient the edge of $G_j$ so that every vertex $u$ of $G_j$ has at most one incoming arc. This can always be done since $G_j$ is a forest. Note that the orientation of arcs may be different for each item.
	\item \emph{Tentative Sets.} Pick a vertex $u \in V(G)$ uniformly at random. For each agent $i$, let $u_i$ be the unique neighbour of $u$ in $M_i$. Assign to agent $i$ the label of the edge $(u,u_i)$ as his tentative set.
	\item \emph{Conflict Resolution.} An item $j$ could have conflicts. This means that $j$ is contained in both sets $S_i,S_j$, which are the labels of $(u,u_i),(u,u_j)$ respectively. In this scenario, we assign $j$ to the agent $i$ corresponding to the unique incoming arc $(u,v)$ for $u$. That is, agent $i$ is chosen so that the unique incoming arc $(u,v)$ is in $M_i$. If there is no such arc, we leave $j$ unallocated.  
\end{enumerate}

\textbf{Analysis.} We sketch the analysis of this algorithm below.

\begin{enumerate}
	\item \emph{Linearity at the random vertex.} For any fixed $u$, we have $\mathbb{E}\left[ \sum_{ i \in [n]} v_i(S(u,u_i)) \right]$ is equal to the objective value of the configuration LP solution $x$.
	\item \emph{Subadditivity gives $\frac{1}{2}$-factor per edge.} Fix a player $i$ and $S$ with $x_{i,S}>0$ and condition on the event that $S(u,u_i)=S$ and that every $G_j$ for $j \in S$, the connected component containing $u$ is a tree. Let $S_u$ (resp. $S_v$) denote the set of items assigned to player $i$ with the random vertex is chosen to be $u$ (resp. its other endpoint $v$ of $(u,u_i)$). By construction, $S \subseteq S_u \cup S_v$. Thus, $\mathbb{E}[v_i(S_i)] \geq \frac{1}{2}v_i(S)$, where $S_i$ is the final allocated set to agent $i$.
	\item \emph{Combining everything.} One can show that w.h.p. it holds that the connected component in $G_j$ containing $u$ is a tree for every $j \in [m]$. Hence, the final bound is $\mathbb{E}\left[ \sum v_i(S_i) \right] \geq(1-\epsilon) \frac{1}{2}LP$.
\end{enumerate}

To convert Feige's algorithm to our recursive bundle score generator algorithm, we make a couple of observations:
\begin{enumerate}
	\item First, certain explicit constructions of high girth regular graphs are also vertex transitive, such as the one in \cite{Lubotzky2017RamanujanG}. As a result, in Step 5, one does not need to sample a random vertex $u$ to define the allocation. One can simply fix a vertex $u$ of the graph beforehand deterministically. Due to the vertex transitivity of the graph, the local structure of the graph around a fixed vertex is identical to that of a randomly chosen vertex, and as such the resulting distribution of allocations (which uses only local information) will be identical whether we choose $u$ randomly or deterministically.
	\item The use of high-girth graphs suggests that really what we want is a $n$-regular graph that is acyclic. 
	The only such graph is the infinite $n$-regular tree.
	Since it is infinite, we cannot construct this graph explicitly. Instead, the tree is implicitly constructed as the recursion tree of our recursive algorithm. Indeed, every recursive call in our algorithm produces $n$ recursive subcalls. This corresponds to each node of its recursion tree having $n$ children, forming an $n$-ary tree.
	Furthermore, at every recursive call, the bundle score generator samples one interest set from the LP distribution of each player. This corresponds to the edges incident on each node having labels sampled from each player's distribution. 
	That is, sampling the labels of the edges incident to a node in the guiding graph is akin sampling these interest sets.
	Thus, the recursion tree of $\Height$ can be viewed as the tree generated by a depth-first-search on the guiding graph.
	\item One particular choice of orientations in Step 4 is to orient every edge away from the diametric center of the tree. A vertex $w$ is said to be the diametric center if it minimizes the maximum distance to any other vertex in the tree. For every edge $uv$ of the tree $G_j$, we orient $uv$ away from the component of $G_j-\{uv\}$ that contains the diametric center $w$. 
	At the root $u$, this would mean that the unique incoming arc to $u$ comes from the vertex $u_i$ for which the subtree of $G_j$ rooted at the vertex $u_i$ has the $u_i$-to-leaf path of largest length. In other words, an item $j$ is allocated to the player $i$ for which the subtree of $G_j$ rooted at the vertex $u_i$ has the largest height. Correspondingly, our algorithm computes this height recursively via the $1+\max y_{ij}$ recurrence (hence giving {\Height} its name). 
\end{enumerate}


%% file: Odd-n-Appendix.tex
\section{Improved Value Retention for IID Arrivals for Odd $n$}
\label{sec: iid-odd-n}

Here, we give the complete proof of the adaptation of \Cref{lem:one-fifth-val} for the case of odd $n$ -- the improved ratio of $60/11$ then follows immediately as before. 

\begin{lemma}
\label{lem:one-fifth-val-odd}
    Suppose that $n=2h+1$ is odd. For \Cref{alg:online-algorithm} in the
    i.i.d. arrival model, fix any buyer $i\le h+1$. Then, for every type
    $\type\in\typeset$ and every tentative bundle $A\subseteq[m]$,
    \[
        \E\!\left[
            v_i(\ALG_i)
            \,\middle|\,
            v_i=v_{\type},\ T_i=A
        \right]
        \ge
        \frac{1}{5}v_{\type}(A).
    \]
    The expectation is over all remaining randomness of the algorithm, including
    the types and scores of earlier buyers and the threshold scores.
\end{lemma}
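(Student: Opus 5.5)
The plan is to reuse the template of \Cref{lem:one-fifth-val}: choose a completion of the past that is dominated by $b^{<i}$, write the three thresholds as coordinatewise maxima of independent ``pieces'', and finish with an ordering cover plus symmetry and subadditivity. The difficulty is that for odd $n$ the pieces cannot all be i.i.d., so the cover has to be adapted; I have not verified that the adapted cover yields exactly $1/5$.

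\textbf{Step 1: completion of the past.} Condition on $v_i=v_\type$ and $T_i=A$, so that $\ALG_i=\{j\in A: b_{i,j}>p'_j\ge b^{<i}_j\}$. Since $i\le h+1$, at most $h$ buyers precede $i$. For each $k\in\{i,\dots,h\}$ (an empty range when $i=h+1$), draw an independent hallucinated copy $T^{\mathrm{mid}}_k\sim\mu$ with score $b^{\mathrm{mid}}_k=\Height(T^{\mathrm{mid}}_k)$. Let $p^{\mathrm{mid}}$ be the coordinatewise maximum of $b^{<i}$ and these hallucinated scores. Then $p^{\mathrm{mid}}\ge b^{<i}$ coordinatewise, and $p^{\mathrm{mid}}|_A$ is distributed as the maximum of exactly $h$ i.i.d.\ buyer-score vectors. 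By monotonicity of $v_\type$, it suffices to lower bound $\E[v_\type(\{j\in A: b_{i,j}>p'_j\ge p^{\mathrm{mid}}_j\})]$.

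\textbf{Step 2: decomposition into pieces.} Let $\mathcal F_h$ be the law of the maximum of $h$ i.i.d.\ buyer-score vectors restricted to $A$, and let $\mathcal F_1$ be the law of a single buyer-score vector restricted to $A$. Draw $Q_1,\dots,Q_5\sim\mathcal F_h$ and $U_1,U_2\sim\mathcal F_1$, all independent. The full threshold of $n=2h+1$ buyers is distributed as $\max\{Q_1,Q_2,U_1\}$, so \Cref{thm:bsg-guarantees} gives a coupling with $b_i>\max\{Q_1,Q_2,U_1\}$ on all of $A$. The base threshold $p'|_A$ is distributed as $\max\{Q_3,Q_4,U_2\}$, and $p^{\mathrm{mid}}|_A$ is distributed as $Q_5$. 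Exactly as in \Cref{lem:one-fifth-val}, the retained value is then at least $\E[v_\type(\widetilde B)]$, where
\[
\widetilde B=\{j\in A:\max\{Q_{1j},Q_{2j},U_{1j}\}\ge\max\{Q_{3j},Q_{4j},U_{2j}\}\ge Q_{5j}\}.
\]

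\textbf{Step 3: the adapted cover (main obstacle).} I would take the $5$-cycle $1-5-2-3-4-1$ on the $Q$'s and, for each $l\in[5]$ and $u\in\{1,2\}$, define $R_{l,u}$ by placing $Q_{N(l)}$ together with $U_u$ in the top block, the remaining two $Q$'s together with $U_{3-u}$ in the middle block, and $Q_l$ in the bottom block. All ten sets are identically distributed to $\widetilde B=R_{5,1}$, by rotating the cycle and swapping $U_1$ and $U_2$.

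Fix a coordinate $j$. If its maximum among the seven values is attained by some $Q_a$, the neighbour argument from \Cref{lem:one-fifth-val} goes through for either value of $u$, because adding $U$'s only enlarges the middle block. If the maximum is attained by $U_u$, then $R_{l,u}$ contains $j$ whenever $Q_l$ is the smallest, or second smallest, of the five $Q$-values.

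This shows that every coordinate lies in at least two of the ten sets. However, subadditivity only converts a genuine cover into a value bound; it does not handle fractional covers. So the real task is to select five of these sets, or an equivalent structure, forming a cover for every realization. If no such selection exists, I would split the ten sets into two families, each of which is an honest cover of $A$, which would give $2v_\type(A)\le 10\,\E[v_\type(\widetilde B)]$. I expect this step to be the crux. My fallback is to absorb $U_1$ and $U_2$ by a monotone coupling: lowering the top block by dropping $U_1$ and raising the other blocks leaves the bound valid, and the goal would be to make the top and middle blocks composed symmetrically before applying the $5$-cycle argument verbatim.
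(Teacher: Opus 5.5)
\textbf{Gap: the five-set cover in Step 3 is never produced.} Your Steps 1 and 2 coincide with the paper's proof. You complete the past to exactly $h$ buyers, decompose into five i.i.d.\ $\mathcal F_h$-pieces and two i.i.d.\ single-buyer pieces, and reduce to bounding $\E[v_\type(\widetilde B)]$ with $\widetilde B=\{\max(Q_1,Q_2,U_1)\ge\max(Q_3,Q_4,U_2)\ge Q_5\}$. However, the entire content of the odd case is Step 3, and there you only show that your ten sets $R_{l,u}$ form a $2$-fold cover. As you correctly note, subadditivity cannot use a $2$-fold cover. Your proposed split into two honest covers is not a real fallback, since it already presupposes a covering selection exists.

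Your monotone-coupling fallback also fails as described. The event $\{\text{top}\ge\text{middle}\ge\text{bottom}\}$ shrinks when the top block is lowered or the bottom block is raised. It is not monotone in the middle block, though: raising it helps the second inequality and hurts the first. So you cannot ``raise the other blocks'' to symmetrize away $U_2$.

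\textbf{The missing step.} You need to choose $u(l)\in\{1,2\}$ so that each class $\{l:u(l)=u\}$ contains two vertices that are nonadjacent in the $5$-cycle, with $u(5)=1$ so that $\widetilde B=R_{5,1}$ is included. The paper takes $u(1)=u(2)=2$ and $u(3)=u(4)=u(5)=1$. In the cycle $1-5-2-3-4-1$, the pairs $\{1,2\}$ and $\{3,5\}$ are nonadjacent.

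The five sets $R_{l,u(l)}$ then cover $A$. If some $Q_a$ is the maximum at $j$, your neighbour argument applies unchanged. If $U_u$ is the maximum, pick nonadjacent $l,l'$ with $u(l)=u(l')=u$. Both top-block inequalities hold, $Q_{l'}$ lies in the middle block of $R_{l,u}$, and $Q_l$ lies in the middle block of $R_{l',u}$. So $j$ missing both sets would force $Q_{lj}>Q_{l'j}>Q_{lj}$, which is impossible.

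Each $R_{l,u(l)}$ uses every $Q$ once and every $U$ once in the same block pattern. Hence all five are distributed as $\widetilde B$, and subadditivity gives $5\,\E[v_\type(\widetilde B)]\ge v_\type(A)$. Only the two-element class needs care here, because $C_5$ is triangle-free and so any three vertices contain a nonadjacent pair. Swapping $U_1\leftrightarrow U_2$ gives the complementary five sets, which form a second cover; this is exactly the two-family split you were hoping for.
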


\begin{proof}
    We use the notation from the proof of \Cref{lem:one-sixth-val}. Condition on
    $v_i=v_{\type}$ and $T_i=A$. Then
    \[
        \ALG_i
        =
        \{j\in A: b_{i,j}>p_j'\ge b^{<i}_j\}.
    \]

    We complete the actual past to $h$ buyer-score vectors. For each
    $k\in\{i,\dots,h\}$, independently draw
    $T_k^{\mathrm{mid}}\sim\mu$, set
    $b_k^{\mathrm{mid}}=\Height(T_k^{\mathrm{mid}})$, and define
    \[
        p_j^{\mathrm{mid}}
        :=
        \max\Bigl(
            \{b^{<i}_j\}
            \cup
            \{b_{k,j}^{\mathrm{mid}}:
                i\le k\le h,\ j\in T_k^{\mathrm{mid}}\}
        \Bigr)
        \qquad \forall j\in[m].
    \]
    When $i=h+1$, the actual past
    consists of exactly $h$ buyers. Since
    $p_j^{\mathrm{mid}}\ge b^{<i}_j$ for every $j$, the set
    \[
        B^{\mathrm{mid}}
        :=
        \{j\in A: b_{i,j}>p_j'\ge p_j^{\mathrm{mid}}\}
    \]
    satisfies
    \[
        B^{\mathrm{mid}}
        \subseteq
        \{j\in A: b_{i,j}>p_j'\ge b^{<i}_j\}
        =
        \ALG_i.
    \]

    Let $\mathcal{D}$ be the distribution, restricted to $A$, of one
    buyer-score vector generated from a tentative bundle sampled according to
    $\mu$. Thus, to draw from $\mathcal{D}$, we draw $T\sim\mu$, generate
    $\Height(T)$, and give every coordinate in $A\setminus T$ score zero.
    Let $\mathcal{F}_h$ be the distribution of the coordinatewise maximum of
    $h$ independent draws from $\mathcal{D}$. When $h=0$,
    $\mathcal{F}_h$ is the point mass at the all-zero vector.

    Draw $W_1,W_2$ independently from $\mathcal{F}_h$, and independently draw
    $a_1$ from $\mathcal{D}$. Since $\max\{W_1,W_2,a_1\}$
    is the coordinatewise maximum of $2h+1=n$ independent buyer-score vectors,
    it is distributed as a full threshold vector restricted to $A$.
    Therefore, \Cref{thm:bsg-guarantees} gives
    \[
        b_i=\Height(A)
        \succ
        \max\{W_1,W_2,a_1\}.
    \]
    Hence there is a coupling $\omega$ of $b_i$ and
    $\max\{W_1,W_2,a_1\}$ such that
    \[
        b_{i,j}>
        \max\{W_{1j},W_{2j},a_{1j}\}
        \qquad\text{for every }j\in A
    \]
    simultaneously.

    Next draw $W_3,W_4,W_5$ independently from $\mathcal{F}_h$, and draw
    $a_2$ independently from $\mathcal{D}$. These vectors are also independent
    of the coupled pair. Since the buyers are i.i.d.,
    \[
        p'|_A
        \ \overset{d}{=}\
        \max\{W_3,W_4,a_2\},
    \]
    while, by the construction of $p^{\mathrm{mid}}$,
    \[
        p^{\mathrm{mid}}|_A
        \ \overset{d}{=}\
        W_5.
    \]
    Moreover, conditional on $v_i=v_{\type}$ and $T_i=A$, the vectors
    $b_i$, $p'$, and $p^{\mathrm{mid}}$ are mutually independent. Define
    \[
        B_W
        :=
        \left\{
            j\in A:
            b_{i,j}>
            \max\{W_{3j},W_{4j},a_{2j}\}
            \ge
            W_{5j}
        \right\}.
    \]
    Using the monotonicity of $v_{\type}$ and
    $\ALG_i\supseteq B^{\mathrm{mid}}$, we have
    \begin{align}
    \label{eq:odd-exp-eq}
        \E\!\left[
            v_i(\ALG_i)
            \,\middle|\,
            v_i=v_{\type},\ T_i=A
        \right]
        &\ge
        \E[v_{\type}(B^{\mathrm{mid}})]
        =
        \E[v_{\type}(B_W)],
    \end{align}
    where the rightmost expectation is taken under the coupling $\omega$ and
    over the independent draws $W_3,W_4,W_5,a_2$.

    Define
    \[
        \widetilde B
        :=
        \left\{
            j\in A:
            \max\{W_{1j},W_{2j},a_{1j}\}
            \ge
            \max\{W_{3j},W_{4j},a_{2j}\}
            \ge
            W_{5j}
        \right\}.
    \]
    Under the coupling $\omega$, if $j\in\widetilde B$, then
    \[
        b_{i,j}>
        \max\{W_{1j},W_{2j},a_{1j}\}
        \ge
        \max\{W_{3j},W_{4j},a_{2j}\}
        \ge
        W_{5j}.
    \]
    Hence $\widetilde B\subseteq B_W$. Using the monotonicity of
    $v_{\type}$ and Eq.~\eqref{eq:odd-exp-eq}, we obtain
    \begin{align}
    \label{eq:odd-b-tilde-lb}
        \E\!\left[
            v_i(\ALG_i)
            \,\middle|\,
            v_i=v_{\type},\ T_i=A
        \right]
        \ge
        \E[v_{\type}(B_W)]
        \ge
        \E[v_{\type}(\widetilde B)],
    \end{align}
    where the rightmost expectation is taken over independent draws
    $W_1,\dots,W_5,a_1,a_2$.

    It remains to lower bound this final expectation. Label the vertices of a
    $5$-cycle in the order $1-5-2-3-4-1$. For $l\in[5]$, let $N(l)$ denote
    the two neighbors of $l$ in this cycle. Define
    \[
        (c_l,d_l)
        :=
        \begin{cases}
            (a_2,a_1), & l\in\{1,2\},\\
            (a_1,a_2), & l\in\{3,4,5\}.
        \end{cases}
    \]
    For each $l\in[5]$, define
    \[
        R_l
        :=
        \left\{
            j\in A:
            \max\left(
                \{W_{rj}:r\in N(l)\}
                \cup
                \{c_{l,j}\}
            \right)
            \ge
            \max\left(
                \{W_{rj}:r\notin N(l),\ r\neq l\}
                \cup
                \{d_{l,j}\}
            \right)
            \ge
            W_{lj}
        \right\}.
    \]
    With the chosen cyclic order and the definition of $(c_5,d_5)$, the set
    $\widetilde B$ is exactly $R_5$.

    We claim that the sets $R_1,\dots,R_5$ cover $A$ for every realization of
    $W_1,\dots,W_5,a_1,a_2$. Fix $j\in A$, and consider a largest value among
    \[
        W_{1j},\dots,W_{5j},a_{1j},a_{2j}.
    \]

    First, suppose that $W_{rj}$ is a largest value, and let $u,w$ be the two
    neighbors of $r$ in the cycle. Since
    $r\in N(u)\cap N(w)$, the first inequality in the definitions of both
    $R_u$ and $R_w$ holds. Moreover, $u$ and $w$ are nonadjacent. If
    $j\notin R_u$, then the second inequality for $R_u$ fails, so
    $W_{uj}>W_{wj}$. If $j\notin R_w$, then similarly
    $W_{wj}>W_{uj}$. These two inequalities cannot both hold. Hence
    \[
        j\in R_u\cup R_w.
    \]

    Next, suppose that $a_{1j}$ is a largest value. Since
    $c_3=c_5=a_1$, the first inequality holds for both $R_3$ and $R_5$.
    The vertices $3$ and $5$ are nonadjacent. If $j\notin R_3$, then the
    second inequality for $R_3$ fails, so $W_{3j}>W_{5j}$. If
    $j\notin R_5$, then similarly $W_{5j}>W_{3j}$. Hence
    \[
        j\in R_3\cup R_5.
    \]

    Finally, suppose that $a_{2j}$ is a largest value. Since
    $c_1=c_2=a_2$, the first inequality holds for both $R_1$ and $R_2$.
    The vertices $1$ and $2$ are nonadjacent. If $j\notin R_1$, then
    $W_{1j}>W_{2j}$, while if $j\notin R_2$, then
    $W_{2j}>W_{1j}$. Hence
    \[
        j\in R_1\cup R_2.
    \]
    Ties cause no difficulty because all the defining inequalities are weak.
    This proves the cover claim.

    By subadditivity,
    \[
        \sum_{l=1}^5 v_{\type}(R_l)\ge v_{\type}(A).
    \]
    Since $W_1,\dots,W_5$ are i.i.d., and $a_1,a_2$ are i.i.d. and
    independent of the $W$ vectors, the random sets
    $R_1,\dots,R_5$ are identically distributed. Indeed, the $W$ vectors may
    be permuted and $a_1,a_2$ may be interchanged without changing their joint
    distribution. The sets need not be independent. Therefore, since
    $\widetilde B=R_5$, by \eqref{eq:odd-b-tilde-lb},
    \[
        \E\!\left[
            v_i(\ALG_i)
            \,\middle|\,
            v_i=v_{\type},\ T_i=A
        \right]
        \ge
        \E[v_{\type}(R_5)]
        =
        \frac{1}{5}
        \E\!\left[\sum_{l=1}^5v_{\type}(R_l)\right]
        \ge
        \frac{1}{5}v_{\type}(A).
    \]
\end{proof}